\numberwithin{equation}{section}
\definecolor{labelkey}{gray}{.65}
\title[The Fermionic Entanglement Entropy of a Schwarzschild Black Hole]{The Fermionic Entanglement Entropy of the Vacuum State of a Schwarzschild Black Hole Horizon}
\author[F.\ Finster, M.\ Lottner]{Felix Finster and Magdalena Lottner \\ \\ February 2023 / March 2024}
\address{Fakult\"at f\"ur Mathematik \\ Universit\"at Regensburg \\ D-93040 Regensburg \\ Germany}
\email{finster@ur.de, magdalena.lottner@ur.de}
\newtheorem{Def}{Definition}[section]
\newtheorem{Thm}[Def]{Theorem}
\newtheorem{Proposition}[Def]{Proposition}
\newtheorem{Lemma}[Def]{Lemma}
\newtheorem{Remark}[Def]{Remark}
\newtheorem{Corollary}[Def]{Corollary}
\newtheorem{Example}[Def]{Example}
\newtheorem{Notation}[Def]{Notation}
\newtheorem{cond}[Def]{Condition}
\newcommand{\Thanks}{\vspace*{.5em} \noindent \thanks}
\newcommand{\beq}{\begin{equation}}
	\newcommand{\eeq}{\end{equation}}
\newcommand{\Proof}{\begin{proof}}
	\newcommand{\QED}{\end{proof} \noindent}
\newcommand{\QEDrem}{\ \hfill $\Diamond$}
\newcommand{\la}{\langle}
\newcommand{\ra}{\rangle}
\newcommand{\Sl}{\mbox{$\prec \!\!$ \nolinebreak}}
\newcommand{\Sr}{\mbox{\nolinebreak $\succ$}}
\newcommand{\C}{\mathbb{C}}
\newcommand{\R}{\mathbb{R}}
\newcommand{\1}{\mbox{\rm 1 \hspace{-1.05 em} 1}}
\newcommand{\Z}{\mathbb{Z}}
\newcommand{\N}{\mathbb{N}}
\DeclareMathOperator{\tr}{tr}
\renewcommand{\L}{{\mathcal{L}}}
\newcommand{\Dir}{{\mathcal{D}}}
\renewcommand{\H}{\mathscr{H}}
\newcommand{\Lin}{\text{\rm{L}}}
\newcommand{\D}{\mathscr{D}}
\DeclareMathOperator{\supp}{supp}
\newcommand{\scrM}{\mycal M}
\newcommand{\scrN}{\mycal N}
\newcommand{\bitem}{\begin{itemize}[leftmargin=2.5em]}
\newcommand{\eitem}{\end{itemize}}
\newcommand{\loc}{\text{\rm{loc}}}
\newcommand{\Opa}{\mathrm{Op}_\alpha}
\newcommand{\fp}{f_{0,1}^+}
\newcommand{\fm}{f_{0,1}^-}
\newcommand{\Four}{\mathcal{F}}
\newcommand{\SN}{\mathfrak{S}}
\newcommand{\id}{\mathds{1}}
\newcommand{\bl}{{\,\vrule depth3pt height9pt}{\vrule depth3pt height9pt}
	{\vrule depth3pt height9pt}{\vrule depth3pt height9pt}\,}
\newcommand{\Afull}{\CA_{\mathrm{full}}}
\newcommand{\afull}{\mathfrak{a}_{\mathrm{full}}}
\newcommand{\Da}{ \Delta \mathfrak{a}}
\newcommand{\afrak}{\mathfrak{a}}
\newcommand{\opa}{\mathrm{op}_\alpha}
\newcommand{\Forany}{\qquad \text{for any }}
\newcommand{\CA}{\mathcal A}
\newcommand {\bxi}{\boldsymbol\xi}
\newcommand {\bmu}{\boldsymbol\mu}
\newcommand {\bx}{\boldsymbol x}
\newcommand {\by}{\boldsymbol y}
\newcommand {\bz}{\boldsymbol z}
\newcommand {\bc}{\boldsymbol c}
\newcommand{\CB}{\mathcal B}
\newcommand{\Op}{\mathrm{Op}}
\newcommand{\Fock}{{\mathscr{F}}}
\DeclareFontFamily{OT1}{rsfso}{}
\DeclareFontShape{OT1}{rsfso}{m}{n}{ <-7> rsfso5 <7-10> rsfso7 <10-> rsfso10}{}
\DeclareMathAlphabet{\mycal}{OT1}{rsfso}{m}{n}
\newcommand\mpar[1]{}
\newcommand\Magdalena[1]{}
\newcommand\Felix[1]{}
\begin{document}

\begin{abstract}
We define and analyze the fermionic entanglement
entropy of a Schwarz\-schild black hole horizon for the regularized vacuum state of an observer at infinity.
Using separation of variables and an integral representation of the Dirac propagator,
the entanglement entropy is computed to be a prefactor times the number of occupied angular momentum modes
on the event horizon.
\end{abstract}

\maketitle

\vspace*{-0.6cm}

\tableofcontents

\section{Introduction}
Black hole thermodynamics is an exciting topic of current research in both physics and mathematics.
It was initiated by the discovery of Bekenstein and Hawking that black holes behave thermally if one interprets
surface gravity as temperature and the area of the event horizon as entropy~\cite{Bekenstein:1975tw, Hawking:1976de}.
The analogy to the second law of thermodynamics suggests that the area of the black hole horizon
should only increase in time. However, this is in contradiction with the discovery of
Hawking radiation and the resulting ``evaporation'' of a black
hole~\cite{Hawking:1974rv, hawking}. This so-called {\em{information paradox}}~\cite{Hawking:1976ra}
inspired the holographic principle \cite{Susskind:1993ki, Hooft:1999bw} and the current program of attempting to understand the structure of spacetime via information theory, entropies and gauge/gravity dualities.

The present work is concerned with the entropy of a black hole.
Generally speaking, {\em{entropy}} is a measure for the disorder of a physical system.
There are various notions of entropy, like the entropy in classical statistical mechanics
as introduced by Boltzmann and Gibbs, the Shannon and R{\'e}nyi entropies in information theory
or the von Neumann entropy for quantum systems.
Here we focus on the {\em{entanglement entropy}}, 
which quantifies the quantum entanglement of a spatial region with its surrounding
(for the general physical and mathematical context see for example~\cite{Rangamani:2016dms},
\cite{hollands-sanders}). The entanglement entropy of the {\em{event horizon}}
tells us about the quantum entanglement between the interior and exterior regions of the black hole.
For technical simplicity, we here restrict attention to the simplest mathematical model of a black hole:
a {\em{Schwarzschild black hole}} of mass~$M$ (more general black holes will be discussed
in the outlook section after~\eqref{endsummary}).
We consider the {\em{R{\'e}nyi entropy}} functional for a general R{\'e}nyi parameter~$\kappa>0$.
The case~$\kappa=1$ gives the {\em{von Neumann entropy}} functional.
We compute the corresponding entanglement entropies for the {\em{quasi-free fermionic state}}
describing the {\em{vacuum}} of an observer at infinity.
More precisely, we consider the quasi-free fermionic Hadamard state which is obtained by frequency splitting for the observer in a rest frame in Schwarzschild coordinates, with an {\em{ultraviolet regularization}} on a length
scale~$\varepsilon$.
In a more physical language, we consider a free Fermi gas formed of non-interacting one-particle Dirac states.
Based on formulas derived in~\cite{helling-leschke-spitzer, leschke-sobolev-spitzer}
(for more details see the preliminaries in Section~\ref{secentquasi}), the entanglement entropy
can be expressed in terms of the reduced one-particle density operator.
We choose this one-particle density operator as the regularized projection operator to
all negative-frequency solutions of the Dirac equation in the exterior Schwarzschild geometry
(where ``frequency'' refers to the Schwarzschild time of an observer at rest).
Making use of the integral representation of the Dirac propagator in~\cite{tkerr}
and employing techniques developed in~\cite{leschke-sobolev-spitzer, widom1,
sobolev-schatten, sobolev-functions, sobolev-pseudo},
it becomes possible to compute the entanglement entropy on the black hole horizon explicitly.
We find that, up to a prefactor which depends on~$\varepsilon M$,
this entanglement entropy is given by the number of occupied angular momentum modes,
making it possible to reduce the computation of the entanglement entropy to counting the number of occupied
one-particle states.
A similar result is obtained for the {\em{R{\'e}nyi entropies}} with R{\'e}nyi index~$\kappa > \frac{2}{3}$.

We now outline our setting and the main result. The quasi-free regularized Dirac vacuum state
can be described completely by the corresponding reduced one-particle density operator
(for details see Section~\ref{secentquasi}). We choose this operator 
as the regularized projection operator to the negative frequency solutions of the Dirac equation by~$\Pi_-^\varepsilon$ (for details see Sections~\ref{secregproj} and~\ref{secregvac}).
Given a parameter~$\kappa>0$ (the {\em{R{\'e}nyi index}}), we introduce the {\em{R{\'e}nyi entropy
function}}~$\eta_\kappa$ as follows. If~$t\notin (0,1)$ then we set~$\eta_\kappa(t) = 0$.
For~$t\in (0, 1)$ we define
\begin{align}
	\eta_\kappa(t)= & \ 
			\displaystyle \frac{1}{1-\kappa}\log \big( t^\kappa + (1-t)^\kappa \big)
			&& \text{for } \kappa\neq 1 \label{Def eta_gamma} \\[0.2cm]
\eta_1(t):=\lim\limits_{\kappa' \rightarrow 1} \eta_{\kappa'}(t)
	= &\ -t\log t - (1-t)\log (1-t) && \text{for }  \kappa = 1 \label{Def eta}
\end{align}
(the last limit can be computed directly with l'Hospital's rule).
Note that the function~$\eta_\kappa$ is continuous and smooth except at~$t=0$ and~$t=1$,
as shown in Figure~\ref{fig: Plot eta} for various values of~$\kappa$.
\begin{figure}[t]
\centering
\includegraphics[width=0.55\textwidth]{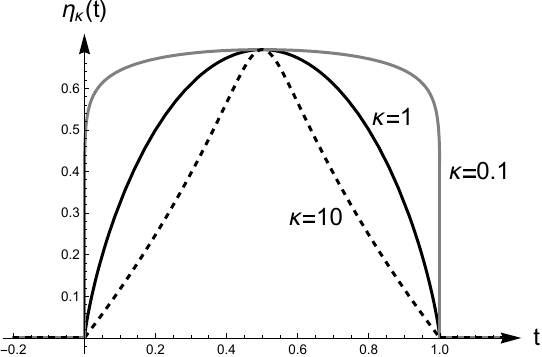}
\caption{Plot of the function~$\eta_\kappa$ for~$\kappa=0.1$, $1$ and~$10$.}
\label{fig: Plot eta}
\end{figure}%
Note that~$\eta_1$ is the familiar {\em{von Neumann entropy function}}. 
Next, we consider the {\em{entropic difference operator}} corresponding to the subset~$\Lambda$
as introduced~\cite[Section~3]{leschke-sobolev-spitzer2}
(for more details and references see the preliminaries in Section~\ref{secentquasi})
\beq
\label{EntropicDiff}
\Delta_\kappa(\Pi_-^\varepsilon, \Lambda):=\eta_\kappa \big( \chi_{\Lambda} \:\Pi_-^\varepsilon\: \chi_{\Lambda} \big) -\chi_{\Lambda} \,\eta_\kappa(\Pi_-^\varepsilon)\,\chi_{\Lambda} \:,
\eeq
In order to obtain the entropy of the event horizon, we choose~$\Lambda$ as an annular region around the
event horizon. As the radial coordinate we choose Regge-Wheeler coordinate~$u \in \R$, in which
the event horizon is located at~$u\rightarrow -\infty$ (for details see~\eqref{RW} in the preliminaries).
We then parametrize~$\Lambda$ by
\begin{align}
\label{def:TildeK}
\Lambda &:= {\mathcal{K}} \times S^2 \qquad \text{with} \qquad
\mathcal{K}:=(u_0-\rho, u_0)\\
\notag
&\hspace{.1cm}\equiv  \left\{ \left.
\begin{pmatrix}
	u \sin \vartheta \cos \varphi\\
	u \sin \vartheta \sin \phi\\
	u \cos \vartheta
\end{pmatrix} \:\right| \: u_0-\rho<u< u_0,\:\: 0<\vartheta<\pi, \: \: 0< \varphi< 2\pi \right\}
\end{align}
(see also Figure~\ref{Fig:Lambda2} on page~\pageref{Fig:Lambda2}).
The {\em{fermionic entanglement entropy}} is obtained as the trace of the entropic difference operator~\eqref{EntropicDiff} in the limit when~$\Lambda$ moves toward the event horizon, i.e.\
\beq \label{RenyEnt}
S_\kappa(\Pi_-^\varepsilon, \Lambda):= \lim_{u_0 \rightarrow -\infty}
\tr \big( \Delta_\kappa(\Pi_-^\varepsilon, \Lambda) \big) \:.
\eeq
We shall prove that, to leading order in the regularization length~$\varepsilon$,
this trace is independent of~$\rho$.
It turns out that we get equal contributions from the two boundaries at~$u_0-\rho$ and~$u_0$
as~$u_0 \rightarrow -\infty$. Therefore, the fermionic entanglement entropy is given by one half this trace. 

Before stating our main result, we note that
the trace of the entropic difference operator can be decomposed into a sum over all occupied angular
momentum modes. For ease in presentation, we begin with one angular momentum mode
parametrized by~$(k,n)$ with~$k \in \Z+1/2$ and~$n \in \N$ (for details on the separation of variables
of the Dirac equation see the preliminaries in Section~\ref{The Schwarzschild Propagator}).
Then the trace in~\eqref{RenyEnt} becomes
\[ \tr \Delta_\kappa\big((\Pi_-^\varepsilon)_{kn}, {\mathcal{K}}\big) \:, \]
where~$(\Pi_-^\varepsilon)_{kn}$ is the operator~$\Pi^i_\varepsilon$ restricted to
the angular mode~$(k,n)$. This operator depends only on the radial variable; this is why
the characteristic function~$\chi_{\Lambda}$ has been replaced by~$\chi_{\mathcal{K}}$.
We define the {\em{mode-wise R{\'e}nyi entropy}} of the black hole as
\begin{align} \label{def modewise entropy}
S_{\kappa, kn}^{\mathrm{BH}} := \frac{1}{2}\:\lim\limits_{\rho \rightarrow \infty}\lim\limits_{\varepsilon \searrow 0} \frac{1}{f(\varepsilon)}\lim\limits_{u_0\rightarrow -\infty}
\tr \Delta_\kappa\big((\Pi_-^\varepsilon)_{kn}, \mathcal{K}\big)\:,
\end{align}
where~$f(\varepsilon)$ is a function describing the highest order of divergence in~$\varepsilon$ (we will later see that here~$f(\varepsilon)=\log (M / \varepsilon)$ with~$M$ the black hole mass).
Our main result shows that~$S_{\kappa, kn}^{\mathrm{BH}}$ has the same numerical value for each angular mode:
\begin{Thm}\label{Main Res.}
Let~$\kappa > \frac{2}{3}$ and let~$n \in \Z$ and~$k \in \Z +1/2$ arbitrary then
\begin{flalign} \label{limrenyi}
	\lim\limits_{\varepsilon \searrow 0} \;\lim\limits_{u_0 \rightarrow -\infty}  \frac{1}{\log (\ell / \varepsilon)} \tr \Delta_\kappa\big((\Pi_-^\varepsilon)_{kn}, \mathcal{K}\big) = \frac{1}{12} \frac{\kappa +1}{\kappa} \:,
\end{flalign}
where~$\ell$ is a reference length. Due to the form of the R{\'e}nyi entropy functions, the right hand side is always positive.
For the entanglement entropy, i.e.\ $\kappa =1$, we obtain in particular
\beq \label{limcount}
\lim\limits_{\varepsilon \searrow 0} \;\lim\limits_{u_0 \rightarrow -\infty}  \frac{1}{\log (\ell / \varepsilon)} \tr  \Delta_1 \big((\Pi_-^\varepsilon)_{kn}, \mathcal{K}\big)	= \frac{1}{6}\:.
\eeq
\end{Thm}	 \noindent
We note for clarity that the only purpose of the reference length~$\ell$ is to make the argument of the
logarithm dimensionless. The choice of~$\ell$ is irrelevant, because writing~$\log(\ell/\varepsilon) = \log \ell - \varepsilon$, the term~$\log \ell$ is sub-leading. Thus the main statement of~\eqref{limrenyi} and~\eqref{limcount}
is that, in the limit~$\varepsilon \searrow 0$, the traces are logarithmically divergent, and we determine
the corresponding proportionality factor. Since the mass of the Dirac particles will be irrelevant, it is
natural to associate~$\ell$ with the only other parameter with dimension of length: the mass~$M$ of the
black hole. Therefore, in what follows we will always replace the logarithm in the above formulas
by~$\log(M / \varepsilon)$ (for more on units see the last paragraph of the introduction).

In simple terms, the above result shows that each occupied angular momentum mode gives the
same contribution to the (R{\'e}nyi) entanglement entropy.
This result can be understood immediately from the infinite red shift effect at the event horizon.
Indeed, asymptotically near the event horizon, a Dirac wave behaves
like a massless particle without angular momentum (as will be made precise
in Lemma~\ref{X_at_hor} below), suggesting that also the entanglement entropy should
be the same for each angular mode. Proving this result, however, makes it necessary to
estimate different error terms, which constitutes the technical core of this work.

The (R{\'e}nyi) entanglement entropy of the black hole can be written formally
as the sum of all angular momentum modes,
\begin{align} \label{entropy BH}
S_{\kappa}^\mathrm{BH} = \sum_{k,n} S_{\kappa, kn}^\mathrm{BH} \:.
\end{align}
Since each angular mode gives the same non-zero contribution~\eqref{limcount},
the sum in~\eqref{entropy BH} clearly diverges if an infinite number of angular modes
are occupied. This leads us to regularize the vacuum state by also occupying only a
finite number of angular momentum modes (for details see~\eqref{occupy} in Section~\ref{secregvac}).
After this has been done, the sum in~\eqref{entropy BH} becomes finite.
Then, applying Theorem~\ref{Main Res.} to each angular mode,
it becomes possible to compute the entanglement entropy of the horizon
simply by counting the number of occupied angular momentum modes.
This is reminiscent of the counting of states in string theory~\cite{strominger-vafa}
and loop quantum gravity~\cite{ashtekar-baez}. In order to push the analogy further,
assuming a minimal area~$\varepsilon^2$ on the horizon and keeping in mind that the area of the
event horizon scales like~$M^2$ (as is obvious already from dimensional considerations),
the number of occupied angular modes should scale like~$M^2/\varepsilon^2$. In this way, we find that the entanglement entropy is indeed proportional to the area of the black hole.
The factor~$\log(M / \varepsilon)$
in the above theorem is usually referred to as an {\em{enhanced}} area law.
Such an enhanced area law typically occurs if the considered fields are massless, in which case
long-range effects give rise to an additional logarithmic divergence.
This fits to our physical situation because, as mentioned above,
the Dirac wave behaves near the event horizon like as massless field
due to the red shift effect at the event horizon. 
We also note for clarity that we make essential use of the fact that our vacuum state
involves a hard cutoff between the occupied one-particle states of negative frequency
and the non-occupied states of positive frequency. It is not clear to us if considering instead a smooth
cutoff function would still lead to an enhanced area law.

The article is structured as follows. 
Section~\ref{SecPreliminiaries} provides the necessary preliminaries on
entanglement entropy, the Dirac equation, the Dirac propagator in the Schwarzschild geometry
and some technical tools involving Schatten classes and pseudo-differential operators.
In Section~\ref{secregproj} the regularized projection operator on the negative-frequency solutions
of the Dirac equation is defined and decomposed into angular momentum modes.
For each angular momentum mode, the resulting functional calculus is formulated and the corresponding operator is rewritten in the
language of pseudo-differential operators. Moreover, the symbol will be further simplified at the horizon.
After these preparations, the core of this works begins in Section~\ref{secregvac}, where the
regularized fermionic vacuum and the corresponding
(R{\'e}nyi) entanglement entropy of the event horizon is defined. In Section~\ref{Sec trace of the limiting operator}
the entropy of a simplified limiting operator (in the sense that the regularization goes to zero) at the horizon. 
Afterward, we estimate the error caused by using the limiting operator instead of the regularized one
(Section~\ref{Sec Estimating  the Error Terms}). It turns out that this error drops out in the limiting process.
Subsequently, we complete the proof of the main result (Theorem~\ref{Main Res.}) by combining the results from the previous sections (Section~\ref{SecProofofMain}). We conclude with a brief summary and a discussion of open problems (Section~\ref{Sec Conclusion}).
The appendices contain additional material and give some background information. \\
%

\noindent
{\bf{Units and notational conventions.}} 
We work throughout in natural units~$\hbar = c = 1$. Then the only remaining unit is that of
a length (measured for examples in meters). It is most convenient to work with dimensionless
quantities. This can be achieved by choosing an arbitrary reference length~$\ell$ and multiplying
all dimensional quantities by suitable powers of~$\ell$. For example, we work with the
\begin{equation} \label{dimensionless}
	\text{dimensionless quantities} \qquad
	m \ell\:,\quad \frac{M}{\ell} \quad \text{and} \quad \frac{\varepsilon}{\ell}
\end{equation}
(where~$m$ is the mass of the Dirac particles, $M$ is the mass of the black hole
times the gravitational constant, and~$\varepsilon$ is the regularization length).
For ease in notation, in what follows we set~$\ell=1$, making it possible to
leave out all powers of~$\ell$. The dimensionality can be recovered by rewriting
all formulas using the dimensionless quantities in~\eqref{dimensionless}. In the Schwarzschild geometry,
it is natural to choose~$\ell$ as the black hole mass~$M$.

We conclude the introduction with some general notational conventions.
For two non-negative numbers (or functions) 
$X$ and~$Y$ depending on some parameters, 
we write~$X\lesssim Y$ (or~$Y\gtrsim X$) if~$X\le C Y$ for
some positive constant~$C$ independent of those parameters.
To avoid confusion we may comment on the nature of 
(implicit) constants in the bounds. 

For any vector space~$V$ we denote
\[
\Lin(V) := \big\{ f: V \rightarrow V \: \big| \: f \text{ bounded and linear} \big\} \:.
\]

Finally, for ease of notation the operator of multiplication by~$f$ is denoted with the same letter, i.e.\ $(f \psi)(x)
:= f(x)\, \psi(x)$.

\section{Preliminaries}
\label{SecPreliminiaries}
\subsection{The Entanglement Entropy of a Quasi-Free Fermionic State} \label{secentquasi}
Given a Hilbert space~$(\H_m, \langle .|. \rangle_m$) (the ``one-particle Hilbert space''),
we let~$(\mathscr{F}, \la .|. \ra_\mathscr{F})$ be the corresponding fermionic Fock space, i.e.\
\[ \mathscr{F} = \bigoplus_{k=0}^\infty \;\underbrace{\H_m \wedge \cdots \wedge \H_m}_{\text{$k$ factors}} \]
(where~$\wedge$ denotes the totally anti-symmetrized tensor product).
We define the {\em{creation operator}}~$\Psi^\dagger$ by
\[ \Psi^\dagger \::\: \H_m \rightarrow \text{\rm{L}}(\Fock) \:,\qquad
\Psi^\dagger(\psi) \big( \psi_1 \wedge \cdots \wedge \psi_p \big) := \psi \wedge \psi_1 \wedge \cdots \wedge \psi_p \:. \]
Its adjoint is the annihilation operator denoted by~$\Psi(\overline{\psi}) := (\Psi^\dagger(\psi))^*$.
These operators satisfy the canonical anti-commutation relations
\[ \label{CAR}
	\big\{ \Psi(\overline{\psi}), \Psi^\dagger(\phi) \big\} = (  \psi | \phi ) \qquad\text{and}\qquad
	\big\{ \Psi(\overline{\psi}), \Psi(\overline{\phi}) \big\} = 0 = \big\{ \Psi^\dagger(\psi), \Psi^\dagger(\phi) \big\} \:. \]
Next, we let~$W$ be a {\em{statistical operator}} on~$\Fock$, i.e.\ a positive semi-definite linear operator of trace one,
\[ W \::\: \Fock \rightarrow \Fock\:,\qquad W \geq 0 \quad \text{and} \quad \tr_\Fock(W)=1 \:. \]
Given an observable~$A$ (i.e.\ a symmetric operator on~$\Fock$), the expectation value of the measurement
is given by
\[ \langle A \rangle := \tr_\Fock\big( A W) \:. \]
The corresponding {\em{quantum state}}~$\Omega$ is the linear functional which to every observable
associates the expectation value, i.e.\
\[ \Omega \::\: A \mapsto \tr_\Fock\big( A W) \:. \]

In this work, we restrict our attention to the subclass of so-called \emph{quasi-free} quantum states,  fully determined by their two-point functions
\[ \Omega_2(\overline{\psi}, \phi) := \Omega\big( \Psi^\dagger(\phi)\,\Psi(\overline{\psi}) \big) \:, \qquad \text{for any } \psi, \phi \in \mathscr{H}_m \:. \]
\begin{Def} \label{defreduced}
The {\bf{reduced one-particle density operator}}~$D$ is the positive linear operator on the Hilbert space~$(\H_m, (.|.)_\scrM)$ defined by
\[ \Omega_2(\overline{\psi}, \phi) = \langle \psi \,|\, D \phi\rangle_m \:, \qquad \text{for any } \psi, \phi \in \mathscr{H}_m \:. \]
\end{Def}

The von Neumann entropy~$S(\Omega)$ of the quasi-free fermionic
state~$\Omega$ can be expressed in terms of the reduced one-particle density operator by
\beq \label{Sred}
S(\Omega) = \tr \eta_1(D) \:,
\eeq
where~$\eta_\kappa$ is the function from~\eqref{Def eta} (for a plot see Figure~\ref{fig: Plot eta}).
This formula appears commonly in the literature
(see for example~\cite[Equation 6.3]{ohya-petz}, \cite{klich, casini-huerta, longo-xu}
and~\cite[eq.~(34)]{helling-leschke-spitzer}).
A detailed derivation is found in~\cite[Appendix~A]{fermientropy}.
Similar to~\eqref{Sred} also other entropies can be expressed in terms of the reduced one-particle density
operator. In particular, the R{\'e}nyi entropy can be written
as~$S_\kappa(\Omega) = \tr \eta_\kappa(D)$ 
This formula is also derived in~\cite[Appendix~A]{fermientropy}.

For the {\em{entanglement entropy}} we need to assume that the Hilbert space~$\H_m$ is formed
of wave functions in spacetime. Restricting them to a Cauchy surface, we obtain functions defined on
three-dimensional space~$\scrN$ (which could be~$\R^3$ or, more generally, a three-dimensional
manifold). Given a spatial subregion~$\Lambda \subset \scrN$, we define the (R{\'e}nyi) entanglement entropy by
\beq \label{entropygen}
S_\kappa(D, \Lambda) := \tr \big( 
\eta_\kappa \big( \chi_{\Lambda} \:D\: \chi_{\Lambda} \big) -\chi_{\Lambda} \,\eta_\kappa(D)\,\chi_{\Lambda}
\big) \:.
\eeq
More details in the case~$\kappa=1$ is found in~\cite[Section~3]{leschke-sobolev-spitzer2}.

\subsection{The Dirac Equation in Globally Hyperbolic Spacetimes} \label{secdirglobhyp}
Since we are ultimately interested in Schwarzschild space time, the abstract setting for the Dirac equation is given as follows (for more details see for example~\cite{finite}).
Our starting point is a four dimensional, smooth, globally hyperbolic Lorentzian spin manifold~$(\scrM, g)$, with metric~$g$ of signature~$(+ ,-, -, -)$. 
We denote the corresponding spinor bundle by~$S\scrM$. Its fibres~$S_x\scrM$ are endowed
with an inner product~$\Sl .|. \Sr_x$ of signature~$(2,2)$, referred to as the spin inner product.
Moreover, the mapping
\[ \gamma \::\: T_x\scrM \rightarrow \Lin(S_x\scrM) \:, \quad u \mapsto \sum\nolimits_{j=0}^3 \gamma^j u_j\:,  \]
where the~$\gamma^j$ are the Dirac matrices defined via the anti-commutation relations
\[ \gamma(u) \,\gamma(v) + \gamma(v) \,\gamma(u) = 2 \, g(u,v)\,\1_{S_x(\scrM)} \:, \]
provides the structure of a Clifford multiplication.

Smooth sections in the spinor bundle are denoted by~$C^\infty(\scrM, S\scrM)$.
Likewise, $C^\infty_0(\scrM, S\scrM)$ are the smooth sections with compact support.
We also refer to sections in the spinor bundle as {\em{wave functions}}.
The Dirac operator~$\Dir$ takes the form
\[ \Dir := i \gamma^j \nabla_j \::\: C^\infty(\scrM, S\scrM) \rightarrow C^\infty(\scrM, S\scrM)\:,\]
where~$\nabla$ denotes the connections on the tangent bundle and the spinor bundle.
Then the Dirac equation with parameter~$m$ (in the physical context corresponding to the particle mass) reads
\beq \label{Dirac}
(\Dir - m) \,\psi = 0 \:.
\eeq

Due to global hyperbolicity, our spacetime admits a foliation by Cauchy surfaces~$\scrM = (\scrN_t)_{t \in \R}$. Smooth initial data on any such Cauchy surface yield a unique global solution of the Dirac equation.
Our main focus lies on smooth solutions with spatially compact support, denoted by~$C^\infty_{\rm{sc}}(\scrM, S\scrM)$. The solutions in this class are endowed with the scalar product
\beq \label{print}
(\psi | \phi)_m = \int_\scrN \Sl \psi \,|\, \nu^j \gamma_j\, \phi \Sr_x\: d\mu_\scrN(x) \:,
\eeq
where~$\scrN$ is a Cauchy surface~$\scrN$ with future-directed normal~$\nu$ and~$d\mu_\scrN$ denotes the measure on~$\scrN$ induced by the metric~$g$ 
(compared to the conventions in~\cite{finite}, we here preferred to leave out a factor of~$2 \pi$).
This scalar product is independent of the choice of~$\scrN$ (for details see~\cite[Section~2]{finite}).
Finally we define the Hilbert space~$(\H_m, (.|.)_m)$ by completion,
\[ \H_m:= \overline{ C^\infty_{\rm{sc}}(\scrM, S\scrM)}^{  (.|.)_m} \:.\]

\subsection{The Dirac Propagator in the Schwarzschild Geometry}
	\label{The Schwarzschild Propagator}
\subsubsection{The Integral Representation of the Propagator}
We recall the form of the Dirac equation in the Schwarzschild geometry and its separation,
closely following the presentation in~\cite{tkerr} and \cite{sigbh}.
Given a parameter~$M>0$ (the black hole mass), the exterior Schwarzschild metric reads
\[
ds^2 = g_{jk}\:dx^j \,dx^k = \frac{\Delta(r)}{r^2} \: dt^2
- \frac{r^2}{\Delta(r)}\:dr^2 - r^2\: d \vartheta^2 - r^2\: \sin^2 \vartheta\: d\varphi^2\:, \]
where
\[ \Delta(r) := r^2 - 2M r \:. \]
Here the coordinates~$(t, r, \vartheta, \varphi)$ takes values in the intervals
\[ -\infty<t<\infty,\qquad r_1<r<\infty,\qquad 0<\vartheta<\pi,\qquad 0<\varphi<2\pi \:, \]
where~$r_1:=2M$ is the event horizon.
It is most convenient to transform the radial coordinate to the 
so called \textit{Regge-Wheeler-coordinate}~$u \in \R$ defined by
\beq \label{RW}
u(r) = r + 2M \log(r-2M)\:, \qquad \text{so that} \qquad \frac{du}{dr} = \frac{r^2}{\Delta(r)}\:.
\eeq
In this coordinate, the event horizon is located at~$u \rightarrow - \infty$, whereas~$u\rightarrow \infty$ corresponds to spatial infinity, i.e.\ $r \rightarrow \infty$.

In this geometry, the Dirac operator takes the form (see also~\cite[Section~2.2]{sigbh}):
\begin{flalign*}
	\Dir &= \begin{pmatrix} 0 & 0 & \alpha_+ & \beta_+ \\
		0 & 0 & \beta_- & \alpha_- \\
		\alpha_- & -\beta_+ & 0 & 0 \\
		-\beta_- & \alpha_+ & 0 & 0 \end{pmatrix} \qquad \text{with} \\
	\beta_\pm &= \frac{i}{r} \left( \frac{\partial}{\partial
		\vartheta} + \frac{\cot \vartheta}{2} \right) \pm
	\frac{1}{r \sin \vartheta} \:\frac{\partial}{\partial \varphi} \qquad \text{and} \\
	\alpha_\pm &= -\frac{ir} {\sqrt{\Delta(r)}} \:\frac{\partial}{\partial t}
	\pm \frac{\sqrt{\Delta(r)}}{r} \left( i \frac{\partial}{\partial r}
	\:+\: i \:\frac{r-M}{2 \Delta(r)} \:+\: \frac{i}{2r} \right) .
\end{flalign*}
Then the Dirac equation can be separated with the ansatz
\begin{flalign*}
	\psi^{kn}(t,u,\varphi,\vartheta) = e^{-ik\varphi}\:\frac{1}{\Delta(r)^{1/4} \sqrt{r}}\:
	\begin{pmatrix}
		X_-^{kn}(t,u)Y^{kn}_-(\vartheta)\\
		X_+^{kn}(t,u)Y^{kn}_+(\vartheta)\\
		X_+^{kn}(t,u)Y^{kn}_-(\vartheta)\\
		X_-^{kn}(t,u)Y^{kn}_+(\vartheta)
	\end{pmatrix}
\end{flalign*}
with~$k \in \Z+1/2$, $n \in \N$ and~$\omega \in \R$. The angular functions~$Y^{kn}_\pm$
can be expressed in terms of spin-weighted spherical harmonics and form an orthonormal basis of~$L^2\big(\big((-1,1),\:d\vartheta \cos \vartheta\big), \C^2\big)$ (see~\cite[Section 2.4]{sigbh} with additional reference to~\cite{goldberg}). The radial functions~$X^{k n}_\pm$
satisfy a system of partial differential equations
\begin{flalign}
	\label{Radial Dirac Eq}
	\begin{pmatrix}
		\sqrt{\Delta(r)}\: \mathcal{D}_+ & imr-\lambda\\
		-imr-\lambda & \sqrt{\Delta(r)}\: \mathcal{D}_-
	\end{pmatrix}
	\begin{pmatrix}
		X_+^{kn} \\
		X_-^{kn}
	\end{pmatrix}
	=
	0\:,
\end{flalign}
where~$m$ denotes the particle mass and
\begin{flalign*}
	\mathcal{D}_\pm = \frac{\partial}{\partial r} \mp \frac{r^2}{\Delta(r)}\:\frac{\partial }{\partial t} \:,
\end{flalign*}
for details see \cite[Section~2]{tkerr}. 
Moreover, employing the ansatz
\[ X^{kn}_\pm(t,u)= e^{-i\omega t} \:X^{kn\omega}_\pm(u) \:, \]
equation~\eqref{Radial Dirac Eq} goes over to a system of ordinary differential equations, which admits two two-component fundamental solutions labeled by~$a=1,2$. 
We denote the resulting Dirac solution by~$X^{kn\omega}_a=(X^{kn\omega}_{a,+}, X^{kn\omega}_{a,- })$
In the case~$|\omega|<m$, these solutions behave exponentially near infinity.
We always choose the fundamental solution for
\beq \label{aeone}
\text{$a=1$ as the fundamental solution which
{\em{decays}} at infinity.}
\eeq
For more details on the choice of the fundamental solutions see Section~\ref{rnProp at the Horizon} below.

In what follows we will often use the following notation for two-component functions 
\[  A:=\begin{pmatrix}
	A_+ \\ A_-
\end{pmatrix}\:. \]
The norm in~$\C^2$ will be denoted by~$| \,.\, |$, the canonical inner product on~$L^2(\R,\C^2)$ by~$\la.|.\ra$ and the corresponding norm by~$\|.\|$.

As implied by \cite[Theorem~3.6]{tkerr}, one can then find the following formula for the mode-wise propagator:
\begin{Thm} \label{thmintrep}
Given initial radial data~$X_0 \in C^\infty_0(\R, \C^2)$ at time~$t=0$, the corresponding solution~$X \in C^\infty(\R^2,\C^2)$ of the radial Dirac equation~\eqref{Radial Dirac Eq} can be written as
\beq \label{SchwarzschildProp}
X(t, u) = 
\frac{1}{\pi} \int_{-\infty}^\infty d\omega \:e^{-i\omega t} \sum_{a,b =1}^{2} t_{ab}^{kn\omega} \,X_a^{kn\omega} (u) \:\la X_b^{kn\omega} |  X_0 \ra \:,
\eeq
for any~$t,u \in \R$. The~$X_a^{kn\omega} (x)$ are the fundamental solutions mentioned before.
Here the coefficients~$t_{ab}^{kn\omega}$ satisfy the relations
\[ \overline{t^{kn\omega}_{ab}} = t^{kn\omega}_{ba} \]
and
\beq
\label{tab}
\left\{ \begin{array}{cl}
t^{kn\omega}_{ab} = \delta_{a,1} \:\delta_{b,1} & \qquad \text{if~$|\omega| \leq m$} \\[0.3em]
\displaystyle t^{kn\omega}_{11} = t_{22}=\frac{1}{2}\:,\quad \big|t^{kn\omega}_{12} \big| \leq \frac{1}{2} & \qquad \text{if~$|\omega| > m$}\:.
\end{array} \right. \eeq
\end{Thm} \noindent
We note for clarity that, in view of~\eqref{tab} and~\eqref{aeone}, in the case~$|\omega|<m$ only
the exponentially decaying wave function enters the integral representation.
This has the effect that, asymptotically near infinity, only the spectrum for~$|\omega| \geq m$
is visible, in agreement with the mass gap in Minkowski space.

\subsubsection{Hamiltonian Formulation} \label{Sec Hamiltonian}
The Dirac equation~\eqref{Dirac} can be written in the Hamiltonian form
\beq \label{dirHamilton}
i \partial_t \psi = H \psi \:,
\eeq
where the Hamiltonian~$H$ is a spatial operator acting on the spinors.
Choosing the Cauchy surface~$\scrN$ as the surface of constant Schwarzschild time
and the domain~$\D(H)$
as the smooth and compactly supported spinorial wave functions on~$\scrN$,
the Hamiltonian is symmetric with respect to the scalar product~\eqref{print}, i.e.\
\[ (H \psi \,|\, \phi)_m = (\psi \,|\, H \phi)_m \qquad \text{for all~$\psi, \phi \in \D(H)$} \]
(for more details on this point in general stationary spacetimes see~\cite[Section~4.6]{intro}).
The Hamiltonian is essentially selfadjoint (see~\cite{chernoff} for details in a more general context).
Denoting the unique selfadjoint extension again by~$H$,
the Cauchy problem can be solved with the spectral calculus by
\beq \label{fullprop}
\psi(t) = e^{-i t H}\: \psi_0 \:.
\eeq
This is the abstract counterpart of the integral representation of Theorem~\ref{thmintrep}.
In simple terms, the solution~\eqref{SchwarzschildProp} can be understood as giving an integral
representation for the operator~$e^{-itH}$ restricted to an angular mode.
Noting that~$\omega$ is the spectral parameter, the integral in~\eqref{SchwarzschildProp}
can be understood as a spectral decomposition in terms of the spectral measure
(in particular, the spectrum of the Hamiltonian is the whole real axis).
In order to make these connections more precise, we first note that also the radial
Dirac equation after separation of variables~\eqref{Radial Dirac Eq} can be written in the Hamiltonian form,
\begin{flalign*}
	&i \frac{\partial}{\partial t} X^{kn} (t,u)= \big(H_{kn} X^{kn}|_t\big)(u)  \\
	\Longleftrightarrow \qquad &(\Dir - m)\,e^{-ik\varphi}\:\frac{1}{\Delta(r)^{1/4} \sqrt{r}}\:
	\begin{pmatrix}
		X^{kn}_-(t,u)Y^{kn}_-(\vartheta)\\
		X^{kn}_+(t,u)Y^{kn}_+(\vartheta)\\
		X^{kn}_+(t,u)Y^{kn}_-(\vartheta)\\
		X^{kn}_-(t,u)Y^{kn}_+(\vartheta)
	\end{pmatrix} = 0 \;,
\end{flalign*}
where the Hamiltonian~$H_{kn}$ now is an essentially self-adjoint operator on~$L^2(\R,\C^2)$
with dense domain~$\D(H_{kn}) = C^\infty_0(\R ,\C^2 )$.
This makes it possible to write the solution of the Cauchy problem as
\[ X (t,u) = \big(e^{-itH_{kn}} \,X_0 \big)(u) \qquad \text{with~$u \in \R$}\:. \]
Here, the initial data can be an arbitrary vector-valued function in the Hilbert space, i.e.\ $X_0 \in L^2(\R,\C^2)$.
If we specialize to smooth initial data with compact support, i.e.\ $X_0 \in C^\infty_0(\R ,\C^2)$,
then the time evolution operator can be written with the help of Theorem~\ref{thmintrep} as
\begin{align*}
\big(e^{-i tH_{kn}} X_0 \big)(u) = 
\frac{1}{\pi} \int_{-\infty}^\infty d\omega \:e^{-i\omega t} \sum_{a,b =1}^{2} t_{ab}^{kn\omega}& \,X_a^{kn\omega} (u) \:\la X_b^{kn\omega} | X_0 \ra \\[-0.7em] 
&\text{for~$X_0 \in C^\infty_0(\R ,\C^2)$} \notag
\end{align*}
We point out that this formula does not immediately extend to general~$X_0 \in L^2(\R,\C^2)$;
we will come back to this technical issue a few times in this work.

\subsubsection{Connection to the Full Propagator}
\label{SecCon}
We now explain how the solution of the Cauchy problem as given abstractly in~\eqref{fullprop}
can be decomposed into angular modes. Our considerations explain why we may
restrict attention to one angular mode instead of the full propagator and why we can use the ordinary $L^2$-scalar product instead of~$(.|.)_m$. We introduce the function
\[ S:=\Delta(r)^{1/4} \sqrt{r} \:. \] 
Moreover, for each fixed~$k \in \Z+1/2$ and~$n\in \Z$ we denote by~$(\H_m^0)_{kn}$ the completion of
the vector space
\[ V_{kn} := \bigg\{  S^{-1}\:e^{-ik\varphi} \:\begin{pmatrix}
	X^{kn\omega}_-(u)Y^{kn}_-(\vartheta)\\
	X^{kn\omega}_+(u)Y^{kn}_+(\vartheta)\\
	X^{kn\omega}_+(u)Y^{kn}_-(\vartheta)\\
	X^{kn\omega}_-(u)Y^{kn}_+(\vartheta)
\end{pmatrix} \: \bigg|\: X=(X_+,X_-) \in L^2(\R,\C^2)\bigg\} \:, \]
with respect to the scalar product~$(.|.)_m$ introduced in~\eqref{print}, i.e.\
\[ (\H_m^0)_{kn} := \overline{V_{kn}}^{(.|.)_m} \:. \]
 This space can be thought of as the mode-wise solution space of the Dirac-equation at time~$t=0$. Note that the entire Hilbert space of solutions at time~$t=0$, namely
\[ \H_m|_{t=0}  =: \H_m^0 \]
has the orthogonal decomposition
\beq
\label{HmDec}
	 \H_m^0 = \bigoplus_{i \in \N} (\H_m^0)_{k_i n_i} \:.
\eeq
(again with respect to~$(.|.)_m$), where~$((k_i,n_i))_{i\in \N}$ is an enumeration of~$(\Z+1/2)\times \Z$.
Furthermore, each space~$(\H_m^0)_{kn}$ can be connected with~$L^2(\R,\C^2)$ using the mapping
\begin{flalign*}
	\tilde{S}: \big((\H^0_m)_{kn} \:,\: (.|.) \big) &\rightarrow L^2(\R , \C^2)\:,
\end{flalign*} 
which for any~$(\psi_1, \cdots, \psi_4) \in (\H^0_m)_{kn}$ is given by
\begin{align*}
	&\big( \tilde{S}(\psi_1, \cdots, \psi_4)\big)_1 \\
	&= \int_{-1}^1 d \vartheta \:\cos \vartheta \int_0^{2\pi} d\varphi \:\Big\la \big(
	\psi_2(u, \vartheta, \varphi) \,, \,
	\psi_3(u, \vartheta, \varphi)
	\big) \:\Big|\: e^{-ik\varphi} \:\big(
	Y^{kn}_+(\vartheta)\:, \:
	Y^{kn}_-(\vartheta)
	\big) \Big\ra_{\C^2} \:,\\
	&\big( \tilde{S}(\psi_1, \cdots, \psi_4)\big)_2 \\
	&= \int_{-1}^1 d \vartheta \:\cos \vartheta \int_0^{2\pi} d\varphi \: \Big\la \big(
	\psi_4(u, \vartheta, \varphi) \,, \,
	\psi_1(u, \vartheta, \varphi)
	\big) \:\Big|\: e^{-ik\varphi} \:\big(
	Y^{kn}_+(\vartheta)\:, \:
	Y^{kn}_-(\vartheta)
	\big) \Big\ra_{\C^2} \:.
\end{align*}
It has the inverse
\begin{flalign*}
	\tilde{S}^{-1}:  L^2(\R , \C^2) &\rightarrow \big((\H_m^0)_{kn}\:,\: (.|.)_m\big)\:,  \\
	(X_+,X_-) &\mapsto S^{-1}e^{-ik\varphi}
	\begin{pmatrix}
		X_-(u)Y^{kn}_-(\vartheta)\\
		X_+(u)Y^{kn}_+(\vartheta)\\
		X_+(u)Y^{kn}_-(\vartheta)\\
		X_-(u)Y^{kn}_+(\vartheta)
	\end{pmatrix}\:.
\end{flalign*}
Then a direct computation shows the scalar products transform as
\[ \la \tilde{S} \psi\:  |\: \tilde{S} \phi \ra_{L^2} =(\psi \,| \, \phi)_m \qquad \text{for any }\phi,\psi \in (\H_m^0)_{kn}\:. \]
This implies that~$\tilde{S}$ is unitary and we can identify the two spaces. 

Now recall that the Dirac-equation can be separated by solutions of the form
\[ \hat{\psi}= S^{-1} e^{-ik\varphi}	\begin{pmatrix}
	X_-(t,u)Y^{kn}_-(\vartheta)\\
	X_+(t,u)Y^{kn}_+(\vartheta)\\
	X_+(t,u)Y^{kn}_-(\vartheta)\\
	X_-(t,u)Y^{kn}_+(\vartheta)
\end{pmatrix} \:, \]
and can then be described mode-wise by the Hamiltonian~$H_{kn}$ on the space~$L^2(\R, \C^2)$.
Therefore denoting
\[ \tilde{H}_{kn}:=\tilde{S}^{-1}H_{kn} \tilde{S} \:,\]
the diagonal block operator (with respect to the decomposition~\eqref{HmDec})
\[ \tilde{H}:=\text{diag} \big(\tilde{H}_{(k_1,n_1)}\:,\: \tilde{H}_{(k_2,n_2)}\:, \: \dots \:\big)\:,  \]
defines an essentially self-adjoint Hamiltonian for the original Dirac equation on the space~$\H_m^0$.

Moreover, any function of~$\tilde{H}$ is of the same diagonal block operator form.  The same holds for any multiplication operator~$\mathcal{M}_{\chi_{\tilde{U}}}$, where~$\tilde{U}$ is a spherically symmetric set
\[ \tilde{U}:= U \times S^2 \subseteq \R \times S^2 \:.\]
In particular, such an operator has the block operator representation
\[ \mathcal{M}_{\chi_{\tilde{U}}} = \text{diag} \big( \mathcal{M}_{\chi_{\tilde{U}}} \:,\: \mathcal{M}_{\chi_{\tilde{U}}} \:,\: \dots \big) \:. \]
We therefore conclude that when computing traces of operators of the form
\[ \chi_{\tilde{U}}f(\tilde{H}) \chi_{\tilde{U}}\qquad \text{or} \qquad f(\chi_{\tilde{U}} \tilde{H}\chi_{\tilde{U}}) \:, \]
(for some suitable function~$f$), we may consider each angular mode separately and then sum over the occupied states (and similarly for Schatten norms of such operators).

Moreover we point out that instead of~$(\H_m^0)_{kn}$ we can work with the corresponding objects in~$L^2(\R,\C^2)$, as the spaces are unitarily equivalent.
Note, that then the multiplication operator~$\mathcal{M}_{\chi_{\tilde{U}}}$ goes over to~$\mathcal{M}_{\chi_{U}}$, i.e.\
\[ \tilde{S}^{-1} \mathcal{M}_{\chi_{\tilde{U}}}\tilde{S} = \mathcal{M}_{\chi_{U}} \:.\]
In particular this leads to
\[ \tr \Big( \chi_{\tilde{U}}f(\tilde{H}) \chi_{\tilde{U}} \Big)  = \: \sum_{k,n} \: \tr \Big(  \chi_{\tilde{U}}f(\tilde{H}_{kn}) \chi_{\tilde{U}} \Big) = \: \sum_{k,n} \:\tr \Big(  \chi_{U}f(H_{kn}) \chi_{U} \Big)  \]
and
\[ \tr f\big( \chi_{\tilde{U}}\tilde{H} \chi_{\tilde{U}}\big) = \: \sum_{k,n} \: \tr f\big( \chi_{\tilde{U}}\tilde{H}_{kn} \chi_{\tilde{U}}\big) = \: \sum_{k,n} \:\tr f\big(  \chi_{U}H_{kn} \chi_{U} \big) \:.\]
	
\subsubsection{Asymptotics of the Radial Solutions} \label{rnProp at the Horizon}
We now recall the asymptotics of the solutions of the radial ODEs and specify our choice of
fundamental solutions.
Since we want to consider the propagator at the horizon, we will need near-horizon approximations of the solutions~$X^{kn\omega}$. In order to control the resulting error terms, we now state a slightly stronger version of \cite[Lemma 3.1]{tkerr}, specialized to the Schwarzschild case.

\begin{Lemma}
	\label{X_at_hor}
	For any~$u_2 \in \R$ fixed, in Schwarzschild space every solution~$X\equiv X^{kn\omega}$ for~$ u \in (-\infty, u_2)$ is of the form
	\begin{flalign*}
		X(u) = \begin{pmatrix}
			f_0^+ e^{-i\omega u} \\
			f_0^- e^{i\omega u}
		\end{pmatrix}
		+ R_0(u)
	\end{flalign*}
	where the error term~$R_0$ decays exponentially in~$u$, uniformly in~$\omega$. More precisely, writing
	\begin{flalign*}
		R_0(u)= 
		\begin{pmatrix}
			e^{-i\omega u} g^+(u) \\
			e^{i\omega u} g^- (u)
		\end{pmatrix} \:,
	\end{flalign*}
	the vector-valued function~$g=(g^+, g^-)$ satisfies the bounds
	\[ |g(u)| < c e^{du}\:, \quad  \left| \frac{d}{du}g(u)\right| \leq d ce^{du} \qquad \mathrm{for\;all\;} u<u_2\:, \]
	with coefficients~$c,d>0$ that can be chosen independently of~$\omega$ and~$u<u_2$.
\end{Lemma}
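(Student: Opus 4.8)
The plan is to recast the radial system~\eqref{Radial Dirac Eq} as an exponentially small perturbation of a constant-coefficient diagonal system and then to read off the asymptotics from a Gr\"onwall estimate, keeping all constants explicit and uniform in~$\omega$. The argument parallels \cite[Lemma~3.1]{tkerr}, the refinements being the specialization to Schwarzschild and the $\omega$-uniform quantitative control.

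\emph{Step 1: a perturbed diagonal system.} Inserting $X^{kn}_\pm(t,u)=e^{-i\omega t}X^{kn\omega}_\pm(u)$ into~\eqref{Radial Dirac Eq} and passing to the Regge--Wheeler variable (on the separated modes $\partial_r$ becomes $(r^2/\Delta)\,\partial_u$ and $\partial_t$ becomes $-i\omega$, so that $\sqrt{\Delta}\,\mathcal{D}_\pm$ becomes $(r^2/\sqrt{\Delta})(\partial_u\pm i\omega)$), equation~\eqref{Radial Dirac Eq} becomes the first-order ODE
\[
\frac{d}{du}\begin{pmatrix}X_+\\ X_-\end{pmatrix}
=\begin{pmatrix}-i\omega & 0\\ 0 & i\omega\end{pmatrix}\begin{pmatrix}X_+\\ X_-\end{pmatrix}
+\frac{\sqrt{\Delta}}{r^2}\begin{pmatrix}0 & -(imr-\lambda)\\ imr+\lambda & 0\end{pmatrix}\begin{pmatrix}X_+\\ X_-\end{pmatrix}\:;
\]
denote the (purely off-diagonal) second matrix by $B(u)$. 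The one geometric input is that $B$ decays exponentially as $u\to-\infty$. Since $r>2M$ we have $u=r+2M\log(r-2M)>2M+2M\log(r-2M)$, giving the clean $u_2$-independent bound $r-2M<e^{u/(2M)-1}$; hence $\sqrt{\Delta}/r=\sqrt{(r-2M)/r}$ and $\sqrt{\Delta}/r^2$ are each bounded by an explicit constant times $e^{u/(4M)}$. Together with $|imr\pm\lambda|=\sqrt{m^2r^2+\lambda^2}\le mr+|\lambda|$, this yields $\|B(u)\|\le\kappa\,e^{du}$ (operator norm on $\C^2$) with $\kappa$ depending only on $m$, $|\lambda|$, $M$, and $d>0$ the decay rate; in particular $B\in L^1\big((-\infty,u_2),\C^{2\times2}\big)$.

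\emph{Step 2: diagonalise and apply Gr\"onwall.} Put $Y_+:=e^{i\omega u}X_+$ and $Y_-:=e^{-i\omega u}X_-$. Then $Y=(Y_+,Y_-)$ satisfies $Y'=\tilde B(u)\,Y$, where $\tilde B(u)$ arises from $B(u)$ by multiplying its $(1,2)$- and $(2,1)$-entries by the unimodular factors $e^{2i\omega u}$ and $e^{-2i\omega u}$; hence $\|\tilde B(u)\|=\|B(u)\|\le\kappa\,e^{du}$, \emph{uniformly in $\omega$}. A backward Gr\"onwall estimate on $|Y(u)|\le|Y(u_2)|+\int_u^{u_2}\|\tilde B(s)\|\,|Y(s)|\,ds$ shows that $Y$ is bounded on $(-\infty,u_2)$; then $s\mapsto\tilde B(s)Y(s)$ is integrable at $-\infty$, so the limit $f_0=(f_0^+,f_0^-):=\lim_{u\to-\infty}Y(u)$ exists, $g(u):=Y(u)-f_0=\int_{-\infty}^u\tilde B(s)\,Y(s)\,ds$, and a second Gr\"onwall step (now from $-\infty$) gives $|Y(u)|\le|f_0|\exp\!\big((\kappa/d)e^{du_2}\big)=:C_0$ for all $u<u_2$. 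Consequently
\[
|g(u)|\le\int_{-\infty}^u\|\tilde B(s)\|\,|Y(s)|\,ds\le C_0\,\frac{\kappa}{d}\,e^{du}\:,
\qquad
|g'(u)|=|\tilde B(u)Y(u)|\le\kappa\,C_0\,e^{du}\:,
\]
which are bounds of the stated shape $|g(u)|<c\,e^{du}$ and $|g'(u)|\le d\,c\,e^{du}$ with $c:=C_0\kappa/d$; inserting the explicit expansions of Step~1 produces the explicit $c$ and $d$. Finally, undoing the substitution gives $X_+(u)=e^{-i\omega u}\big(f_0^++g^+(u)\big)$ and $X_-(u)=e^{i\omega u}\big(f_0^-+g^-(u)\big)$, which is exactly the asserted representation with $R_0(u)=\big(e^{-i\omega u}g^+(u),\,e^{i\omega u}g^-(u)\big)$.

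\emph{Main obstacle.} No single step is hard; the difficulty is quantitative bookkeeping. One has to expand $\Delta$ and $r$ near the horizon sharply enough to extract \emph{explicit} constants $\kappa$ (not just qualitative exponential decay), and then carry them through the two Gr\"onwall estimates while making sure nothing degenerates as $|\omega|\to\infty$. That last requirement is precisely why the diagonal part is removed first, so that the only surviving $\omega$-dependence sits in the moduli-one factors $e^{\pm2i\omega u}$; this is the refinement over the classical Levinson-type statement that each solution is merely asymptotic to a superposition of $e^{\mp i\omega u}$.
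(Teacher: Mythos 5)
Your argument is essentially the paper's own proof in a different packaging: you perform the same gauge transformation $f^\pm=e^{\pm i\omega u}X_\pm$ (your $Y$), reduce to $Y'=\tilde B\,Y$ with an anti-diagonal, $\omega$-independent bound on $\|\tilde B\|$, use the same near-horizon estimate $r-2M\le e^{u/(2M)-1}$, and obtain $f_0$ and $g$ by integrating from $-\infty$. The only methodological difference is that you run two Gr\"onwall estimates where the paper bounds $\log|f|$ via its logarithmic derivative (after excluding the case $f\equiv 0$ by uniqueness); your version is cleaner and in fact yields the sharper constant $c=\tfrac{\kappa}{d}|f_0|\exp\big(\tfrac{\kappa}{d}e^{du_2}\big)$, i.e.\ without the factor $8$ in the exponent, so the stated $c$ is still covered.

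One point of substance to flag: your Step~1 honestly gives $\sqrt{\Delta}/r^2\cdot(mr+|\lambda|)\lesssim e^{u/(4M)}$, hence $d=1/(4M)$, whereas the Lemma asserts $d=1/M$. For $u\le 0$ one has $e^{u/(4M)}\ge e^{u/M}$, so the bound you prove is strictly weaker than the one stated, and since $\|\tilde B(u)\|\sim C e^{u/(4M)}$ is the true asymptotic rate, no rearrangement of this argument will produce $d=1/M$. (The paper's appendix writes $e^{u/M-1/2}/\sqrt{2M}$ at exactly this step, which does not follow from $r-2M\le e^{u/(2M)-1}$; your exponent is the one the estimate actually delivers, and only the positivity of $d$ is used downstream.) You should therefore state explicitly which $d$ your constants refer to rather than deferring to ``inserting the explicit expansions''.
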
 \noindent
The proof, which follows the method in~\cite{tkerr}, is given in detail in Appendix~\ref{appA}.

We can now explain how to construct the fundamental solutions~$X_a=(X_a^+, X_a^-)$ 
for~$a=1$ and~$2$ (for this see also \cite[p. 41]{tkerr} and \cite[p. 9-10]{sigbh}).	
In the case~$|\omega| > m$ we choose~$X_1$ and~$X_2$ such that the corresponding functions~$f_0$ from the previous lemma are of the form
\begin{flalign*}
	f_0=\begin{pmatrix}
		1\\
		0
	\end{pmatrix}
	\;\mathrm{for \;} X_1 \qquad \text{and} \qquad
	f_0=\begin{pmatrix}
		0\\
		1
	\end{pmatrix}
	\;\mathrm{for \;} X_2 \:.
\end{flalign*}
In the case~$|\omega| \leq m$ we consider the behavior of solutions at infinity
(i.e.\ asymptotically as~$u \rightarrow \infty$). It turns out that there is
(up to a prefactor) a unique fundamental solution which decays exponentially.
We denote it by~$X_1$. Moreover, we choose~$X_2$ as an exponentially increasing fundamental solution.
We normalize the resulting fundamental system at the horizon by
\[ \lim\limits_{u \rightarrow - \infty}|X_{1\!/\!2}| =1 \:. \]
Representing these solutions in the form of the previous lemma we obtain
\begin{flalign*}
	X_{1\!/\!2}(u)=\begin{pmatrix}
		e^{-i\omega u} f^+_{0,1\!/\!2} \\
		e^{i\omega u} f^-_{0,1\!/\!2}
	\end{pmatrix}
	+ R_{0,1\!/\!2}(u)
\end{flalign*}
with coefficients~$f_{0,1\!/\!2}^\pm \in \C$. Due to the normalization, we know that
\[ |f_{0,1\!/\!2}|=1 \qquad \text{and in particular} \qquad |f_{0,1\!/\!2}^\pm| \leq 1\:. \]
Note however, that~$f_0$ and~$R_0$ from the previous Lemma may in general also depend on~$k$ and~$n$, but 
for ease in notation this dependence will be suppressed.

\subsection{A Few Functional Analytic Tools}
	\label{Some Elementary Properties of Symbols}
	\subsubsection{Basic Definitions}
	\label{Sec:TechToolsDefs}
	Later will often rewrite operators on~$L^2(\R^d,\C^n)$ as pseudo-differential operators of the form
	\begin{flalign}
		\label{Def Opa}
		\begin{split}
			\Big(\Opa (\CA)\psi \Big)(\bx) := \Big(\frac{\alpha}{2\pi}\Big)^d \int_{\R^d} \int_{\mathcal{U}} e^{-i\alpha \bxi \cdot (\bx-\by)}
			\: \CA(\bx, \by, \bxi)\: \psi(\by)\, d\by\, d\bxi \\ \text{for any } \psi \in C^\infty_0(\mathcal{U},\C^n)\:.
		\end{split}
	\end{flalign}
	where~$\mathcal{U}\subseteq \R^d$ is some open set. The so called {\em{symbol}}~$\CA$ is a suitable measurable matrix-valued map~$\CA: (\R^d)^3 \times (0,\infty) \rightarrow \mathrm{M}(n,n)$ such that the operator on~$C^\infty_0(\mathcal{U},\C^n)$ defined by~\eqref{Def Opa} can be extended continuously to all of~$L^2(\mathcal{U},\C^n)$. The parameter~$d\in \N$ can be thought of as the spatial dimension and the parameter~$n\in \N$ as the number of components of the wave function~$\psi$.
	Note that if~$\mathcal{U} \subsetneq \R^d$, then the operator~$\Opa(\CA)$ may still be considered an operator on~$L^2(\R^d,\C^n)$ if one replaces~$\CA(\bx, \by, \bxi)$ by~$\chi_\mathcal{U}(\bx) \CA(\bx, \by, \bxi) \chi_\mathcal{U}(\by)$. In fact in what follows we often identify these operators.
	Symbols denoted by lowercase letters usually indicate that the symbol is scalar-valued.
	Moreover, the symbols sometimes additionally depend on~$\alpha$ or other parameters. We usually denote this by corresponding super- or subscripts.
	For some symbols~$\CA$ the integral representation~\eqref{Def Opa} extends to all Schwartz- or even all $L^2$-functions. If this condition is needed for specific results, we will mention it explicitly. We will also establish some conditions on~$\CA$ that guarantee such extensions.
	The names of the arguments of the symbol~$\CA$ are adapted to the application in mind. In particular, if symbols that are {\em{not}} boldface this usually implies that they are scalar-valued, i.e.\ $d=1$.

	
	In order to conveniently compute the entanglement entropy we will often be interested in the trace of the following operator:
	\[ 
	D_\alpha(\eta_\kappa,\Lambda,\CA) :=  \eta_\kappa\big( \chi_\Lambda \Opa(\CA) \chi_\Lambda \big) - \chi_\Lambda \eta_\kappa\big(\Opa(\CA)\big) \chi_\Lambda\:, \]
	where~$\Lambda \subseteq \R^d$ is some measurable set which will be specified later.
	
	Moreover, in what follows we will often use the notation
	\[P_{\Omega,\alpha}:=\Opa(\chi_\Omega)\]
	for some measurable set~$\Omega\subseteq \R^d$, which emphasizes that this is a projection operator (that it is well-defined follows from Lemma~\ref{fOpa} in Appendix~\ref{SecPropertiesOpa}). In Remark~\ref{ExtScwartzFcts} we will see that the integral representation of such operators always extends to all Schwartz functions.
	
Furthermore as we will later see, we can often estimate traces of functions of operators 
in terms of \textit{Schatten norms}, which we now introduce (we refer to \cite[Ch.~11]{birman-solomjak} for more details on this topic). For a compact operator~$A$ in a separable Hilbert space~$\mathfrak{H}$ we denote 
by~$s_k(A), k = 1, 2, \dots,$ its singular values i.e. eigenvalues 
of the self-adjoint compact operator~$ \sqrt{A^*A}$ labeled in non-increasing order counting 
multiplicities.
For the sum~$A+B$ the following inequality holds: 
\begin{align}\label{eq:kyfan}
s_{2k}(A+B)\le s_{2k-1}(A+B)\le s_k(A) + s_k(B) \:. 
\end{align}
We say that~$A$ belongs to the Schatten-von Neumann class~$\SN_p$, $p>0$, if 
\[ \|A\|_p \coloneqq \big(\tr (A^*A)^{\frac{p}{2}}\big)^{\frac{1}{p}} \]
is finite. The functional 
$\|A\|_p$ defines a norm if~$p\ge 1$ 
and a quasi-norm if~$0<p<1$. With this (quasi-)norm the class~$\SN_p$ is a complete space. 
For~$0<p \leq 1$ the quasi-norm is actually a \emph{$p$-norm}, that is,  
it satisfies the following triangle inequality for all~$A, B \in \SN_p$,
\begin{align} \label{est:triangle}
\|A+B\|_p^p\le \|A\|_p^p + \|B\|_p^p\qquad (0<p \leq 1)\:.
\end{align} 
Moreover, for all~$A \in \SN_{q_1}$ and~$B \in \SN_{q_2}$ the following H\"older-type inequality holds 
(see \cite[Section~2.1]{sobolev-schatten} with reference to \cite[p.~262]{birman-solomjak}),
\beq \label{est:HoelderLike}
\| A B\|_q \leq  \| A \|_{q_1} \| B\|_{q_2}\:, \qquad \text{with } q^{-1} =q_1^{-1}+q_2^{-1}, \quad 0<q_1,q_2 \leq \infty \:.
\eeq

	\begin{Remark}\label{Rem:Schatten} $ $
		{\em{
%
We note that the $q$-th Schatten-norm is invariant under unitary transformations: Let~$\mathfrak{H}$ and~$\mathfrak{G}$ be Hilbert spaces, $U \in \Lin(\mathfrak{G},\mathfrak{H})$ unitary and~$A\in\SN^q\subseteq \Lin(\mathfrak{H})$ then
		\[ (U^{-1}AU)^* U^{-1}AU = (U^*AU)^* U^{-1}AU = U^* A^* U U^{-1} A U = U^{-1}A^*A U\;, \]
		which is unitarily equivalent to~$A^*A$ and thus has the same eigenvalues showing that
		\[ \|A\|_q = \|U^{-1}AU\|_q \:. \]
		In particular, in the case~$q=1$ this shows that the trace norm of~$A$ is conserved under unitary transformation. 
		\QEDrem}}
	\end{Remark}

Moreover, we will frequently use the following function norms (see for example \cite[p. 5-6]{sobolev-pseudo} with slight modifications)

\begin{Def}
	Let~$S^{(n,m,k)}(\R^d)$ with~$m,n,k \in \N_0$ be the space of all complex-valued functions on~$(\R^d)^3$, which are continuous, bounded and continuously partially differentiable in the first variable up to order~$n$, in the second to~$m$ and in the third to~$k$ and whose partial derivatives up to these orders are bounded as well. For~$a \in S^{(n,m,k)}(\R^d)$ and~$l,r>0$ we introduce the norm
	\[ N^{(n,m,k)}(a;l, r ) := \max_{\substack{0\leq \tilde{n} \leq n\\ 0\leq \tilde{m} \leq m \\ 0\leq \tilde{k} \leq k }} \sup_{\bx,\by,\bxi} l^{\tilde{n}+\tilde{m}}r^{\tilde{k}} \big| \nabla_{\bx}^{\tilde{n}}\nabla_{\by}^{\tilde{m}} \nabla_{\bxi}^{\tilde{k}}a(\bx,\by,\bxi) \big|\:. \]
	Similarly, $S^{(n,k)}(\R^d)$ with~$n,k \in \N_0$ denotes the space of all complex-valued functions on~$(\R^d)^2$, which are continuous and bounded and continuously partially differentiable in the first variable up to order~$n$ and in the second to~$k$ and whose partial derivatives up to these orders are bounded. For~$a \in S^{(n,m)}(\R^d)$
	and~$l,r>0$ we introduce the norm
	\[ N^{(n,k)}(a;l, r ) := \max_{\substack{0\leq \tilde{n} \leq n\\  0\leq \tilde{k} \leq k }} \sup_{\bx,\bxi} l^{\tilde{n}}r^{\tilde{k}} \big| \nabla_{\bx}^{\tilde{n}} \nabla_{\bxi}^{\tilde{k}}a(\bx,\bxi) \big|\:. \]
	Finally, by~$S^{(k)}(\R^d)$ with~$k\in \N_0$ we denote the space of all complex-valued functions on~$\R^d$, which are continuous and bounded and continuously partially differentiable up to order~$k$ and whose partial derivatives up to these orders are bounded. For~$a \in S^{(k)}(\R^d)$
	and~$r>0$ we introduce the norm
	\[ N^{(k)}(a; r ) := \max_{ 0\leq \tilde{k} \leq k } \sup_{\bxi} r^{\tilde{k}} \big| \nabla_{\bxi}^{\tilde{k}}a(\bxi) \big|\:. \]
\end{Def}
Note that any function~$a\in S^{(n,k)}$ may be interpreted as element of in~$S^{(n,m,k)}(\R^d)$ for any~$m\in \N_0$ by the identification
\[ a(\bx,\by,\bxi)\equiv a(\bx,\bxi) \qquad \text{for all } \by\in \R^d\:. \]
Then, for any~$l,r>0$ one has
\[ N^{(n,m,k)}(a;l,r) = N^{(n,k)}(a;l,r)\:. \]

\subsubsection{Estimates on $q$-normed ideals}
In order to generalize a theorem by Widom (Theorem~\ref{ThmWidom}) later on, we need a few estimates, which we state here.
Since Theorem~\ref{ThmWidom} admits only smooth functions, the next two lemmata will allow us to extend it to certain functions which do not need to be differentiable everywhere.

\begin{Lemma}\cite[Cor.~2.11, Cond.~2.9, Thm.~2.10]{sobolev-functions} \\
	\label{TechLemmaNonSmt}%
	Let~$q,r>0$ parameters, $n\geq 2$ a natural number and~$f \in C^n_0(-r,r)$. Let~$\SN \subset \Lin( \mathfrak{H})$
	(with a Hilbert space~$\mathfrak{H}$) be a~$q$-normed ideal such that there is~$\sigma \in (0,1]$ with
	\[ (n-\sigma)^{-1} <q\leq1\:.\]
	Moreover, consider a self-adjoint operator~$A$ on~$D(A)\subseteq \mathfrak{H}$ and a projection
	operator~$P$ such that~$PD(A) \subseteq D(A)$ and~$|PA(\id-P)|^\sigma \in \SN$ and~$PA(\id-P)$ extends to a bounded operator. Then
	\[
	\| f(PAP)P-Pf(A) \|_\SN \lesssim \max_{0 \leq k \leq n} \Big( r^k\|f^{(k)}\|_{L^\infty} \Big)\:
	r^{-\sigma} \:\big\| |PA(\id-P)|^\sigma \big\|_\SN
	\]
	with an implicit constant independent of~$A$, $P$ and~$f$.
\end{Lemma}

We will usually apply this and the following lemma to operators of the form 
\[A=\Opa(\CA),\qquad P = \chi_\Lambda\:, \]
for some subset~$\Lambda\subseteq \R^d$ and with~$\SN$ as the $q^\text{th}$ Schatten class.

In preparation for the next lemma we introduce the following condition.

\begin{cond}\cite[Condition~2.3]{sobolev-functions} 
	\label{cond:f4}
	Let~$n\in \N$ and~$f \in C^n(\R\setminus \{t_0\})\cap C(\R)$ a function with~$t_0\in \R$, such that there exist~$\gamma, R>0$ with
	\begin{align}
		\label{def:bl}
		\bl f \bl_n := \max_{0 \leq k \leq n}\sup_{t \neq t_0}|f^{(k)}(t)|\cdot |t-t_0|^{-\gamma+k}< \infty \:,
	\end{align}
	and~$\supp f \subseteq [t_0-R,t_0+R]$.
\end{cond}


\begin{Lemma}\cite[Theorem~2.10]{sobolev-functions}\\
	\label{TechLemmaNonDiff}
	Let~$f$ satisfy Condition~\ref{cond:f4} for some~$n\geq 2$ and~$\gamma, R >0$.
	Let~$\SN$, $A$ and~$P$ as in Lemma~\ref{TechLemmaNonSmt} with~$\sigma < \gamma$, then
	\[ \big\| f(PAP)P-Pf(A) \big\|_\SN \lesssim \bl f\bl_n \:R^{\gamma-\sigma} \:\big\| |PA(\id- P)|^\sigma \big\|_\SN\:, \]
	with an implicit constant independent of~$A$, $P$, $f$ and~$R$.
\end{Lemma}

\begin{Example}
	{\em{
			Ultimately we want to apply Lemma~\ref{TechLemmaNonDiff} to~$\eta_\kappa$ (times a cutoff function). Therefore we introduce the function
			\[ f(x): =\eta_\kappa(x)\,\Phi_1(x) \:, \Forany x \in \R \:,\]
			with a smooth non-negative function~$\Phi_1$ such that 	
			\[  
			\supp \Phi_1 =\Big[-\frac{3}{4},\:\frac{3}{4}\Big] \qquad \text{and} \qquad \Phi_1\big|_{\left[-\frac{1}{2},\:\frac{1}{2}\right]} \equiv 1  \:.
			\]
			Note that~$f$ then satisfies the conditions of Lemma~\ref{TechLemmaNonDiff} for any~$n\in \N$,  $\gamma<1$ arbitrary, $R=3/4$ and~$x_0=0$ . 
			
			This gives an idea how Lemma~\ref{TechLemmaNonDiff} can be used to estimate the error in Theorem~\ref{ThmWidom} caused by functions like~$\eta_\kappa$, which are not differentiable everywhere.
			\QEDrem}}
\end{Example}      

\subsubsection{Estimates of Pseudo-Differential operators}
Here we list a few previously establishes estimates on pseudo-differential operators, which we will use later on.
The first lemma shows that~$\Opa(a)$ is bounded with respect to the operator norm uniformly in~$\alpha$ as long as~$a\in S^{(n,m,k)}(\R^d)$.

\begin{Lemma}\cite[Lemma 3.9]{sobolev-pseudo}\label{OpaBounded} (adapted to our notation)
	Let
	$a\in S^{(k,k,d+1)}(\R^d)$ be a symbol such that~$\Opa(a)$ is well defined and its integral representation extends to all Schwartz functions. We choose~$k := \lfloor d/2 \rfloor +1$, $l_0>0$ and~$l, r >0$ such that~$\alpha l r \geq l_0$. Then
	\[ \| \Opa(a) \|_{\infty} \lesssim N^{(k,k,d+1)}(a;r,l)\:, \]
	with an implicit constant only depending on~$d, r$ and~$l_0$.
\end{Lemma}

The next corollary helps us to estimate the error caused by interchanging characteristic functions in position and momentum space.
\begin{Corollary}\cite[Corollary 4.7]{sobolev-schatten}(case~$d=1$)
	\label{Cor4.7}
	For any two open bounded intervals~$K,J$ as well as numbers~$q \in (0,1]$ and~$\alpha \geq 2$,
	the following estimate holds,
	\[ \| \chi_K P_{J,\alpha} (1-\chi_K)\|_q \lesssim (\log\alpha)^{1/q} \:, \]
	with a constant independent of~$\alpha$.
\end{Corollary}

The next proposition gives an estimate for terms of the form~$\chi_{\Lambda} \Opa(a)(1-\chi_\Lambda)$, which we will come up when applying Lemmata~\ref{TechLemmaNonSmt} or~\ref{TechLemmaNonDiff}. 
In order to state it we first need to introduce the following condition.
\begin{cond}\cite[Condition~3.1]{leschke-sobolev-spitzer}
	\label{cond:domain2}
	For~$d\geq 1$ the set~$\Lambda \subset \R^d$, satisfies one of the following requirements:
	\begin{itemize}
		\item[(1)] If~$d=1$, then~$\Lambda$ is a finite union of open intervals (bounded or unbounded) such that their closures are pair-wise disjoint.
		\item[(2)] If~$d \geq 2$, then~$\Lambda$ is  a Lipschitz region
		(i.e.\ an open set whose boundary is locally Lipschitz), and either~$\Lambda$ or~$\R^d \setminus \Lambda$ is bounded.
	\end{itemize}
\end{cond}

\begin{Proposition}\label{prop:cross_smooth} \cite[Proposition~3.2]{leschke-sobolev-spitzer}(Adapted to the cases needed and our notation)
	Let the region~$\Lambda\subset \R^d$ satisfy Condition~\ref{cond:domain2} and let~$\alpha_0>0$ be a constant. Let~$q \in (0,1]$ and
	\[
	\tilde{m} := \lceil (d+1)q^{-1} \rceil +1 \:.
	\]
	Let~$a$ be a scalar-valued symbol only depending on~$\bxi$, i.e.\ $a(\bx,\by,\bxi)\equiv a(\bxi)$ with support contained in~$B_\tau(\bmu)$ for some~$\bmu \in \R^d$ and~$\tau >0$. Assume that~$a \in S^{(\tilde{m})}(\R^d)$ and~$\Opa(a)$ is well defined with integral representation extending to all Schwartz functions. Then for any~$\alpha \tau \geq \alpha_0$,
	\[
	\| \chi_{\Lambda} \Opa(a)(1-\chi_\Lambda) \|_q \lesssim (\alpha \tau)^{\frac{d-1}{q}}N^{(\tilde{m})}(a;\tau)\:, 
	\]
	with implicit constants independent of~$a,\alpha, \tau$ and~$\bmu$.
\end{Proposition}

\begin{Proposition}\cite[Proposition 3.8 and p.~17]{sobolev-pseudo} with reference to \cite[Theorem 11.1]{birman-solomjak2}, \cite[Section 5.8]{birman-karadzov} and \cite[Theorem 4.5]{simon2005}
	\label{prp72} \\
	For~$\bz \in \Z$ set~$\mathscr{C}_{\bz} := \bz+(0,1]^d$ and for~$\sigma \in (0,\infty)$ and~$g \in L^2_{\loc}(\R^d)$ 
	\[ |g|_\sigma := \bigg[ \sum_{\bz \in \Z} \bigg( \int_{\mathscr{C}_{\bz}} |g(x)|^2 dx \bigg)^{\sigma/2} \:\bigg]^{1/\sigma}\:. \]
	Then, given functions~$a, h\in L^2_{\loc}(\R)$ with~$|a|_\sigma, |h|_\sigma<\infty$ for some~$\sigma \in (0,2)$,
	it follows that~$h \,\mathrm{Op}_1(a) \in \SN_\sigma$ (with integral representation extending to all Schwartz functions) and
	\[ \big\| h\: \mathrm{Op}_1 (a) \big\|_\sigma \leq C \:|h|_\sigma \:|a|_\sigma \:.  \]
\end{Proposition}

\section{The Regularized Projection Operator} \label{secregproj}
\subsection{Definition and  Basic Properties} \label{SecBlockForm}
As previously mentioned, the entropy is computed using the mode-wise regularized projection operator
to the negative frequency space~$(\Pi_-^\varepsilon)_{kn}$.
This operator emerges from~$e^{-itH_{kn}}$ from Section~\ref{Sec Hamiltonian} by setting~$t=i\varepsilon$ (the ``$i\varepsilon$"-regularization) and restricting to the negative frequencies.  Similar as explained in Section~\ref{SecCon}, for operators of this form it suffices to consider the corresponding operator for one angular mode~$(\Pi_-^\varepsilon)_{kn}$.
So more precisely, for any~$X \in C^\infty_0(\R, \C^2)$ the operator~$(\Pi_-^\varepsilon)_{kn}$ is defined by
\beq
	\label{Pi reg Def}
	\big( (\Pi_-^\varepsilon)_{kn}\: X \big) (x) := \frac{1}{\pi} \int_{-\infty}^0 d\omega \:e^{\varepsilon \omega} \sum_{a,b =1}^{2} t_{ab}^{kn\omega} X_a^{kn\omega} (x) \la X_b^{kn\omega} | X \ra \:,
\eeq
for any~$x \in \R$.

Since in this section we focus on one angular mode, we will drop the superscripts~$kn$ on the functions~$X^{kn\omega}_a$ and~$t^{kn\omega}_{ab}$. Moreover, we will sometimes write the~$\omega$-dependence of~$X^{kn\omega}_{a}$ or~$t^{kn\omega}_{ab}$ as an argument, i.e.\
\[ X^{kn\omega}_a(u) \equiv X^{\omega}_a(u)\equiv  X_a(u,\omega) \qquad \textit{for any }u \in \R \:.\]

The asymptotics of the radial solutions at the horizon (Lemma~\ref{X_at_hor}) yield the following boundedness properties for the functions~$X^{\omega}_a$:
\begin{Remark}
	\label{X Bounded}
{\em{Given~$u_2 \in \R$ and a constant~$C>0$, we consider measurable  functions~$X, Z : \R \rightarrow \C^2$
with the properties
\[ \supp X,\: \supp Z \subset (-\infty, u_2] \qquad \text{and} \qquad \|X\|_\infty, \|Z\|_\infty < C \:. \]
Then the estimate in Lemma~\ref{X_at_hor}  yields
	\[ \sum_{a,b}|t_{ab}^{\omega}|  \:|X(u)|\: |X_a^{\omega}(u,\omega)|\: |X_b^{\omega}(u',\omega)|\: |Z(u')| \leq 2 C^2 \,(1+ ce^{du})
	\,(1+ ce^{du'})  \:,  \]
	for almost all~$u,u',\omega \in \R$ and with constants~$c,d$ only depending on~$k,n$~and~$u_2$.
	
	If we assume in addition that~$X$ and~$Z$ are compactly supported, then
	for any~$g\in L^1(\R)$ the Lebesgue integral
	\[ \int_{-\infty}^\infty du \int_{-\infty}^\infty \frac{d\omega}{\pi}\:g(\omega)\int_{-\infty}^\infty du' \:t_{ab}^{\omega}\:X^\dagger(u)\: X_a^{\omega}(u,\omega)\: X_b^{\omega}(u',\omega)^\dagger\: Z(u')\:,  \]
	is well-defined. Moreover, applying Fubini we may interchange the order of integration arbitrarily. \QEDrem}}
\end{Remark}

Furthermore, we will need the following technical Lemma, which tells us that testing with smooth and compactly supported functions suffices to determine if a function is in~$L^2$ and to estimate its $L^2$-norm:
\begin{Lemma} 	\label{Lemma Testing} Let~$N$ be a manifold with integration measure~$\mu$.
Given a function~$f\in L^1_{\text{\rm{loc}}}(N, \C^n)$ (with~$n \in \N$), 
we assume that the corresponding functional on the test functions
\[ \Phi \::\: C^\infty_0(N, \C^n) \rightarrow \C \:,\quad v \mapsto \int_N \la v(x) \: | \: f(x) \ra_{\C^n}\; d\mu(x) \]
is bounded with respect to the $L^2$-norm, i.e.\
\[ \big| \Phi(v) \big| \leq C\: \|v\|_{L^2(N, \C^n)} \qquad \text{for all~$v \in C^\infty_0(N, \C^n)$}\:. \]
Then~$f \in L^2(N, \C^n)$ and~$\|f\|_{L^2(N, \C^n)} \leq C$.
\end{Lemma}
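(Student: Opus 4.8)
The plan is to exploit the duality between $L^2$ and itself, using the density of $C^\infty_0(N,\C^n)$ in $L^2(N,\C^n)$. First I would observe that the hypothesis says precisely that $\Phi$ is a bounded conjugate-linear (or linear, depending on the convention for $\la\cdot|\cdot\ra_{\C^n}$) functional on the dense subspace $C^\infty_0(N,\C^n)\subset L^2(N,\C^n)$, with operator norm $\le C$. By continuity it extends uniquely to a bounded functional $\overline{\Phi}$ on all of $L^2(N,\C^n)$ with $\|\overline{\Phi}\|\le C$. The Riesz representation theorem then furnishes a unique $g\in L^2(N,\C^n)$ with $\overline{\Phi}(v)=\la v\,|\,g\ra_{L^2}$ for all $v\in L^2$, and $\|g\|_{L^2}=\|\overline{\Phi}\|\le C$.

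The remaining point is to identify $g$ with the given $f$. For this I would test against $v\in C^\infty_0(N,\C^n)$: by construction $\int_N \la v(x)\,|\,g(x)\ra_{\C^n}\,d\mu(x)=\overline{\Phi}(v)=\Phi(v)=\int_N \la v(x)\,|\,f(x)\ra_{\C^n}\,d\mu(x)$, hence
\[
\int_N \la v(x)\,\big|\,(f-g)(x)\ra_{\C^n}\,d\mu(x)=0\qquad\text{for all }v\in C^\infty_0(N,\C^n)\:.
\]
Since $f\in L^1_{\mathrm{loc}}$ and $g\in L^2\subset L^1_{\mathrm{loc}}$, the difference $f-g$ is in $L^1_{\mathrm{loc}}(N,\C^n)$, and the fundamental lemma of the calculus of variations (du Bois-Reymond) on the manifold $N$ gives $f-g=0$ $\mu$-almost everywhere. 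Therefore $f=g\in L^2(N,\C^n)$ and $\|f\|_{L^2}=\|g\|_{L^2}\le C$, as claimed.

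The only genuinely technical ingredient is the vanishing lemma $\big(\int \la v|h\ra=0\ \forall v\in C^\infty_0\big)\Rightarrow h=0$ a.e.\ for $h\in L^1_{\mathrm{loc}}$; on $\R^d$ this is standard via mollification, and on a manifold one reduces to this by a partition of unity subordinate to coordinate charts, noting that the components can be treated separately by choosing $v$ supported in one chart with values along one coordinate direction of $\C^n$. I expect this reduction-to-charts step to be the main (though still routine) obstacle; everything else is the textbook Riesz argument. One should also remark that the extension in the first step does use density of test functions in $L^2$, which holds for any (second countable, say) manifold with a Radon measure $\mu$, the setting implicitly in force here.
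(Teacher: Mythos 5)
Your proposal is correct and follows essentially the same route as the paper's own proof: extend $\Phi$ by continuity to $L^2$, apply the Fr\'echet--Riesz representation theorem to obtain an $L^2$-representative with norm at most $C$, and identify it with $f$ via the fundamental lemma of the calculus of variations. The additional remarks on reducing the vanishing lemma to charts via a partition of unity are a sensible (if routine) elaboration that the paper leaves implicit.
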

\Proof Being bounded, the functional~$\Phi$ can be extended continuously to~$L^2(N, \C^n)$.
The Fr{\'e}chet-Riesz theorem makes it possible to represent this functional by an $L^2$-function~$\hat{f}$
i.e.\ $\|\hat{f}\| _{L^2(N, \C^n)} \leq C$ and
\[ \int_N \la v(x), \big( f(x) - \hat{f}(x) \big) \ra_{\C^n}\; d\mu(x) = 0 
\qquad \text{for all~$v \in C^\infty_0(N, \C^n)$}\:. \]
The fundamental lemma of the calculus of variations (for vector-valued functions on a manifold)
yields that~$f=\hat{f}$ almost everywhere.
\QED
Now we have all the tools to prove the boundedness of the operator~$(\Pi_-^\varepsilon)_{kn}$.
\begin{Lemma}\label{lem:Pi-epsBounded}
	Equation~\eqref{Pi reg Def} defines a continuous endomorphism~$(\Pi_-^\varepsilon)_{kn}$ on~$L^2(\R, \C^2)$ with operator norm
	\[ \| (\Pi_-^\varepsilon)_{kn} \|_\infty \leq 1\:. \]
\end{Lemma}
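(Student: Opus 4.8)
The plan is to show that $(\Pi_\eps^-)_{kn}$, initially defined on the dense subspace $C^\infty_0(\R,\C^2)$ by the integral formula \eqref{Pi reg Def}, is bounded in the $L^2$-operator norm by $1$, and then extend it by continuity to all of $L^2(\R,\C^2)$. The natural route is to estimate $\big| \la Z \,|\, (\Pi_\eps^-)_{kn} X \ra \big|$ for $X, Z \in C^\infty_0(\R,\C^2)$ and conclude $\|(\Pi_\eps^-)_{kn} X\|_{L^2} \le \|X\|_{L^2}$ via the converse of Cauchy–Schwarz, using Lemma~\ref{Lemma Testing} to handle the fact that a priori we only know $(\Pi_\eps^-)_{kn} X \in L^1_{\loc}$.

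First I would check, using Remark~\ref{X Bounded}, that for $X, Z \in C^\infty_0(\R,\C^2)$ the double space integral together with the $\omega$-integral is absolutely convergent (here $g(\omega) = \chi_{(-\infty,0)}(\omega)\, e^{\eps\omega} \in L^1$), so that Fubini applies and
\[
\la Z \,|\, (\Pi_\eps^-)_{kn} X \ra
= \frac{1}{\pi}\int_{-\infty}^0 d\omega\; e^{\eps\omega} \sum_{a,b=1}^2 t_{ab}^\omega\; \overline{\la X_a^\omega | Z\ra}\; \la X_b^\omega | X\ra\:.
\]
The key structural input is that the matrix $(t_{ab}^\omega)_{a,b=1,2}$ is, for each fixed $\omega$, a positive semidefinite $2\times2$ matrix with trace $\le 1$: for $|\omega|\le m$ it is $\mathrm{diag}(1,0)$, and for $|\omega|>m$ it has diagonal entries $1/2,1/2$, off-diagonal $t_{12}^\omega$ with $|t_{12}^\omega|\le 1/2$ and $t_{21}^\omega = \overline{t_{12}^\omega}$, hence eigenvalues $\tfrac12 \pm |t_{12}^\omega| \in [0,1]$. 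Writing $v_a(\omega) := \la X_a^\omega | X\ra$ and $w_a(\omega) := \la X_a^\omega | Z\ra$, positivity of $(t_{ab}^\omega)$ gives the pointwise-in-$\omega$ Cauchy–Schwarz-type bound
\[
\Big| \sum_{a,b} t_{ab}^\omega\, \overline{w_a(\omega)}\, v_b(\omega) \Big|
\le \Big( \sum_{a,b} t_{ab}^\omega\, \overline{v_a}\, v_b \Big)^{1/2}
\Big( \sum_{a,b} t_{ab}^\omega\, \overline{w_a}\, w_b \Big)^{1/2}\:.
\]
Applying Cauchy–Schwarz once more in the $\omega$-integral, I obtain
\[
\big|\la Z \,|\, (\Pi_\eps^-)_{kn} X \ra\big|
\le \Big( \tfrac1\pi \!\int_{-\infty}^0\! e^{\eps\omega} \textstyle\sum t_{ab}^\omega \overline{v_a} v_b \,d\omega\Big)^{1/2}
\Big( \tfrac1\pi \!\int_{-\infty}^0\! e^{\eps\omega} \textstyle\sum t_{ab}^\omega \overline{w_a} w_b \,d\omega\Big)^{1/2}
= \la X | (\Pi_\eps^-)_{kn} X\ra^{1/2}\, \la Z | (\Pi_\eps^-)_{kn} Z\ra^{1/2}\:,
\]
so it suffices to prove $\la X | (\Pi_\eps^-)_{kn} X\ra \le \|X\|_{L^2}^2$.

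The remaining ingredient, and what I expect to be the main obstacle, is to show that the operator built from the \emph{full} frequency integral (without the cutoff $\chi_{(-\infty,0)} e^{\eps\omega}$, i.e.\ with weight $\equiv 1$) equals the identity when suitably interpreted — this is the completeness/Plancherel statement underlying the integral representation of Theorem~\ref{thmintrep}. Indeed Theorem~\ref{thmintrep} with $t=0$ says precisely that $X_0 = \tfrac1\pi \int_{-\infty}^\infty d\omega \sum_{a,b} t_{ab}^\omega X_a^\omega \la X_b^\omega | X_0\ra$ for $X_0 \in C^\infty_0$, which formally means $\tfrac1\pi \int d\omega \sum t_{ab}^\omega |X_b^\omega\ra\la X_a^\omega| = \id$, and hence the corresponding quadratic form is $\|X\|_{L^2}^2$. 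Since the integrand $\tfrac1\pi e^{\eps\omega}\chi_{(-\infty,0)}(\omega) \sum_{a,b} t_{ab}^\omega \overline{v_a(\omega)} v_b(\omega)$ is pointwise nonnegative (positivity of $(t_{ab}^\omega)$ again) and bounded above by the full integrand $\tfrac1\pi \sum_{a,b} t_{ab}^\omega \overline{v_a} v_b$ (as $e^{\eps\omega}\chi_{(-\infty,0)} \le 1$), monotonicity of the integral yields $\la X | (\Pi_\eps^-)_{kn} X\ra \le \|X\|_{L^2}^2$. The delicate point is making the identity $\tfrac1\pi\int d\omega \sum t_{ab}^\omega \overline{v_a}v_b = \|X\|^2$ rigorous rather than merely formal: one reads off $\la X | (\Pi_{\mathrm{full}})_{kn} X\ra = \la X | X_0\ra = \|X\|_{L^2}^2$ directly by pairing Theorem~\ref{thmintrep} (at $t=0$) against $X_0=X$ and invoking Fubini (justified as in Remark~\ref{X Bounded}), which sidesteps any interchange of limit and integral. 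Finally, having established $\|(\Pi_\eps^-)_{kn}X\|_{L^2} \le \|X\|_{L^2}$ on $C^\infty_0$ — where the left side is first controlled via Lemma~\ref{Lemma Testing}, taking $C = \|X\|_{L^2}$ and $v$ ranging over test functions — the operator extends uniquely to a bounded endomorphism of $L^2(\R,\C^2)$ with $\|(\Pi_\eps^-)_{kn}\|_\infty \le 1$.
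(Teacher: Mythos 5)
Your reduction to the quadratic form is a genuinely different route from the paper's, and the parts you spell out are correct: $(t_{ab}^{\omega})$ is indeed positive semidefinite with eigenvalues in $[0,1]$ (for $|\omega|\le m$ it is $\mathrm{diag}(1,0)$; for $|\omega|>m$ it is Hermitian with trace $1$ and determinant $\tfrac14-|t_{12}^{\omega}|^2\ge 0$), and the two applications of Cauchy--Schwarz legitimately reduce the claim to $\la X\,|\,(\Pi_\eps^-)_{kn}X\ra\le\|X\|^2$. The paper does none of this: it simply quotes the estimate~\eqref{Prop Norm Est} from the proof of Theorem~3.6 in~\cite{tkerr}, namely that $\int_{\R}\frac{d\omega}{\pi}\,\big|\sum_{a,b}t_{ab}^{\omega}\la X|X_a^{\omega}\ra\la X_b^{\omega}|Z\ra\big|\le\|X\|\,\|Z\|$, after which the bound is immediate because $e^{\eps\omega}\chi_{(-\infty,0)}(\omega)\le1$, followed by Lemma~\ref{Lemma Testing} exactly as you use it.

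The gap is in your completeness step. To identify $\frac1\pi\int_{\R}\sum_{a,b}t_{ab}^{\omega}\overline{v_a(\omega)}v_b(\omega)\,d\omega$ (with $v_a(\omega)=\la X_a^{\omega}|X\ra$) with $\|X\|^2$, you must interchange the $u$-integration coming from pairing Theorem~\ref{thmintrep} against $X$ with the $\omega$-integration over \emph{all} of $\R$. You justify this ``as in Remark~\ref{X Bounded}'', but that remark only licenses Fubini for weights $g\in L^1(\R)$: its pointwise bound on the integrand is uniform in $\omega$, so with $g\equiv1$ absolute convergence of the triple integral is not established. The missing ingredient --- absolute integrability in $\omega$ of $\sum_{a,b}t_{ab}^{\omega}\overline{v_a}v_b$ over all of $\R$, which for $X\in C^\infty_0$ comes from decay of $\la X_b^{\omega}|X\ra$ in $\omega$ --- is precisely the content of~\eqref{Prop Norm Est}; once you grant it, the positivity machinery is superfluous, and without it your identity (or even the inequality $\le\|X\|^2$, which is all you need) remains formal. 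The natural repair within your scheme, truncating to $g_R=\chi_{(-R,R)}$ (where Remark~\ref{X Bounded} does apply) and using monotone convergence of the nonnegative integrand, still requires bounding $\la X|\Pi_{g_R}X\ra$ by $\|X\|^2$, i.e.\ identifying $\Pi_{g_R}$ with the spectral projection $\chi_{(-R,R)}(H_{kn})$ --- which the paper only establishes later (Proposition~\ref{Prop g(H)}) and again via~\eqref{Prop Norm Est}. As proposed, the argument is therefore either incomplete or circular at this one point.
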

\begin{proof}
	Let~$X,Z\in C^\infty_0(\R,\C^2)$ be arbitrary. We apply~$(\Pi_-^\varepsilon)_{kn}$ to~$X$ and test with~$Z$, i.e.\ consider
	\[ \Big\la Z \: \Big| \: \frac{1}{\pi} \int_{-\infty}^0 d\omega \: e^{\varepsilon \omega} \: \sum_{a,b =1}^2 t_{ab}^{\omega} \: X_a(u,\omega)  \: \big\la X_b^\omega\: \big|\: X \big\ra \Big \ra=:(*)\:. \]
	Applying Remark~\ref{X Bounded}, we may interchange integrations such that
	\[ (*) = \frac{1}{\pi} \int_{-\infty}^0 d\omega \: e^{\varepsilon \omega} \: \sum_{a,b =1}^2 t_{ab}^{\omega} \: \big\la Z \: \big| \: X_a^\omega \big\ra  \: \big\la X_b^\omega \: \big|\: X \big\ra \:.  \]
	Moreover, from \cite[proof of Theorem~3.6]{tkerr} we obtain the estimate
	\begin{flalign}
		\label{Prop Norm Est}
		\int_{-\infty}^\infty \frac{d\omega}{\pi} \:\bigg| \sum_{a,b =1}^2 t_{ab}^{\omega}\:
		\big\la X\:\big| \:X^{ \omega}_a\big\ra \big\la X^{\omega}_b\:\big|\:Z \big\ra  \bigg| \leq \| X \| \|Z \|\:,
	\end{flalign}
	which yields
	\[ |(*) | \leq \| X \| \|Z\| \:. \]
	Now by Lemma~\ref{Lemma Testing} we conclude that
	\[ (\Pi_-^\varepsilon)_{kn}X \in L^2(\R,\C^2)\qquad \text{and} \qquad \| (\Pi_-^\varepsilon)_{kn}X \| \leq \|X\| \:. \]
	This estimate shows that~$(\Pi_-^\varepsilon)_{kn}$ extends to a continuous endomorphism on~$L^2(\R,\C^2)$ with operator norm~$\| (\Pi_-^\varepsilon)_{kn} \|_\infty \leq 1$.
\end{proof}

\subsection{Functional Calculus for~$H_{kn}$}
In order to derive some more properties of~$(\Pi_-^\varepsilon)_{kn}$ we need to employ the functional calculus of~$H_{kn}$, as we want to rewrite
\[ (\Pi_-^\varepsilon)_{kn} = g(H_{kn}) \:,\]
for some suitable function~$g$.


The following two propositions constitute the main result of this section.
\begin{Proposition}
	\label{Prop g(H)1}
	Let~$g \in L^1(\R) \cap L^\infty(\R)$ be a real valued function.
	Then for any~$X \in C^\infty_0(\R,\C^2)$, the operator~$g(H_{kn})$ has the integral representation
	\begin{flalign}
		\label{g(H)-X}
		\big( g(H_{kn}) X\big)(u) 
		= \int_{-\infty}^{\infty} \frac{d\omega}{\pi} \: g(\omega) \int_{-\infty}^{\infty} du' \sum_{a,b=1}^{2} t_{ab}^{\omega}\: X_a(u,\omega) \:\big\la X_b(u',\omega) \:\big| \: X(u')\big\ra_{\C^2}\:,
	\end{flalign}  
	valid for almost any~$u\in \R$. Moreover, for any~$Z\in C^\infty_0(\R,\C^2)$,
	\begin{flalign}
		\label{phi g(H) psi eq}
		\la Z \,|\, g(H)X \ra  =\int_{-\infty}^{\infty} \frac{d\omega}{\pi} \:g(\omega) \sum_{a,b =1}^2 t_{ab}^{\omega}\:
		\la Z | X_a^{\omega} \ra  \la X_b^{\omega}| X \ra\:,
	\end{flalign}
\end{Proposition}

\begin{Proposition}
	\label{Prop g(H)2}
	Let~$g \in L^1(\R) \cap L^\infty(\R)$ be a real valued function.
	Then the operator~$g(H_{kn})$ has the following properties:
	\bitem
	\item[{\rm{(i)}}] The operator norm of~$g(H_{kn})$ is bounded, namely
	\[  \| g(H_{kn})\|_{\infty} \leq \|g\|_{L^\infty}\:. \]
	\item[{\rm{(ii)}}] The operator~$g(H_{kn})$ is self-adjoint.
	\eitem
\end{Proposition}

\begin{proof}[Proof of Proposition~\ref{Prop g(H)1}] We proceed in two steps. \\[0.5em]
	\underline{First step: Proof for~$g \in C^\infty_0(\R)$}:
	Since the Fourier transform is a bijection on the Schwartz space, for any~$g\in C^\infty_0(\R)$ there is a function~$\hat{g} \in S(\R)$ such that
	\[ g(\omega) = \int \hat{g}(t) \:e^{-i\omega t} \:dt \qquad  \mathrm{for\;any\;}\omega \in \R\:. \]
	We evaluate the right hand side of~\eqref{g(H)-X} for~$X\in C^\infty_0(\R,\C^2)$ arbitrary.
	Note that, when testing this with some~$Z \in C^\infty_0(\R,\C^2)$, we may interchange the~$u$-
	and~$\omega$-integrations due to an argument similar as in Remark~\ref{X Bounded}
	We thus obtain
	\begin{flalign*}
		\Big\la & Z \:\Big|\: \frac{1}{\pi} \int g(\omega) \sum_{a,b =1}^2 t_{ab}(\omega)\:X_a^{\omega}\: \la X_b^{\omega} | X \ra \: d\omega \Big\ra \\
		&=  \frac{1}{ \pi}\int g(\omega) \sum_{a,b =1}^2 t_{ab} (\omega) \:\big\la Z \:\big|\:X_{a}^\omega \big\ra\: \big\la X _{b}^\omega\:\big|\: X \big\ra \:d\omega  \\
		&= \frac{1}{\pi} \int d\omega \:\bigg( \int dt\: \hat{g}(t) \:e^{-it\omega} \bigg) \sum_{a,b =1}^2 t_{ab} (\omega)\:\big\la Z \:\big|\: X_{a}^\omega \big\ra\: \big\la X_{b}^\omega\:\big|\: X \big\ra =: (*) \:.
	\end{flalign*}
	Using the rapid decay of~$\hat{g}$ together with~\eqref{Prop Norm Est}, we can make use of the Fubini-Tonelli theorem which leads to
	\begin{flalign*}
		(*)=& \frac{1}{ \pi} \int dt\: \hat{g}(t) \int d\omega\: e^{-it\omega} \sum_{a,b =1}^2 t_{ab}(\omega)\:
		\big\la Z\: \big|\: X_{a}^\omega \big\ra \:\big\la X_{b}^\omega\:\big|\: X \big\ra \:.
	\end{flalign*}
	It is shown in~\cite{tkerr} that
	\[ (*)= \int\hat{g}(t) \:\big\la Z \:\big|\: e^{-itH_{kn}}  X \big\ra \:dt \:. \]
	Now we can again apply Fubini's theorem due to the rapid decay of~$\hat{g}$ and the boundedness of the operator~$e^{-itH_{kn}}$ (which follows from~\eqref{Prop Norm Est}), leading to
	\[ (*) = \Big\la  Z \:\Big|\: \bigg(\int \hat{g}(t) \:e^{-itH_{kn}} dt \bigg) \: X \Big\ra \:. \]
	Next we use the multiplication operator version of the spectral theorem to rewrite~$H_{kn}$ as
	\[ H_{kn} = U \: f\: U^{-1} \:,\]
	with a suitable unitary operator~$U$ and a Borel function~$f$ on the corresponding measure space~$\big(\sigma(H_{kn}) ,\Sigma, \mu\big)$. Then
	\[ e^{-itH_{kn}} = U \:e^{-itf}\: U^{-1}\:, \]
	and thus for any~$\tilde{X} \in L^2(\R, \C^2)$ and almost any~$x\in A$ it holds that
	\[\Big(  \Big(\int \hat{g}(t) e^{-itf}dt\Big) U^{-1} \tilde{X}\Big) (x)  = \Big(\int \hat{g}(t) e^{-itf(x)} dt\Big)\: (U^{-1} \tilde{X})(x) = g\big(f(x)\big)\: (U^{-1} \tilde{X})(x) \:, \]
	which leads to
	\[ (*) = \big\la  Z\:\big|\:U (g \circ f)\: U^{-1}  X\big\ra = \big\la  Z \: \big|\: g(H_{kn})  X \big\ra\:. \]
	Thus we conclude that for any~$X,Z \in C^\infty_0(\R\times S^2, \C^4)$,
	\[ \Big\la Z \:\Big|\: \frac{1}{\pi}\int_{-\infty}^\infty g(\omega) \sum_{a,b =1}^2 t_{ab}^{\omega} \:X_a^{\omega} \:\la X_b^{\omega} | X \ra \Big\ra = \big\la Z\:\big|\: g(H_{kn})X \big\ra \:.  \]
	Then Lemma~\ref{Lemma Testing} (together with similar estimates as before) yields that
	\[\int_{-\infty}^\infty g(\omega) \sum_{a,b =1}^2 t_{ab}^{\omega}\: X_a^{\omega}(.) \:\la X_b^{\omega} | X \ra\in L^2(\R \times S^2,\C^4)\:, \]
	and therefore
	\[ g(H)X =\sum_{kn} \int_{-\infty}^\infty g(\omega) \sum_{a,b =1}^2 t_{ab}^{\omega}\: X_a^{\omega} \:\la X_b^{\omega} | X \ra \:, \]
	almost everywhere. \\[0.5em]
	\underline{Second step: Proof for~$g \in L^1(\R) \cap L^\infty(\R)$:} $\quad$
	We choose a sequence of test func\-tions~$(g_n)_{n\in \N}$ in~$C^\infty_0(\R)$
	which is uniformly bounded by a constant~$C>0$ such that\footnote{Note that such a sequence can always be constructed from an arbitrary sequence~$(\tilde{g}_n)_{n\in \N}\subseteq C^\infty_0(\R)$ converging to~$g$ in~$L^1(\R)$ by smoothly cutting off the values of the function whenever its absolute value is larger than~$\|g\|_\infty+1$ (to ensure uniform boundedness) and then going over to a subsequence (to get pointwise convergence a.e., see for example \cite[Theorem 3.12]{rudin}).}
	\[ g_n \rightarrow g \qquad \text{in~$L^1(\R)$ and pointwise almost everywhere } .\]
	Then with~$f$ and~$U$ as before (where we applied the spectral theorem to~$H_{kn}$) we obtain for any~$X \in L^2(\R,\C^2)$
	\[  g_n\big(f(x)\big) (U^{-1} X)(x) \;\rightarrow\;  g\big(f(x)\big) (U^{-1} X)(x) \qquad
	\text{for almost all } x \in \sigma(H_{kn}) \:. \]
	Moreover, with the notation~$\Delta g_n := g_n-g$ we can estimate
	\[ \big| \:\Delta g_n\big(f(x)\big)\:(U^{-1}X)(x) \big| \leq \big(C+\|g\|_{\infty}\big)\: \big| (U^{-1}X)(x) \big| \qquad \text{for almost all } x \in \sigma(H_{kn}) \:.\] 
	So the function~$\big(C+\|g\|_{\infty}\big) |U^{-1}X| \in L^2(A,\mu)$ dominates the sequence of measurable functions~$\big(\:\mathcal{M}_{\Delta g_n \circ f} (U^{-1}X)\:\big)_{n\in \N}$ which additionally tends to zero pointwise almost everywhere. Therefore, using Lebesgue's dominated convergence theorem, we conclude that
	\[ \mathcal{M}_{g_n\circ f} \:U^{-1}\:X  \rightarrow \mathcal{M}_{g \circ f} \:U^{-1}\:X
	\qquad \text{in~$L^2(\sigma(H_{kn},\mu)$} \]
	and thus
	\[ g_n(H_{kn}) X  \rightarrow g(H_{kn}) X \qquad \text{in~$L^2(\R,\C^2)$} \:. \]
	In particular, we conclude that for any~$X,Z \in C^\infty_0(\R,\C^4)$,
	\begin{flalign}
		\label{g(H) L1 Cont.}
		\la  Z\:|\:  g_n(H_{kn}) X\ra \rightarrow \la  Z\:| \: g(H_{kn}) X\ra  \:.
	\end{flalign}

	Next we need  to show that the corresponding integral representations converge. To this end, we note that, just as in the first case, we may interchange integrations in the way
	\begin{flalign*}
		\Big \la & Z \:\Big|\: \int_{-\infty}^\infty d\omega \:\Delta g_n(\omega) \sum_{a,b =1}^2 t_{ab}^{\omega}\: X_a^{\omega} \:\la X_b^{\omega}| X \ra \Big\ra \\ 
		&=  \int_{-\infty}^\infty d\omega  \:\Delta g_n(\omega) \sum_{a,b =1}^2 t_{ab}^{\omega}\: \la Z | X_a^{\omega} \ra \la X_b^{\omega}| X \ra =:(**)\:,
	\end{flalign*}
	Now keep in mind that Remark~\ref{X Bounded} also yields the bound
	\[ \bigg|\sum_{a,b =1}^2 t_{ab}^{\omega} \:\la Z | X^{\omega}_a \ra \la X^{\omega}_b| X \ra\bigg| \leq  C_{Z,X} \:, \]
	which holds uniformly in~$\omega$. Using this inequality, we obtain the estimate
	\begin{flalign*}
		|(**)| \leq& \: \int_{-\infty}^\infty d\omega \big|\Delta g_n(\omega)\big| \:\Big|\sum_{a,b =1}^2 t_{ab}^{\omega} \la Z | X_a^{kn\omega} \ra \la X_b^{kn\omega} | X \ra\Big|\\
		 \leq&\: C_{Z,X}\int_{-\infty}^\infty d\omega \big|\Delta g_n(\omega)\big| \overset{n \rightarrow \infty}{\longrightarrow} 0  \:.
	\end{flalign*}	Combined with~\eqref{g(H) L1 Cont.}, this finally yields for any~$X,Z \in C^\infty_0(\R\times S^2, \C^4)$
	\[ \la Z \:|\: g(H) X \ra =  \Big \la Z \:\Big|\: \sum_{kn} \int_{-\infty}^\infty d\omega \:g(\omega) \sum_{a,b =1}^2 t_{ab}^{\omega} \:X_a^{\omega}\: \la X_b^{\omega}| X \ra \Big\ra \:. \]
We obtain~\eqref{g(H)-X} just as in the first case using Lemma~\ref{Lemma Testing}.
Finally, \eqref{phi g(H) psi eq} follows by testing with~$Z$ and again interchanging the integrals as explained before.
\end{proof}	

\begin{proof}[Proof of Proposition~\ref{Prop g(H)2}]
\bitem
\item[{\rm{(i)}}] This follows directly from~\eqref{phi g(H) psi eq}  together with~\eqref{Prop Norm Est}, since
	\begin{flalign}
		|\la Z\:|\: g(H_{kn}) X \ra_{L^2}| &\overset{\eqref{phi g(H) psi eq}}{=} \Big| \int \frac{d\omega}{\pi}\: g(\omega) \sum_{a,b=1}^{2} t_{ab}^{\omega}  \big\la Z\:|\: X_a^{\omega} \big\ra_{L^2}\big\la  X_b^{\omega} \:|\: X \big\ra_{L^2} \Big| \label{Int rep est2} \\
		&\overset{\eqref{Prop Norm Est}}{\leq} \|g\|_\infty \|Z\|_{L^2} \|X\|_{L^2} \:. \label{Int rep est3 }
	\end{flalign}
\item[{\rm{(ii)}}] Using~\eqref{g(H)-X}, the following computation shows that the operator~$g(H_{kn})$ is also self-adjoint because for any~$X,Z\in \Xi$ we have
	\begin{flalign*}
		&\; \;\,\la Z\:|\: g(H_{kn}) X \ra_{L^2}\\
		 &\; \;\,= \int du \int \frac{d\omega}{\pi}\: g(\omega) \int du' \sum_{a,b=1}^{2} \overline{t_{ba}^{\omega} \: \big\la X(u') \:|\: X_b(u',\omega) \big\ra_{\C^2}\big\la X_a(u,\omega)\:|\: Z(u) \big\ra_{\C^2}} \\
		&\overset{\mathrm{Fubini}}{=} \overline{\int du' \int  \frac{d\omega}{\pi} \: g(\omega) \int du \sum_{a,b=1}^{2} t_{ba}^{\omega} \: \big\la Z (u) \:|\: X_b(u,\omega) \big\ra_{\C^2}\big\la X_a(u',\omega)\:|\: X(u') \big\ra_{\C^2}}\\
		&\; \;\, = \overline{ \big\la X \: |\:g(H_{kn}) Z\big\ra_{L^2} } 
		= \big\la  g(H_{kn}) Z \:|\:X \big\ra_{L^2} \:.
	\end{flalign*}
	Note that applying Fubini is justified in view of Remark~\ref{X Bounded}. 
	From this equation the self-adjointness follows by continuous extension.
\eitem
\end{proof}

Now we apply these results to the operator~$\Pi_-^\varepsilon$:

\begin{Corollary}	
	\label{Cor g(H)}
	Consider the function
	\[ 
		g: \R \rightarrow \R\:, \qquad \omega \mapsto\chi_{(-\infty,0)}(\omega) \:e^{\varepsilon \omega}\:, \]
	then we have
\[ 
(\Pi_-^\varepsilon)_{kn}=g(H_{kn}) \]
	Moreover, for~$\eta_\kappa$ as before we have:
	\begin{flalign}
		\label{Cor g(H) eq 2}
		\eta_\kappa( (\Pi_-^\varepsilon)_{kn})=(\eta_\kappa \circ g) (H_{kn})
	\end{flalign}
\end{Corollary}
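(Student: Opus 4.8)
The plan is to derive both claims from the functional calculus established in Proposition~\ref{Prop g(H)} together with the definition~\eqref{Pi reg Def} of $(\Pi_\eps^-)_{kn}$. First I would observe that the function $g(\omega)=\chi_{(-\infty,0)}(\omega)\,e^{\eps\omega}$ is real-valued, lies in $L^1(\R)\cap L^\infty(\R)$ (indeed $\|g\|_{L^1}=1/\eps$ and $\|g\|_{L^\infty}=1$), so Proposition~\ref{Prop g(H)} applies to it. Then for any $X\in C^\infty_0(\R,\C^2)$ the integral representation~\eqref{g(H)-X} for this particular $g$ reads
\[
\big(g(H_{kn})X\big)(u)=\int\frac{d\omega}{\pi}\:\chi_{(-\infty,0)}(\omega)\,e^{\eps\omega}\int du'\sum_{a,b=1}^2 t_{ab}^\omega\,X_a(u,\omega)\,\big\la X_b(u',\omega)\,\big|\,X\big\ra_{\C^2}\:,
\]
and carrying out the $u'$-integral and restricting the $\omega$-integral to $(-\infty,0)$ this is exactly the right-hand side of~\eqref{Pi reg Def}. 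Hence $(\Pi_\eps^-)_{kn}X=g(H_{kn})X$ for all $X\in C^\infty_0(\R,\C^2)$. Since both operators are bounded (the left side by the boundedness lemma for $(\Pi_\eps^-)_{kn}$, the right side by Proposition~\ref{Prop g(H)}(i)) and $C^\infty_0(\R,\C^2)$ is dense in $L^2(\R,\C^2)$, they agree as operators on all of $L^2(\R,\C^2)$, giving $(\Pi_\eps^-)_{kn}=g(H_{kn})$.

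For the second identity~\eqref{Cor g(H) eq 2} I would invoke the composition rule of the Borel functional calculus: writing $H_{kn}=U\mathcal{M}_f U^{-1}$ via the multiplication-operator form of the spectral theorem (as already used in the proof of Lemma~\ref{Lemma g(H)}), one has $g(H_{kn})=U\mathcal{M}_{g\circ f}U^{-1}$ and therefore, for any Borel function $\eta$, $\eta\big(g(H_{kn})\big)=U\mathcal{M}_{\eta\circ g\circ f}U^{-1}=(\eta\circ g)(H_{kn})$. The only mild subtlety is that $\eta$ as defined in~\eqref{Def eta} is not in $L^1\cap L^\infty$ and is not continuous at $0$ and $1$; but it is a bounded Borel function on $\R$ (indeed $0\le\eta\le\ln 2$ and $\eta$ is continuous except at the two points $0,1$, which form a null set for the spectral measure only if those are not atoms — one should note that in any case the composition rule $\eta\circ(g(H_{kn}))=(\eta\circ g)(H_{kn})$ holds for arbitrary bounded Borel $\eta$ directly from the multiplication-operator representation, with no continuity needed). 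I would phrase the argument in terms of this representation to sidestep continuity issues entirely.

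The main obstacle, such as it is, is purely bookkeeping: one must make sure that the function $g$ here is literally the same object fed into Proposition~\ref{Prop g(H)}, i.e. that the restriction of the $\omega$-integral to $(-\infty,0)$ in~\eqref{Pi reg Def} is correctly encoded by the cutoff $\chi_{(-\infty,0)}$, and that passing from the $\Xi$-dense identity to all of $C^\infty_0$ and then to $L^2$ uses only boundedness and density (already available). No genuinely new estimate is required; the content is entirely a matter of matching~\eqref{Pi reg Def} against~\eqref{g(H)-X} and applying the spectral-calculus composition law for bounded Borel functions.
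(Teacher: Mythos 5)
Your proposal is correct and takes essentially the same route as the paper, which likewise identifies $(\Pi_\eps^-)_{kn}$ with $g(H_{kn})$ by noting that the two bounded operators agree on a dense set of smooth compactly supported functions (via the integral representation of Proposition~\ref{Prop g(H)} matched against \eqref{Pi reg Def}) and then obtains \eqref{Cor g(H) eq 2} from the functional calculus of $H_{kn}$. (One harmless slip: $\eta$ is in fact continuous on all of $\R$ and lies in $L^1\cap L^\infty$, being bounded and supported in $[0,1]$; your argument does not rely on the contrary claims, since the composition law holds for any bounded Borel function.)
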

\begin{proof}
	First of all note that 
	\[ (\Pi_-^\varepsilon)_{kn} = g(H_{kn})\:, \]
	as both operators clearly agree on the dense subset~$C^\infty_0(\R,\C^2) \subseteq L^2(\R,\C^2)$ (see Proposition~\ref{Prop g(H)1}) and are bounded (see Lemma~\ref{lem:Pi-epsBounded} and Proposition~\ref{Prop g(H)2}). 
	Equation
	\eqref{Cor g(H) eq 2} then follows by applying the functional calculus of~$H_{kn}$ (which is applicable due to Proposition~\ref{Prop g(H)2}). 
\end{proof}

\subsection{Representation as a Pseudo-Differential Operator}
\label{sec:RepPseudo}
The general idea is to rewrite~$\Pi_-^\varepsilon$ in the form of~$\Opa(\CA)$ and identify~$\alpha$ with the inverse regularization constant:
\[ \alpha = \frac{l_0}{\varepsilon} \:, \] 
with a suitable reference length~$l_0$.

With the help of~\eqref{g(H)-X}, we obtain for any~$\psi \in C^\infty_0(\R, \C^2)$
\[ \big( (\Pi_-^\varepsilon)_{kn} \psi \big)(u) = \int_{-\infty}^\infty (\Pi_-^\varepsilon)_{kn}(u,u')\: \psi(u')\: du' \]
with the kernel
\begin{flalign}
\label{Pi kernel}
	&(\Pi_-^\varepsilon)_{kn}(u,u')=\frac{1}{\pi} \int_{-\infty}^\infty  e^{-i\omega(u-u')} \:\bigg[
	\begin{pmatrix}
		\mathfrak{a}_{\varepsilon,11}(\omega) & \mathfrak{a}_{\varepsilon,12}(u,\omega) \\
	\mathfrak{a}_{\varepsilon,21}(u,\omega) & \mathfrak{a}_{\varepsilon,22}(\omega)
	\end{pmatrix} + \mathcal{R}_0(u,u',\omega) \bigg] \: d\omega
\end{flalign}
and
\begin{flalign*}
	\mathfrak{a}_{\varepsilon,11}(\omega) &= e^{\varepsilon \omega}\: \Big(\big|f_{0,1}^+(\omega) \:\big|^2\:\chi_{(-m,0)}(\omega)+\frac{1}{2}\:\chi_{(-\infty, -m)}(\omega)\Big)\\ 
	\mathfrak{a}_{\varepsilon,12}(u,\omega) &= e^{-\varepsilon\omega} \:e^{2i\omega u}\:\Big( \overline{\fm(-\omega)}\: \fp (-\omega) \:\chi_{(0,m)}(\omega) + t_{12}(-\omega)\:\chi_{(m, \infty)}(\omega) \Big)\\ 
	\mathfrak{a}_{\varepsilon,21}(u,\omega) &= e^{\varepsilon\omega}\: e^{2i\omega u}\:\Big( \fm(\omega)\:\overline{\fp (\omega)}\:\chi_{(-m,0)}(\omega) + t_{21}(\omega)\:\chi_{(-\infty, -m)}(\omega) \Big)\\
	\mathfrak{a}_{\varepsilon,22}(\omega) &= e^{-\varepsilon \omega}\: \Big(\big|f_{0,1}^-(\omega) \:\big|^2\:\chi_{(0,m)}(\omega)+\frac{1}{2}\:\chi_{(m,\infty)}(\omega)\Big)\:, 
\end{flalign*}
and some error term~$\mathcal{R}_0(u,u',\omega)$ related to the error term~$R_0(u)$ in Lemma~\ref{X_at_hor}. A more detailed computation is given in Appendix~\ref{appB}. Moreover, the more precise form of~$\mathcal{R}_0(u,u',\omega)$ is found in Section~\ref{Fast decaying terms}. Note that for the Schwarzschild case we always replace~$\bx$ by~$u$ and~$\by$ by~$u'$ to emphasize that we are working with Regge-Wheeler coordinates. 

In order to bring~$(\Pi_-^\varepsilon)_{kn}$ in the form of~$\Opa(\CA)$, we need to rescale the~$\omega$-integral by a dimensionless parameter~$\alpha$. As previously mentioned the idea is to set~$\alpha = l_0/\varepsilon$ with some reference length~$l_0$. In Schwarzschild-space the only scaling parameter of the geometry is the mass of the black hole~$M$. Thus we choose as reference length~$l_0=M$ and rescale the $\omega$-integral by
\[ \alpha := \frac{M}{\varepsilon} \:. \]
Introducing the notation
\[ \xi := \frac{\omega}{\alpha} = \frac{\varepsilon \omega}{M} \:, \]
we thereby obtain
\begin{flalign}
	\label{FullOpKernel}
	\begin{split}
		(\Pi_-^\varepsilon)_{kn}(u,u')=&\frac{\alpha}{\pi} \int  e^{-i\alpha \xi(u-u')}\Big[
		\begin{pmatrix}
			\mathfrak{a}_{\varepsilon,11}(M\xi/\varepsilon) & \mathfrak{a}_{\varepsilon,12}(u,M\xi/\varepsilon) \\
			\mathfrak{a}_{\varepsilon,21}(u,M\xi/\varepsilon) & \mathfrak{a}_{\varepsilon,22}(M\xi/\varepsilon)
		\end{pmatrix} \\
		&\phantom{\frac{\alpha}{\pi} \int e^{-i\alpha \xi(u-u')}\Big[}+ \mathcal{R}_0(u,u',(M\xi/\varepsilon)) \Big] \: d\xi\:,
	\end{split}
\end{flalign}
and set
\begin{flalign}
	\mathcal{R}_0^{(\varepsilon)}(u,u',\xi)&:=\mathcal{R}_0(u,u',(M\xi/\varepsilon)) \\
	\label{Def Aeps}
	\CA^{(\varepsilon)}(u,\xi)&:= \begin{pmatrix}
		\mathfrak{a}_{\varepsilon,11}(M\xi/\varepsilon) & \mathfrak{a}_{\varepsilon,12}(u,M\xi/\varepsilon)\\
		\mathfrak{a}_{\varepsilon,21}(u,M\xi/\varepsilon) & \mathfrak{a}_{\varepsilon,22}(M\xi/\varepsilon)
	\end{pmatrix}\:.
\end{flalign}
Note that in the matrix-valued functions~$a_{\varepsilon}$ and~$\mathcal{R}_0$ we use the scaling parameter~$\varepsilon$ and otherwise~$\alpha$. This is convenient because we will first consider the~$\alpha \rightarrow \infty$ limit of an operator related to the~$\varepsilon \rightarrow 0$ limit of~$\CA^{(\varepsilon)}$ and then estimate the errors caused by this procedure. In this sense~$\alpha$ and~$\varepsilon$ can at first be considered independent scaling parameters. When considering the limiting case~$\varepsilon \rightarrow 0$ however, one has to keep their relation in mind.

\section{The Regularized Fermionic Vacuum State and its Entanglement Entropy} \label{secregvac}
After the above preparations, we can now define the regularized fermionic vacuum state
in the Schwarzschild geometry as well as the corresponding
(R{\'e}nyi) entanglement entropy of the event horizon.
Our starting point is the observation that a quasi-free fermionic state is uniquely described
by its reduced one-particle density operator~$D$ (see Definition~\ref{defreduced}).
We want to choose~$D$ as the regularized projection operator onto all the negative-frequency
solutions of the Dirac equation. Using the spectral calculus for the Hamiltonian~$H$ in the
Dirac equation in the Hamiltonian form~\eqref{dirHamilton}, our first ansatz is
\beq \label{Dfirst}
D = g(H) \qquad \text{with} \qquad g(\omega) := \chi_{(-\infty,0)}(\omega) \:e^{\varepsilon \omega} \:.
\eeq
Note that, in the limiting case~$\varepsilon \searrow 0$, the operator~$D$ goes over to the
projection operator to all negative-frequency solutions. The corresponding quasi-free state~$\Omega$
is pure. In the case~$\varepsilon>0$, the function~$e^{\varepsilon \omega}$ gives a smooth cutoff
for large frequencies on the energy scale~$1/\varepsilon$. Even in this regularized situation,
the spectral function~$g$ is discontinuous at~$\omega=0$. This implements the physical picture that,
in the vacuum, all negative-frequency one-particle states are occupied, whereas all positive-frequency
states are not.

Our fist ansatz~\eqref{Dfirst} has the shortcoming that it involves an infinite number of angular momentum modes,
giving rise to a divergence in~\eqref{entropy BH}. In order to remedy the situation, we
let~${\mathscr{O}}$ be a finite subset of~$(\Z+1/2) \times \N$ (referred to as
the {\em{occupied angular momentum modes}} or {\em{occupied one-particle states}})
and choose
\beq \label{occupy}
D = \Pi_-^\varepsilon := \sum_{\text{$(k,n)$ occupied}} (\Pi_-^\varepsilon)_{kn} \:,
\eeq
where we sum over all~$(k,n) \in {\mathscr{O}}$. Here
the mode-wise regularized projection operator~$(\Pi_-^\varepsilon)_{kn}$
is defined using the integral representation by~\eqref{Pi reg Def}. Alternatively and equivalently,
it can be characterized using the spectral calculus as described in Corollary~\ref{Cor g(H)}.

We note that, for simplicity, for each angular momentum mode we choose the same
regularization length~$\varepsilon$. More generally, one could consider a regularized vacuum
state where~$\varepsilon=\varepsilon(k,n)$ depends on the angular mode. Since the entropy
can be decomposed into the sum of the entropies of all angular momentum modes, all our
results generalize immediately to this more general state.

Choosing~$D$ according to~\eqref{occupy}, we consider the (R{\'e}nyi) entanglement entropy as defined by~\eqref{entropygen},
where~$\Lambda$ is chosen as
an annular region in~\eqref{def:TildeK}; see also Figure~\ref{Fig:Lambda2}.
Note that in the Regge-Wheeler coordinates the horizon is located at~$-\infty$, so ultimately we want to consider the limit~$u_0 \rightarrow - \infty$ and~$\rho \rightarrow \infty$.

	As explained in Section~\ref{SecCon} we can compute the trace mode-wise by going over to the subregions~$\mathcal{K}$:
\[ 
		\tr \Big( \eta_\kappa\big( \chi_{\Lambda} \Pi_-^\varepsilon\big) \chi_{\Lambda} )  -\chi_{\Lambda} \eta_\kappa\Big(\Pi_-^\varepsilon\big)\chi_{\Lambda} \Big)		
		=\: \sum_{ \mathclap{\substack{ (k,n) \\ \text{occupied}}}} \: \tr\Big( \eta_\kappa \big( \chi_{\mathcal{K}} (\Pi_-^\varepsilon)_{kn} \chi_{\mathcal{K}} \big) - \chi_{\mathcal{K}} \eta_\kappa\big((\Pi_-^\varepsilon)_{kn}\big) \chi_{\mathcal{K}} \big)\Big) \:.  \]
Thus we define the mode-wise R{\'e}nyi entropy of the black hole as in~\eqref{def modewise entropy} by
	\[S_{\kappa, kn}^{\mathrm{BH}} = \frac{1}{2}\:\lim\limits_{\rho \rightarrow \infty}\lim\limits_{\alpha \rightarrow \infty} \frac{1}{\tilde{f}(\alpha)}\lim\limits_{u_0\rightarrow -\infty}
	\tr \Delta_\kappa\big((\Pi_-^\varepsilon)_{kn}, \mathcal{K}\big) \:, \]
	where~$\tilde{f}(\alpha)$ is a function describing the highest order of divergence in~$\alpha$ (we will later see that here~$\tilde{f}(\alpha)=\log \alpha$).
The complete entanglement entropy of the black hole is then the sum over all occupied modes (see~\eqref{entropy BH}).
\begin{figure}
	\centering
	\includegraphics[width=.8\textwidth]{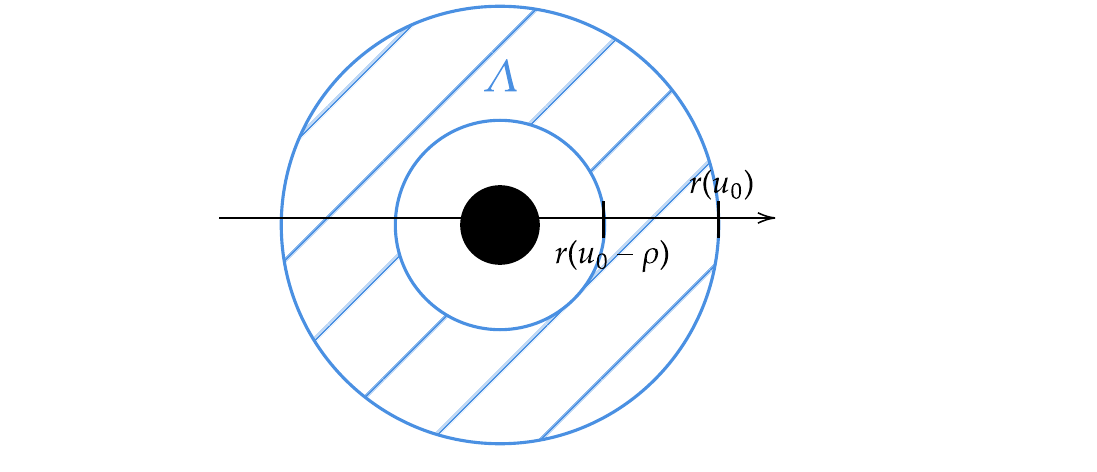}
	\caption{Cross section visualizing the set~$\Lambda=\mathcal{K} \times S^2$.}
	\label{Fig:Lambda2}
\end{figure}

In order to compute this in more detail, we will prove that
\begin{flalign}
	&\label{Limit rnProp_kn}
	\lim\limits_{\rho \rightarrow \infty}\lim\limits_{\alpha \rightarrow \infty} \frac{1}{f(\alpha)} \lim\limits_{u_0\rightarrow -\infty}
	\tr\Big( \eta_\kappa  \big( \chi_{\mathcal{K}} (\Pi_-^\varepsilon)_{kn} \chi_{\mathcal{K}} \big) - \chi_{\mathcal{K}} \eta_\kappa\big((\Pi_-^\varepsilon)_{kn}\big) \chi_{\mathcal{K}} \Big)\\
	&= \label{Limit Opa(b)}
	\lim\limits_{\rho \rightarrow \infty}\lim\limits_{\alpha \rightarrow \infty} \frac{1}{f(\alpha)} \lim\limits_{u_0\rightarrow -\infty} \tr\Big( \eta_\kappa \big( \chi_{\mathcal{K}} \Opa(\CA_0) \chi_{\mathcal{K}} \big) - \chi_{\mathcal{K}} \eta_\kappa\big(\Opa(\CA_0)\big) \chi_{\mathcal{K}} \Big)\:,
\end{flalign}
where 
\begin{flalign}
	\label{LimOpKernel}
	\CA_0(\xi) :=  \begin{pmatrix}
		e^{M\xi} \chi_{(-\infty, 0)}(\xi) & 0 \\
		0 & e^{-M\xi} \chi_{(0, \infty)}(\xi)
	\end{pmatrix}\:.
\end{flalign}
(We will later see that the operators in~\eqref{Limit rnProp_kn} and~\eqref{Limit Opa(b)} are well-defined and trace class). The notation~$\CA_0$ is supposed to emphasize the connection to the~$\varepsilon \rightarrow 0$ limit of~$A_\varepsilon$.
Since~$\CA_0$ is diagonal the computation of~\eqref{Limit Opa(b)} is much easier than the one for~\eqref{Limit rnProp_kn}. In fact we have
\[ \eqref{Limit Opa(b)} = \sum_{j=1}^2 \lim\limits_{\rho \rightarrow \infty}\lim\limits_{\alpha \rightarrow \infty} \frac{1}{f(\alpha)} \lim\limits_{u_0\rightarrow -\infty} \tr\Big( \eta_\kappa \big( \chi_{\mathcal{K}} \Opa(\mathfrak{a}_{0,j}) \chi_{\mathcal{K}} \big) - \chi_{\mathcal{K}} \eta_\kappa\big(\Opa(\mathfrak{a}_{0,j})\big) \chi_{\mathcal{K}} \Big) \]
with the scalar functions
\beq \label{a01def}
\mathfrak{a}_{0,1}(\xi) := e^{M\xi} \chi_{(-\infty, 0)}(\xi) \qquad \mathrm{and } \quad \mathfrak{a}_{0,2}(\xi) := e^{-M\xi} \chi_{(0, \infty)}(\xi)\:.
\eeq
This reduces the computation of~\eqref{Limit Opa(b)} to a problem for real-valued symbols for which many results are already established.

\begin{Remark} {\em{
We point out that our definition of the entanglement entropy differs from
the conventions in~\cite{helling-leschke-spitzer, leschke-sobolev-spitzer2}
in that we do not add the entropic difference operator of the complement of~$\Lambda$.
This is justified as follows. On the technical level, our procedure is easier, because
it suffices to consider compact spatial regions (indeed, we expect that the
entropic difference operator on the complement of~$\Lambda$ is not trace class).
Conceptually, restricting attention to the entropic difference operator of~$\Lambda$
can be understood from the fact that occupied states which are supported either inside or outside~$\Lambda$
do not contribute to the entanglement entropy.
Thus it suffices to consider the states which are non-zero both inside and outside.
These ``boundary states'' are taken into account already in the entropic difference operator~\eqref{EntropicDiff}.

This qualitative argument can be made more precise with the following formal computation,
which shows that at least the unregularized entropic difference is the same for the inner and the outer parts:
 First of all note that~$\eta_\kappa(x)$ vanishes at~$x=0$ and~$x=1$. Since~$\Pi_-$ is a projection this means that
\[
	\eta_\kappa(\Pi_-)=0\qquad 
	\text{and therefore} \qquad \tr\big( \chi_{\Lambda}\:\eta_\kappa(\Pi_-) \:\chi_{\Lambda}\big) = 0 = \tr\big( \chi_{\Lambda^c}\:\eta_\kappa(\Pi_-)\: \chi_{\Lambda^c}\big) \:. 
\] 
Moreover, if we assume that both
$ \chi_{\Lambda} \: \Pi_- \: \chi_{\Lambda}$ and~$\Pi_-  \: \chi_{\Lambda} \: \Pi_-$ are compact operators, we can find a one-to-one correspondence between their non-zero eigenvalues: Take any eigenvector~$\psi$ of~$\chi_{\Lambda} \: \Pi_- \: \chi_{\Lambda}$ with eigenvalue~$\lambda \neq 0$, then we must have
\[ \chi_{\Lambda} \psi = \frac{1}{\lambda} \:\chi_{\Lambda}^2 \: \Pi_- \: \chi_{\Lambda}\psi =\psi \qquad \text{and} \qquad \Pi_- \psi \neq 0 \:,\]
which yields
\[ \lambda \: \psi = \big( \chi_{\Lambda} \: \Pi_- \: \chi_{\Lambda} \big) \psi = \big( \chi_{\Lambda} \: \Pi_- \big) \psi  \:.\]
Then~$\Pi_- \psi $ is an eigenvector of~$\Pi_- \: \chi_{\Lambda} \: \Pi_-$ with eigenvalue~$\lambda$ because
\[ \big( \Pi_- \: \chi_{\Lambda} \: \Pi_- \big) (\Pi_- \psi) = \Pi_- \big( \chi_{\Lambda} \: \Pi_- \big) \psi  = \lambda\: \Pi_- \psi \:. \]
 Since the same argument also works with the roles of~$\Pi_- \: \chi_{\Lambda} \: \Pi_-$ and~$\chi_{\Lambda} \: \Pi_- \: \chi_{\Lambda}$ interchanged, this shows, that the nonzero eigenvalues of both operators (counted with multiplicities) coincide. Then the same holds true for~$\eta_\kappa(\Pi_- \: \chi_{\Lambda} \: \Pi_-)$ and~$\eta_\kappa(\chi_{\Lambda} \: \Pi_- \: \chi_{\Lambda} )$, proving that
\[ \tr\eta_\kappa(\chi_{\Lambda} \: \Pi_- \: \chi_{\Lambda} ) = \tr\eta_\kappa(\Pi_- \: \chi_{\Lambda} \: \Pi_-) \:. \]
Due to the symmetry of~$\eta_\kappa$, namely
\[\eta_\kappa(x)=\eta_\kappa(1-x) \Forany x \in \R \:,\]
this then leads to
\[ \tr\eta_\kappa(\chi_{\Lambda} \: \Pi_- \: \chi_{\Lambda} ) = \tr\eta_\kappa(\Pi_- \: \chi_{\Lambda} \: \Pi_-) =  \tr\eta_\kappa\big(\Pi_- - \Pi_- \: \chi_{\Lambda} \: \Pi_-\big) = \tr\eta_\kappa(\Pi_- \: \chi_{\Lambda^c} \: \Pi_-)   \:. \]
Repeating the same argument as before with~$\chi_{\Lambda^c} \: \Pi_- \: \chi_{\Lambda^c}$ finally gives
\begin{flalign*}
	&\tr\eta_\kappa(\chi_{\Lambda} \: \Pi_- \: \chi_{\Lambda} )  = \tr\eta_\kappa(\Pi_- \: \chi_{\Lambda^c} \: \Pi_-) =\tr \eta_\kappa(\chi_{\Lambda^c} \: \Pi_- \: \chi_{\Lambda^c})\:.
\end{flalign*}
Regularizing this expression suggests that the entanglement entropies of the inside and outside as defined in~\eqref{EntropicDiff} coincide. If this is the case, 
our definition of entanglement entropy agrees (up to a numerical factor) with that in~\cite{helling-leschke-spitzer, leschke-sobolev-spitzer2}.
\QEDrem
}}  \end{Remark} 

\section{Trace of the Limiting Operator} \label{Sec trace of the limiting operator}
In this section, we shall analyze
the operator~$\Opa(\mathfrak{a}_{0,1})$ in~\eqref{a01def}. Of course, the same methods apply to~$\Opa(\mathfrak{a}_{0,2})$.
	\begin{Notation} {\em{
		In the following it might happen that in the symbol~$\CA$ we can factor out a characteristic function in~$\xi$, i.e.\
		\[ \CA(u,u',\xi) = \chi_\Omega(\xi) \:\tilde{\CA}(u,u',\xi) \:. \]
		In this case, we will sometimes denote the characteristic function in~$\xi$ corresponding to the set~$\Omega$ by~$I_\Omega$ (this is to avoid confusion with the characteristic function~$\chi_{\mathcal{K}}$ in the variables~$u$ or~$u'$). }}
	\end{Notation}
	
	\begin{Remark} {\em{
		Note that the operator~$\Opa(\mathfrak{a}_{0,1})$ corresponds to
		\[
			\Opa(\mathfrak{a}_{0,1}) = \Four\: \mathfrak{a}_{0,1} \: \Four^{-1}\:,
		\]
		and is therefore well-defined on all of~$L^2(\R)$ and by Remark~\ref{ExtScwartzFcts} its integral representation extends to all Schwartz functions. Moreover, for any bounded subset~$U \subseteq \R$ the integral representation of the operator~$\chi_U \Opa(\mathfrak{a}_{0,1})\chi_U$ holds on all of~$L^2(\R)$ due to Lemma~\ref{Simple Integral Rep.}. \QEDrem }}
	\end{Remark}
	
\subsection{Idea for Smooth Functions}
\label{subsec:IdeaSmooth}
	The general idea is to make use of the following  one-dimensional result by Widom~\cite{widom1}.
	
	\begin{Thm}
		\label{ThmWidom}
		Let~$K, J\subseteq \R$ intervals, $f\in C^{\infty}(\R)$ be a smooth function with~$f(0)=0$ and~$a \in C^\infty (\R^2)$ a complex-valued
		Schwartz function which we identify with the symbol~$a(x,y,\xi)\equiv a(x,\xi)$ for any~$x,y,\xi \in \R$. Moreover, for any symbol~$b$ we denote its symmetric localization by
\beq \label{Abdef}
A(b) := \frac{1}{2}\Big(\chi_K  \,\Opa \big( I_J \,b \big)\, \chi_K+ \big(\chi_K  \Opa \big(I_J \,b \big) \chi_K \big)^*\Big) \:,
\eeq
		(recall that~$I_J$ is the characteristic function corresponding to~$J\subseteq \R$ with respect to the variable~$\xi$). Then
		\begin{flalign*}
			\tr \Big( f(A(a)) - \chi_K \:\Opa \big( I_J \,f(a) \big)\: \chi_K \Big) = \frac{1}{4\pi^2}\log(\alpha) \sum_i U \big( a(v_i); f \big) + \mathcal{O}(1)\:,
		\end{flalign*}
		where~$v_i$ are the vertices of~$K\times J$ (see Figure~\ref{fig:vertices})
		\begin{figure}
			\subfloat[Vertices if~$K$ and~$J$ are both finite: There are in total four vertices.]{\includegraphics[width=0.47\textwidth, trim=80pt 10pt 150pt 5pt,clip]{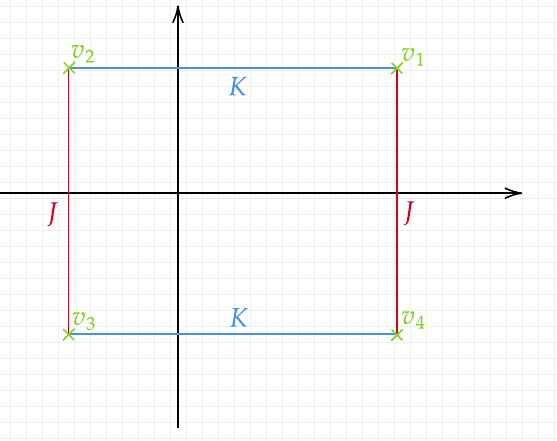}}\qquad
			\subfloat[Vertices if~$K$ is finite and~$J$ is infinite: There are only two vertices.]{\includegraphics[width=0.43\textwidth, trim=140pt 10pt 115pt 10pt,clip]{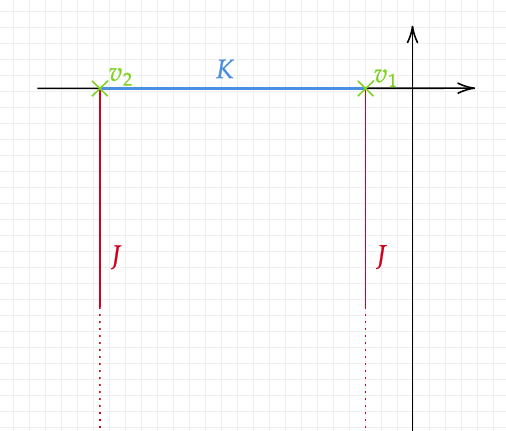}}
			\caption{Illustration with examples of the ``vertices" in Theorem~\ref{ThmWidom}.}
			\label{fig:vertices}
		\end{figure}
		and 
		\[ U(c;f):= \int_0^1\frac{1}{t(1-t)} \:\Big( f(tc) - tf(c) \Big)\: dt \Forany c \in \R \:. \]
	\end{Thm}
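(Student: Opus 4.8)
This is Widom's asymptotic formula, and the route I would take is the classical Landau--Widom strategy: reduce the trace to a sum of model computations localized at the vertices of $K\times J$, where two discontinuity lines of the symbol $I_J\,a$ cross. The structural point is that $f(A(a))-\chi_K\Opa(I_J\,f(a))\chi_K$ measures the failure of the functional calculus to commute with the cutoffs $\chi_K$ (in $u$) and $I_J$ (in $\xi$); away from the vertices this failure is trace class with $\mathcal O(1)$ trace, the smooth ``perimeter'' contributions cancelling in the difference, so the $\log\alpha$ term can only come from the corners. I would first establish the formula for $f$ in a convenient dense class (polynomials with $f(0)=0$, or functions with a well-decaying almost-analytic extension, so that $f(A(a))$ is represented by a Helffer--Sj\"ostrand integral of the resolvent) and then pass to general $f\in C^\infty$ by approximation, using that the trace of the difference operator depends continuously on $f$ with respect to a suitable norm on the (bounded) spectral interval, together with the explicit continuity of $\tilde a\mapsto U(\tilde a;f)$ in $f$. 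For a monomial $f(x)=x^{N}$, $A(a)^{N}-\chi_K\Opa(I_J\,a^{N})\chi_K$ expands into a finite sum of terms each carrying at least one mismatch commutator, of the form $[\chi_K,\Opa(\cdot)]$ or $\Opa(I_J\,\cdot)$ minus its conjugated version.

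Next comes the localization. Using a partition of unity in $(u,\xi)$ subordinate to small neighbourhoods of the vertices $v_i$ together with a bulk piece, and the commutator calculus for $\opa$-operators with the large parameter $\alpha$ (in the spirit of Lemmas~\ref{Opa Mult Rule} and~\ref{Simple Integral Rep.} and of \cite{sobolev-pseudo}), I would show that the bulk contribution and all cross terms between distinct vertices are trace class with $\mathcal O(1)$ trace, so that $\tr\big(f(A(a))-\chi_K\Opa(I_J\,f(a))\chi_K\big)=\sum_i(\text{local trace at }v_i)+\mathcal O(1)$. At each vertex the symbol may be replaced by its frozen value $a(v_i)$, the error $a(u,\xi)-a(v_i)$ contributing only $\mathcal O(1)$ since it vanishes at $v_i$. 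Translating $v_i$ to the origin via Lemma~\ref{Translation Lemma} and exploiting the large parameter, the local operator becomes, up to $\mathcal O(1)$ in trace, $a(v_i)$ times the self-adjoint part of the composition of a spatial half-line cutoff with a frequency half-line cutoff over a window of length $\asymp\alpha$, i.e.\ a truncated Wiener--Hopf operator $P_\alpha=\chi_{(0,L)}\,\Four^{-1}\chi_{(0,\infty)}\Four\,\chi_{(0,L)}$ with $L\asymp\alpha$.

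The model operator $P_\alpha$ has spectrum in $[0,1]$ with eigenvalues clustering at $0$ and $1$, and by the Landau--Widom spectral asymptotics one has $\tr g(P_\alpha)=\tfrac{1}{4\pi^{2}}\log\alpha\int_0^1\frac{g(t)}{t(1-t)}\,dt+\mathcal O(1)$ for every $g\in C^\infty$ with $g(0)=g(1)=0$. Applying this with $g(t)=f(t\,\tilde a)-t\,f(\tilde a)$, where $\tilde a=a(v_i)$, the subtracted linear term $t\,f(\tilde a)$ being exactly the contribution of $\chi_K\Opa(I_J\,f(a))\chi_K$ at the corner, while the symmetrization $A(\cdot)$ in~\eqref{Abdef} is what makes $P_\alpha$ self-adjoint so that $f$ may be applied, one obtains $\tfrac{1}{4\pi^2}\log\alpha\,U(\tilde a;f)$ at each vertex; summing over the $v_i$ gives the stated formula.

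The hard part is the localization step: bounding, uniformly in $\alpha$, the trace norms of all the off-corner and cross remainders as well as of the symbol-freezing error. This requires sharp Schatten-class estimates for $\opa$-operators built from mollified characteristic functions and their iterated commutators with $\chi_K$, which is the technical heart of \cite{widom1} and of its refinements in \cite{sobolev-pseudo,sobolev-functions}. A secondary difficulty is the non-self-adjointness of $\chi_K\Opa(I_J\,a)\chi_K$ for complex-valued $a$, which is precisely why one works with the symmetrization~\eqref{Abdef}, together with making the approximation from polynomials to general $f\in C^\infty(\R)$ quantitative enough that the error stays $\mathcal O(1)$ uniformly in $\alpha$.
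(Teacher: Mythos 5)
You should be aware that the paper does not prove Theorem~\ref{ThmWidom} at all: it is imported verbatim from Widom \cite{widom1} (up to the sign flip $\xi\to-\xi$ and the remark that the integral representation extends from Schwartz functions to $L^2(K)$ via Lemma~\ref{Simple Integral Rep.}). So there is no in-paper argument to compare against, and your proposal has to be judged as a free-standing proof of Widom's formula.

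As such, it is an accurate roadmap of the Landau--Widom strategy but not a proof: the two load-bearing steps are exactly the ones you defer. First, the claim that the bulk and all off-corner/cross terms are trace class with trace $\mathcal{O}(1)$ \emph{uniformly in $\alpha$}, and that freezing the symbol at $a(v_i)$ costs only $\mathcal{O}(1)$, is the technical heart of \cite{widom1} and of the Schatten-class machinery in \cite{sobolev-schatten, sobolev-pseudo}; asserting it from a partition of unity and ``commutator calculus'' leaves the entire quantitative content unestablished (note in particular that the error $a(u,\xi)-a(v_i)$ does not vanish on the whole of the two discontinuity lines through $v_i$, only at the vertex itself, so the freezing estimate genuinely requires the anisotropic corner analysis). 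Second, the model asymptotics $\tr g(P_\alpha)=\tfrac{1}{4\pi^2}\log\alpha\int_0^1 g(t)\,dt/(t(1-t))+\mathcal{O}(1)$ is itself a theorem of the same depth, so invoking it makes the argument circular in spirit. There is also a bookkeeping slip in the model reduction: the doubly truncated operator $\chi_{(0,L)}\,\Four^{-1}\chi_{(0,\infty)}\Four\,\chi_{(0,L)}$ has \emph{two} corners, so using it as the local model at a \emph{single} vertex $v_i$ double-counts and would produce $\tfrac{1}{2\pi^2}$ rather than $\tfrac{1}{4\pi^2}$ per vertex; the single-corner model is a half-line object and its regularization needs care. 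Finally, your identification of $t\,f(\tilde a)$ as ``exactly the contribution of $\chi_K\Opa(I_J f(a))\chi_K$ at the corner'' is the right heuristic but is precisely the cancellation one must prove, since $\tr\bigl(\chi_K\Opa(I_J f(a))\chi_K\bigr)$ taken alone is a divergent $\mathcal{O}(\alpha)$ Weyl term. None of this makes the outline wrong, but every genuinely hard step is named rather than carried out, which is why the paper (reasonably) cites the result instead of proving it.
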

	\begin{Remark} {\em{
		\bitem
			\item[{\rm{(i)}}] To be precise, Widom considered operators with kernels 
			\[ \frac{\alpha}{2\pi}\int dy \int d\xi \:e^{\mathbf{+}i\alpha \xi (x-y)} \:a(x,\xi) \]
			but the results can clearly be transferred using the transformation~$\xi \rightarrow -\xi$.
			\item [{\rm{(ii)}}] Moreover, Widom considers operators of the form~$\Opa(a)$ whose integral representation extends to all of~$L^2(K)$. We note that, in view of Lemma~\ref{Simple Integral Rep.}, this assumption holds for any operator~$\Opa(a)$ with Schwartz symbol~$a=a(x,\xi)$, even if, a-priori, the integral representation holds only when inserting smooth compactly supported functions. \QEDrem
		\eitem }}
	\end{Remark}

	We want to apply the above theorem with~$J=(-\infty,0)=:\mathcal{J}$ and~$K=\mathcal{K}=(u_0-\rho,u_0)$, 
where we choose~$f$ as a suitable approximation of the function~$\eta_\kappa$ (again with~$f(0)=0$)
and~$a$ as
an approximation of the diagonal matrix entries~$\mathfrak{a}_{0,1\!/\!2}$ in~\eqref{LimOpKernel}
and~\eqref{a01def}. For ease of notation, we only consider~$a\approx \mathfrak{a}_{0,1}$,
noting that our methods apply similarly to~$\mathfrak{a}_{0,2}$.
To be more precise, we first introduce the smooth non-negative cutoff functions~$\Psi, \Phi \in C^{\infty}(\R)$ with
	\begin{flalign}
	\label{Psi Phi cutoff Def}
	\Psi(\xi) =\begin{cases}
		1\:,\quad \xi \leq 0\\
		0\:,\quad \xi > 1
	\end{cases} \qquad \text{and} \qquad
	\Phi(u)=\begin{cases}
		1\:, \quad u\in [-\rho,0]\\
		0\:, \quad u\notin (-\rho-1,1)\:,
	\end{cases}
\end{flalign}
and set~$\Phi_{u_0}(x):= \Phi(x-u_0)$. Then we may introduce~$a$ as the function
	\begin{flalign}
	\label{aDef}
	\mathfrak{a}(u,\xi):=\Psi(\xi)\: \Phi_{u_0}(u)\: e^{M\xi}\:.
\end{flalign}
For a plot of~$a$ see Figure~\ref{fig:aplot}.
	\begin{figure}
		\centering
		\includegraphics[width=0.7\linewidth]{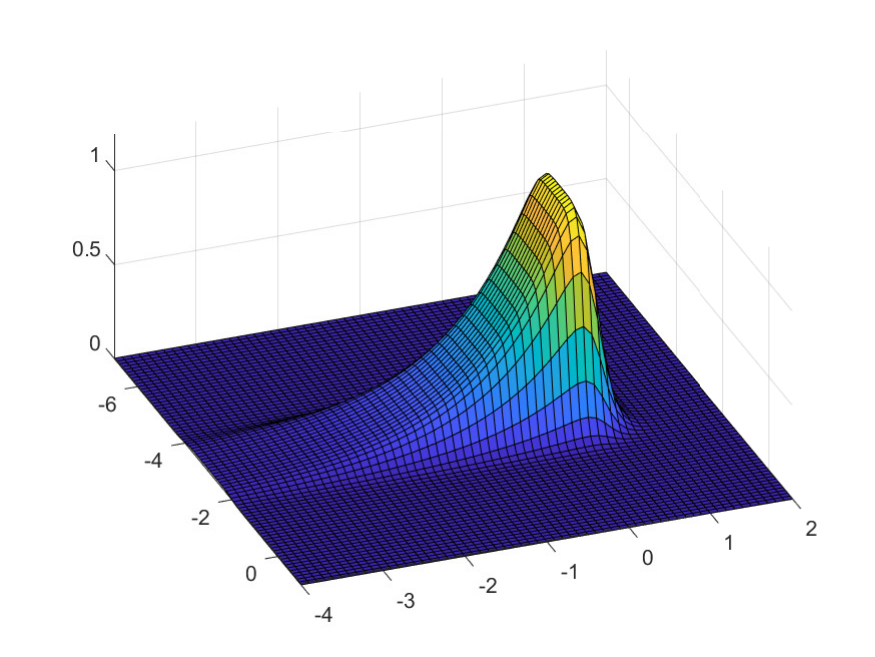}
		\caption{Plot of the function~$a$ in~\eqref{aDef}.}
		\label{fig:aplot}
	\end{figure}%
	Note that then~$\mathfrak{a}$ is a Schwartz function and
	\begin{align}
		\label{eq:conn.a/fraka}
		 \chi_K\:\Opa(I_J \,\mathfrak{a})\:\chi_K = \chi_K\:\mathrm{Op}_{\alpha}(\mathfrak{a}_{0,1})\:\chi_K \:.
	\end{align}
Moreover, the resulting symbol clearly fulfills the condition of Lemma~\ref{Simple Integral Rep.}, so we can extend the corresponding integral representation to all $L^2(\R,\C)$-functions. In addition, the operator is self-adjoint,
because of Lemma~\ref{fOpa}. This implies that we can leave out the symmetrization in~\eqref{Abdef}, i.e.\
\begin{align}
	\label{eq:SymmFrakA}
	A(\mathfrak{a}) = \chi_K \:\mathrm{Op}_{\alpha}(I_J\:\mathfrak{a})\:\chi_K = \chi_K \:\mathrm{Op}_{\alpha}(\afrak_{0,1})\:\chi_K \:.
\end{align}
	Furthermore, due to Lemma~\ref{fOpa}, we may pull out any function~$f$ as in the above Theorem~\ref{ThmWidom} in the sense that
	\[\chi_K\: \Opa \big(I_J \:f(\mathfrak{a}) \big) \:\chi_K = \chi_K \:\Opa \big( f(\afrak_{0,1}) \big)\: \chi_K = \chi_K \:f(\Opa(\afrak_{0,1})) \:\chi_K \:,\]
	where we used that~$\afrak_{0,1}$ vanishes
	outside~$J$ and that~$f(0)=0$.
	
	In our application the vertices of~$K \times J$ are (similar as in Figure~\ref{fig:vertices} (B)) given by 
	\[  v_1 = (u_0, 0) \qquad \text{and} \qquad v_2 = (u_0-\rho, 0) \:,\]
	and thus
	\[ \mathfrak{a}(v_i) = 1\:, \qquad \text{for any } i=1,2\:, \]
	leading to
	\begin{flalign}
	\label{D(afrak,f) smooth}
		\tr\Big( f \big( \chi_K\Opa(\afrak_{0,1})\chi_K \big) - \chi_K f \big(\Opa(\afrak_{0,1})\big) \chi_K \Big) = \frac{1}{2\pi^2}\:\log(\alpha)\:U(1;f) + \mathcal{O}(1)\:,
	\end{flalign}
	valid for any~$f\in C^{\infty}(\R)$ with~$f(0)=0$.
	
	Note that, using Lemma~\ref{Translation Lemma} and the fact that~$\afrak_{0,1}$ does not depend on~$u$ or~$u'$, the $\mathcal{O}(1)$-term does not change when varying~$u_0$, and therefore the result stays same when we take the limit~$u_0 \rightarrow -\infty$.
We need to keep this in mind because we shall take the limit~$u_0 \rightarrow -\infty$ before the limit~$\alpha \rightarrow \infty$ (cf.~\eqref{def modewise entropy}).

\subsection{Proof for Non-Differentiable Functions}
	In order to state the main result of this section we first need to introduce the following condition.
	\begin{cond}
		\label{cond:f3}
		Let~$T:=\{t_0, \dots, t_l \}$ be a finite set and~$g\in C^2(\R \setminus T)\cap C^0(\R)$ be a function such that there exists a constant~$\gamma>0$ and in the neighborhood of every~$t_i$ there are constants~$c_k>0$, $k=0,1,2$ satisfying the conditions
		\begin{align}
			|g^{(t)}(x)| \leq c_k \:|t-t_i|^{\gamma -k}\:.
		\end{align}
	\end{cond}
	\begin{Example}
	\label{Ex:CondFRenyi}
	{\em{As shown in detail in Lemma~\ref{Properties eta}, the functions~$\eta_{\kappa}$ satisfy Condition~\ref{cond:f3} with~$T=\{0,1\}$ and if~$\kappa\neq1$ for any~$\gamma \leq \min\{1, \kappa\}$ if~$\kappa=1$ we may take any~$\gamma <1$.
	
\QEDrem}}
\end{Example}
	
	The following theorem constitutes the main result of this section.
	\begin{Thm}
	\label{MainThmSec2}
		Let~$\mathcal{K} =(u_0-\rho, u_0)$, $\mathcal{J}=(-\infty,0 )$ (as in Section~\ref{subsec:IdeaSmooth}) and~$\afrak_{0,1}(\xi)=e^{M\xi}\: \chi_{(-\infty, 0)}(\xi)$ as in~\eqref{a01def}.
Moreover, let~$g\in C^2(\R \setminus \{t_0, \dots, t_l \})\cap C^0(\R)$ satisfy Condition~\ref{cond:f3} with~$g(0)=0$.
		Then
\beq \label{thmeq}
\lim\limits_{\alpha \rightarrow \infty} \lim\limits_{u_0 \rightarrow -\infty} \frac{1}{\log \alpha} \tr D_\alpha(g,\mathcal{K},\afrak_{0,1}) = \frac{1}{2 \pi^2}\: U(1;g) \:.
\eeq
	\end{Thm}

In the proof of Theorem~\ref{MainThmSec2}, we will apply Lemmata~\ref{TechLemmaNonSmt} and~\ref{TechLemmaNonDiff}. In order to complete the error estimates, one finally needs to control the term~$\| |PA(\id-P)|^\sigma\|_\SN$. This can be done with the following lemma.
	
	\begin{Lemma}
		\label{FinalEstq-Norm}
		Let~$u_0 \in \R$ arbitrary and~${\mathcal{K}}=(u_0-\rho,u_0)$.
		Choose numbers~$q\in (0,1]$, $\alpha \geq 3$
		and~$\rho \geq 2$. Then the symbol~$\mathfrak{a}_{0,1}$ from~\eqref{a01def} satisfies
		\[ \big\| \chi_{\mathcal{K}} \: \Opa(\mathfrak{a}_{0,1}) \: (1-\chi_{\mathcal{K}}) \big\|^q_q
		\lesssim \log \alpha \:.\]
		with implicit constants independent of~$\alpha$ and~$u_0$.
	\end{Lemma}
	\begin{proof}
		First of all make use Lemma~\ref{Translation Lemma} in order to replace the region~${\mathcal{K}}$ by~${\mathcal{K}}_0:=(-\rho,0)$:
		\[
		\| \chi_{\mathcal{K}} \:\Opa(\mathfrak{a}_{0,1})\: (1-\chi_{\mathcal{K}}) \|_q^q = \| \chi_{{\mathcal{K}}_0} \:\Opa(\mathfrak{a}_{0,1})\: (1-\chi_{{\mathcal{K}}_0}) \|_q^q \:.
		\]
		Next, let~$(\Psi_j)_{j\in \Z}$ be a partition of unity with
		~$\Psi_{j}(x)=\Psi_{0}(x-j)$ for all~$j\in \Z$ and~$\supp \Psi_0 \subseteq (-\frac{1}{2},\frac{3}{2})$.
		For any~$j \in \Z$ we consider the symbols
		\begin{align*}
			\mathfrak{a}_j(\xi) &:= \Psi_{j}(\xi) \:e^{M\xi}\:,
		\end{align*}
		Using the notation~$\mathcal{J}_j:=(j-1,j)$ for any~$j \in \Z_{\leq 0}$ we obtain with the help of Lemma~\ref{Opa Mult Rule} together with Remark~\ref{remark56},
		\[
		\chi_{{\mathcal{K}}_0} \:\Opa(I_{\mathcal{J}_j}\mathfrak{a}_{0,1})\: (1-\chi_{{\mathcal{K}}_0}) = \chi_{{\mathcal{K}}_0} \:\Opa(I_{\mathcal{J}_j}\mathfrak{a}_j)\: (1-\chi_{{\mathcal{K}}_0}) = \chi_{{\mathcal{K}}_0} \:P_{\mathcal{J}_j,\alpha} \:\Opa(\mathfrak{a}_{j})\:  (1-\chi_{{\mathcal{K}}_0})\:,
		\]
		so with the triangle inequality~\eqref{est:triangle} we conclude that
		\begin{align}
			\label{Proof610Step1}
			\| \chi_{{\mathcal{K}}_0} \:\Opa(\mathfrak{a}_{0,1})\: (1-\chi_{{\mathcal{K}}_0}) \|_q^q \leq \sum_{j\in \Z_{\leq 0}} \| \chi_{{\mathcal{K}}_0} \:P_{\mathcal{J}_j,\alpha}\:\Opa(\mathfrak{a}_{j})\:  (1-\chi_{{\mathcal{K}}_0}) \|_q^q\:.
		\end{align}
		In the next step we want to interchange~$P_{\mathcal{J}_j,\alpha}$ and~$\chi_{{\mathcal{K}}_0}$. To this end note that
		\begin{align*}
			&\| [P_{\mathcal{J}_j,\alpha} ,\chi_{{\mathcal{K}}_0} ]\|_q^q = 
			\| 	P_{\mathcal{J}_j,\alpha} \chi_{{\mathcal{K}}_0} -\chi_{{\mathcal{K}}_0}P_{\mathcal{J}_j,\alpha} \chi_{{\mathcal{K}}_0} + \chi_{{\mathcal{K}}_0}P_{\mathcal{J}_j,\alpha} \chi_{{\mathcal{K}}_0} - \chi_{{\mathcal{K}}_0}P_{\mathcal{J}_j,\alpha}\|_q^q\\
			&=
			\|(1-\chi_{{\mathcal{K}}_0}) P_{\mathcal{J}_j,\alpha} \chi_{{\mathcal{K}}_0} -  \chi_{{\mathcal{K}}_0}P_{\mathcal{J}_j,\alpha} (1-\chi_{{\mathcal{K}}_0})  \|_q^q \leq 2 \| (1-\chi_{{\mathcal{K}}_0}) P_{\mathcal{J}_j,\alpha}\chi_{{\mathcal{K}}_0} \|_q^q
		\end{align*}
		where we also used that
		\[ \big(\chi_{{\mathcal{K}}_0}\:P_{\mathcal{J}_j,\alpha}\:(1-\chi_{{\mathcal{K}}_0})\big)^*= (1-\chi_{{\mathcal{K}}_0}) \:P_{\mathcal{J}_j,\alpha}\:\chi_{{\mathcal{K}}_0}\:, \]
		together with the fact that singular values (and therefore the $q$-norm) are invariant under Hermitian 
		conjugation.\footnote{Note that for any~$A \in \SN_q$ we can write~$\|A\|_q=(\sum_i \sqrt{\lambda_i}^q)^{1/q}$ where the~$\lambda_i$ are the eigenvalues of~$A^*A$ (note that since~$A$ is compact, there are countably many). Moreover, for any eigenvector~$\psi$ of~$A^*A$ corresponding to a non-zero eigenvalue, the vector~$A\psi$ is an eigenvector of~$AA^*$ with the same eigenvalue, so we see that~$\|A^*\|_q\leq\|A\|_q$ (as~$\psi$ might lie in the kernel of~$A$). Then, symmetry yields the equality.}
		Moreover, using Remark~\ref{Rem:XiTrans} together with Corollary~\ref{Cor4.7} we conclude that for any~$j\in \Z_{\leq 0}$:
		\[
		\| (1-\chi_{{\mathcal{K}}_0}) P_{\mathcal{J}_j,\alpha}\chi_{{\mathcal{K}}_0} \|_q^q = \| (1-\chi_{{\mathcal{K}}_0}) \Opa(I_{\mathcal{J}_j}) \chi_{{\mathcal{K}}_0} \|_q^q = \| (1-\chi_{{\mathcal{K}}_0}) \Opa(I_{\mathcal{J}_0}) \chi_{{\mathcal{K}}_0} \|_q^q \lesssim \log\alpha\:,
		\]
		with implicit constant independent of~$j\in \Z_{\leq 0}$, $\alpha\geq 2$ and~$u_0$.
		Moreover making use of Lemma~\ref{OpaBounded} together with Remark~\ref{ExtScwartzFcts} and the fact that
		\[
		N^{(1,1,2)}(\mathfrak{a}_j;1,1) \lesssim e^{Mj}\:,
		\]
		with an implicit constant independent of~$j$ we obtain for any~$\alpha \geq 1$,
		\[
		\| \Opa(\mathfrak{a}_{j})\|^q_{\infty} \lesssim e^{qMj}\:,
		\]
		again with an implicit constant independent of~$j$ and~$\alpha$.
		Using the H\"older-type inequality~\eqref{est:HoelderLike}, this allows us to estimate
		\begin{align*}
			&\| \chi_{{\mathcal{K}}_0} \:P_{\mathcal{J}_j,\alpha}\:\Opa(\mathfrak{a}_{j})\:  (1-\chi_{{\mathcal{K}}_0}) \|_q^q \leq \| [\chi_{{\mathcal{K}}_0}, \:P_{\mathcal{J}_j,\alpha}]\:\Opa(\mathfrak{a}_{j})\:  (1-\chi_{{\mathcal{K}}_0}) \|_q^q \\
			&+ \| P_{\mathcal{J}_j,\alpha} \:\chi_{{\mathcal{K}}_0} \:\Opa(\mathfrak{a}_{j})\:  (1-\chi_{{\mathcal{K}}_0}) \|_q^q \leq \| [\chi_{{\mathcal{K}}_0}, \:P_{\mathcal{J}_j,\alpha}] \|_q^q \| \Opa(\mathfrak{a}_{j})\:  (1-\chi_{{\mathcal{K}}_0})\|^q_{\infty} \\
			&+ \| P_{\mathcal{J}_j,\alpha}\|^q_{\infty}\| \chi_{{\mathcal{K}}_0} \:\Opa(\mathfrak{a}_{j})\:  (1-\chi_{{\mathcal{K}}_0}) \|_q^q\lesssim e^{qMj}\log\alpha + \| \chi_{{\mathcal{K}}_0} \:\Opa(\mathfrak{a}_{j})\:  (1-\chi_{{\mathcal{K}}_0}) \|_q^q\:.
		\end{align*}
		with an implicit constant independent of~$j\in \N_0$ and~$\alpha \geq 2$. 
		Thus it remains to estimate the term 
		$\| \chi_{{\mathcal{K}}_0} \:\Opa(\mathfrak{a}_{j})\:  (1-\chi_{{\mathcal{K}}_0}) \|_q^q$. To this end we want to apply Proposition~\ref{prop:cross_smooth} to~$\mathfrak{a}_j$. So choose~$\tau = 2$ and~$\bmu =j-1/2$, then
		\[
		N^{(\tilde{m})}(\mathfrak{a}_j;\tau) \lesssim e^{Mj}\:, 
		\]
		with an implicit constant independent of~$j$. This yields
		\begin{align}
			\label{Proof610Step2}
			\begin{split}
				\| \chi_{{\mathcal{K}}_0} \:P_{\mathcal{J}_j,\alpha}\:\Opa(\mathfrak{a}_{j})\:  (1-\chi_{{\mathcal{K}}_0}) \|_q^q &\leq \| [\chi_{{\mathcal{K}}_0}, \:P_{\mathcal{J}_j,\alpha}]\:\Opa(\mathfrak{a}_{j})\:  (1-\chi_{{\mathcal{K}}_0}) \|_q^q \\
				&\lesssim  e^{qMj} \log \alpha \:,
			\end{split}
		\end{align}
		with an implicit constant independent of~$j$ and~$\alpha\geq 3$.
		Then, summarizing~\eqref{Proof610Step1} and~\eqref{Proof610Step2} yields
		\begin{align*}
			\| \chi_{\mathcal{K}} \:\Opa(\mathfrak{a}_{0,1})\: (1-\chi_{\mathcal{K}}) \|_q^q &\leq \sum_{j\in \Z_{\leq 0}} \| \chi_{{\mathcal{K}}_0} \:P_{\mathcal{J}_j,\alpha}\:\Opa(\mathfrak{a}_{j})\:  (1-\chi_{{\mathcal{K}}_0}) \|_q^q\\
			& \lesssim \sum_{j=0}^{\infty} e^{-qMj}\:\log \alpha \lesssim \log\alpha\:,
		\end{align*}
		with an implicit constant independent of~$\alpha\geq 3$ and~$u_0$.
	\end{proof}

In the proof of Theorem~\ref{MainThmSec2} we will also make use of the following continuity result for~$U(1;f)$.
\begin{Lemma}
	\label{lem:contU(1;g)}
	Let~$f$ be  a function on~$[0,1]$ with~$f(0)=0$.
	\bitem
	\item[{\rm{(i)}}] If~$f\in C^2([0,1])$ denote
	\[
	\|f\|_{C^2} := \max_{0 \leq k \leq 2}\max_{t \in [0,1]} \big| f^{(k)}(t) \big| \:.
	\]
	Then,
	\[
	|	U(1;f) | \leq \frac{9}{2}  \|f\|_{C^2}\:.
	\]
	\item[{\rm{(ii)}}] If~$f$ satisfies Condition~\ref{cond:f3} with~$X=\{z\}$ where~$z=0$ or~$z=1$ and is supported in~$[z-R,z+R]$ for some~$R<\frac{1}{2}$, then
	\[
	| U(1;f) | \leq\bl f \bl_2 \frac{R^\gamma}{\gamma(1-R)} \:.
	\]
	\eitem
\end{Lemma}
\begin{proof}
	First split the integral in the definition of~$U(1;f)$ as follows:
	\begin{flalign*}
		&\big| U(1;f) \big| \leq \underbrace{\left| \int_0^{1/2} \frac{1}{t(1-t)}\big(f(t)-tf(1)\big)dt \right|}_{=:\text{(I)}} +
		\underbrace{ \left| \int_{1/2}^1 \frac{1}{t(1-t)}\big(f(t)-tf(1)\big)dt \right| }_{=:\text{(II)}}\:.
	\end{flalign*}
	\bitem
	\item[(i)]
	For the estimate of~$\text{(I)}$ consider the Taylor expansion for~$f$ around~$t=0$ keeping in mind that~$f(0)=0$:
	\[ f(t) = tf'(0)+ \frac{t^2}{2} f''\big(\tilde{t} \big) \qquad \text{for suitable~$\tilde{t}\in[0,t]$} \:,\]
	and therefore
	\begin{flalign*}
		\text{(I)} &\leq \int_{0}^{1/2} \underbrace{\Big| \frac{1}{(1-t)}\Big|}_{\leq 2}\Big( \underbrace{|f'(0)|}_{\leq \|f\|_{C^2}}+\underbrace{|t/2|}_{\leq 1/4} \underbrace{|f''(\tilde{t})|}_{\leq \|f\|_{C^2}}\Big) dt + \int_{0}^{1/2}\underbrace{\Big| \frac{1}{(1-t)}\Big|}_{\leq 2} \underbrace{|f(1)|}_{\leq \|f\|_{C^2}} dt\\
		&\leq \frac{9}{4}\: \|f\|_{C^2}
	\end{flalign*}
	(note that~$\tilde{t}$ is actually a function of~$t$, but this is unproblematic because~$f''$ is uniformly bounded).
	
	Similarly, for the estimate of~(II) we use the Taylor expansion of~$f$, but now around~$t=1$,
	\[ f(t) = f(1) + (t-1)f'(1)+\frac{(t-1)^2}{2} f''(\tilde{t}) \qquad \text{for suitable~$\tilde{t}\in[0,t]$} \:. \]
	We thus obtain
	\begin{flalign*}
		\text{(II)} &\leq \int_{1/2}^{1} \Big| \frac{1}{t(1-t)}\Big|\Big( |1-t|\:|f(1)|+ |1-t|\:|f'(1)|+\frac{|1-t|^2}{2} |f''(\tilde{t})|\Big) \:dt\\
		&\leq \int_{1/2}^{1} \underbrace{|1/t|}_{\leq 2}\Big( \underbrace{|f(1)|}_{\leq \|f\|_{C^2}}+ \underbrace{|f'(1)|}_{\leq \|f\|_{C^2}}+\underbrace{\frac{|1-t|}{2}}_{\leq 1/4} \underbrace{|f''(\tilde{t})|}_{\leq \|f\|_{C^2}}\Big)\: dt \leq \frac{9}{4} \|f\|_{C^2} \:.
	\end{flalign*}
	\item[(ii)] \begin{itemize}
		\item[(a)] Case~$z=0$:
		First note that
		\[ |f(t)| \leq \bl f \bl_2 |t|^\gamma  \qquad \text{for any } t \in (0,1/2) \:. \]
		This yields for~$R< 1/2$,
		\begin{flalign*}
			|\text{(I)}| &= \Big|\int_{0}^{1/2}\frac{1}{t(1-t)} f(t)dt\Big| \leq \int_{0}^{R}\underbrace{\frac{1}{|1-t|}}_{\leq 1/(1-R)}\bl f\bl_2 |t|^{\gamma -1} dt \\
			&\leq \frac{1}{1-R} \bl f\bl_2 \underbrace{\int_{0}^{R} |t|^{\gamma -1}}_{=R^\gamma/\gamma} dt
			\leq \bl f\bl_2 \frac{R^\gamma}{\gamma(1-R)}\:,
		\end{flalign*}
		Moreover, the integral~(II) vanishes for~$R<1/2$.
		\item[(b)] Case~$z=1$:
		Similarly as in the previous case, we now have
		\[ \text{(I)} = 0 \qquad \text{for~$R<1/2$}\:.\]
		Moreover, just as in the previous case, one can estimate
		\[ |f(t)|\leq \bl f \bl_2 |1-t|^\gamma \qquad \text{for any } t \in (1/2,1) \:.  \]
		This yields for~$R< 1/2$,
		\begin{flalign*}
			|\text{(ii)}| \leq  \bl f \bl_2 \int_{1-R}^{1}\frac{1}{|t|}\: |1-t|^{1-\gamma}\: dt \leq \bl f \bl_2 \:\frac{R^\gamma}{\gamma(1-R)} \:.
		\end{flalign*}
	\end{itemize}
	\eitem
\end{proof}

Now we have all the tools to prove Theorem~\ref{MainThmSec2}.
	\begin{proof}[Proof of Theorem~\ref{MainThmSec2}]
	\label{Proof of Main Res}
Before beginning, we note that the $u_0$-limit in~\eqref{thmeq} may be disregarded,
because the symbol is translation invariant in position space
(see Lemma~\ref{Translation Lemma}, noting that~$\afrak_{0,1}$ does not depend on~$u\equiv \bx$ or~$u'\equiv \by$).

		The remainder of the proof is based on the idea of the proof of \cite[Theorem 4.4]{sobolev-functions}
		Let~$\mathfrak{a}$ be the symbol in~\eqref{aDef}. By Lemma~\ref{OpaBounded}, we can assume that the operator norm of~$\Opa (\mathfrak{a})$ is uniformly bounded in~$\alpha$.
We want to apply Lemma~\ref{Opa Mult Rule} with~$\CA=\mathfrak{a}$ and~$\CB=I_\mathcal{J}$ (recall that~$\mathcal{J}=(-\infty,0)$). In order to verify the conditions of this lemma, we first note that,
Remark~\ref{remark56}~(i) yields condition~(ii), whereas condition~(i) follows
from the estimate
		\begin{flalign*}
			&\int du \:\Big| \int d\xi \:e^{-i\xi u} \:e^{M\xi} \:\psi(\xi)\: \Psi(\xi)\: \Phi_{u_0}(u)\Big|^2 \\
			&\leq \int_{u_0-\rho-1}^{u_0+1} du\: \Big( \int d\xi\: \chi_{(-\infty,1)}(\xi) \:e^{M\xi}\: |\psi(\xi)| \Big)^2\\
			 &\leq (\rho+2) \: \|\chi_{(-\infty,1)}\: e^{\cdot/M}\|_{L^2(\R,\C)}^2\: \|\psi\|_{L^2(\R,\C)}^2 \:,
		\end{flalign*}
(which holds for any~$\psi \in L^2(\R)$). Now Lemma~\ref{Opa Mult Rule} yields
		\[ \Opa(I_\mathcal{J} \,\mathfrak{a})= \Opa(\mathfrak{a})\: P_{\alpha,\mathcal{J}} \:.\]
		Since~$P_{\alpha,\mathcal{J}}$ is a projection operator, we see that~$\|\Opa(I_\mathcal{J} \,\mathfrak{a})\|_\infty
		\leq \|\Opa(\mathfrak{a})\|_\infty$ for all~$\alpha$.
		In particular, the operator~$\Opa(I_\mathcal{J} \,\mathfrak{a})$ is bounded uniformly in~$\alpha$.
		Hence,
		\[ \| A(\mathfrak{a}) \| = \big\| \chi_{\mathcal{K}} \Opa(\afrak_{0,1})\chi_{\mathcal{K}} \big\| = 
		\big\| \chi_{\mathcal{K}} \:\Opa(I_\mathcal{J} \,\mathfrak{a})\: \chi_{\mathcal{K}} \big\| \leq \|\Opa(\mathfrak{a})\|_\infty =: C_1 \:, \] 
		uniformly in~$\alpha$ (recall that~$A(\mathfrak{a})$ is the symmetric localization from Theorem~\ref{ThmWidom}). Moreover, the sup-norm of the symbol~$\afrak_{0,1}$ itself is bounded by a constant~$C_2$. We conclude that we only need to consider the function~$g$ 
		on the interval
		\[[-\max \{C_1,C_2\}, \max \{C_1,C_2\}]\:.\] 
		Therefore, we may assume that
		\[ \supp g \subseteq [-C,C] \qquad \text{with} \qquad  C:=\max \{C_1,C_2\}+1\:, \]
		possibly replacing~$g$ by the function
		\[ \tilde{g}=\Psi_C\: g \]
		with a smooth cutoff function~$\Psi_C\geq 0$ such that~$\Psi_C|_{[-C+1,C-1]} \equiv 1$  and~$\supp \Psi_C \subseteq [-C,C]$. For ease of notation, we will write~$g \equiv \tilde{g}$ in what follows.
		
		We remark that the function~$\eta_\kappa$ which we plan to consider later already  satisfies this property by definition with~$C=2$.
From Lemma~\ref{TechLemmaNonDiff} and Lemma~\ref{FinalEstq-Norm} we 
see that~$D_\alpha(g,{\mathcal{K}},\afrak_{0,1})$ is indeed trace class.
We now compute this trace, proceeding in two steps. \\[-0.5em]

\noindent \underline{First Step: Proof for~$g\in C^2(\R)$.}\\
To this end, we first apply the Weierstrass approximation theorem as given in \cite[Theorem 1.6.2]{Narasimhan} to obtain a polynomial~$g_\delta$ such that~$f_\delta:=g-g_\delta$ fulfills
\begin{align}
	\label{est:fDelta}
	\max_{0 \leq k \leq 2}\max_{|t|\leq C} \big| f_\delta^{(k)}(t) \big|\leq \delta \:.
\end{align}
Without loss of generality we can assume that~$f_\delta(0)=0$ (otherwise replace~$f_\delta$ by the function~$t\mapsto f_{\delta/2}(t)- f_{\delta/2}(0)$). In order to control the error of the polynomial approximation, we
		apply Lemma~\ref{TechLemmaNonSmt} with~$n=2$, $r=C$, some~$\sigma \in (0,1)$, $q=1$ and
		\[ A = \Opa(\afrak_{0,1})\:, \quad P=  \chi_{\mathcal{K}}\:, \quad g = \tilde{f}_\delta :=f_\delta\:\Psi_C \]
		(note that here~$g$ is the function in Lemma~\ref{TechLemmaNonSmt})
		where~$\Psi_C$ is the cutoff function from before (the cutoffs and approximation are visualized in Figure~\ref{fig:approxandcutoff}).
		\begin{figure}
			\centering
			\includegraphics[width=1.2\linewidth, trim=95pt 0pt 0pt 0pt,clip]{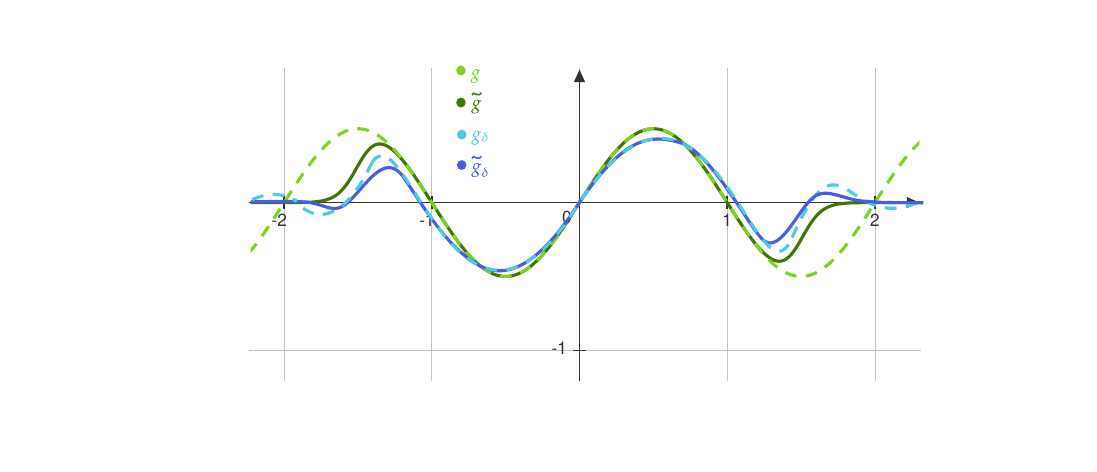}
			\caption{Visualization of the cutoffs and approximations in the first step of the proof of Theorem~\ref{MainThmSec2} for~$C=2$. We start with a function~$g$, which is first multiplied by the cutoff-function~$\Psi_C$, giving~$\tilde{g}$. This function is then approximated by a polynomial~$g_\delta$. Multiplying~$g_\delta$ by the cutoff function~$\Psi_C$ results in a function which is here called~$\tilde{g}_\delta$ (but does not directly appear in the proof). The function~$\tilde{f}_\delta$ is then given by the difference between~$\tilde{g}$ and~$\tilde{g}_\delta$.}
			\label{fig:approxandcutoff}
		\end{figure}
		This gives
		\begin{flalign*}
			&\big\| f_\delta \big( \chi_{\mathcal{K}}\: \Opa(\afrak_{0,1} \big)\: \chi_{\mathcal{K}} \big) - \chi_{\mathcal{K}} \:f_\delta \big( \Opa(\afrak_{0,1}) \big)\: \chi_{\mathcal{K}} \big\|_1 \\
		&= \big\| \tilde{f}_\delta\big( \chi_{\mathcal{K}}\: \Opa(\afrak_{0,1}) \:\chi_{\mathcal{K}} \big) 
		- \chi_{\mathcal{K}} \:\tilde{f}_\delta \big( \Opa(\afrak_{0,1}) \big)\: \chi_{\mathcal{K}} \big\|_1 \\
		&\lesssim \:\delta \:\big\| \chi_{\mathcal{K}} \:\Opa(\afrak_{0,1})\: (1-\chi_{\mathcal{K}}) \big\|_\sigma^\sigma\:.
		\end{flalign*}
		with an implicit constant independent of~$\delta$ and~$\alpha$.
Moreover, applying Lemma~\ref{FinalEstq-Norm} with~$q=\sigma$, we conclude that for~$\alpha$ large enough 
		\begin{flalign*}
&\big\| f_\delta \big(\chi_{\mathcal{K}}\: \Opa(\afrak_{0,1})\: \chi_{\mathcal{K}} \big) 
- \chi_{\mathcal{K}} \:f_\delta(\Opa(\afrak_{0,1}))\: \chi_{\mathcal{K}} \big\|_1 
\lesssim\delta\: \log \alpha\:,
		\end{flalign*}
		(again with an implicit constant independent of~$\delta$ and~$\alpha$).
		Using this inequality, we can estimate the trace by
		\begin{flalign*}
			\tr D_\alpha(g,{\mathcal{K}},\afrak_{0,1}) \leq& \tr D_\alpha(g_\delta,{\mathcal{K}},\afrak_{0,1}) + \| D_\alpha(f_\delta,{\mathcal{K}},\afrak_{0,1})\|_1 \\
			\leq& \tr D_\alpha(g_\delta,{\mathcal{K}},\afrak_{0,1})+ C_3\:\delta\: \log \alpha \:,
		\end{flalign*}
		with a constant~$C_3$ independent of~$\delta$ and~$\alpha$.
In order to compute the remaining trace, we can again apply
Theorem~\ref{ThmWidom} (exactly as in the example~\eqref{D(afrak,f) smooth}). This gives
		\[ 
		\tr D_\alpha(g_\delta,{\mathcal{K}},\afrak_{0,1}) = \frac{1}{2\pi^2}   \log (\alpha)\: U(1;g_\delta) + \mathcal{O}(1)\:,
		\]
		and thus
		\[ 
		\tr D_\alpha(g,{\mathcal{K}},\afrak_{0,1})  \leq \frac{1}{2\pi^2} \: \log (\alpha)\: U(1;g_\delta) + C_3\: \delta\: \log \alpha + \mathcal{O}(1)\:,
		\]
		which yields together with Lemma~\ref{lem:contU(1;g)},
		\begin{flalign}
			\label{LimsupEps}
			\limsup\limits_{\alpha \rightarrow \infty } \frac{1}{\log \alpha} \tr D_\alpha(g,{\mathcal{K}},\afrak_{0,1}) \leq \frac{1}{2\pi^2}U(1;g_\delta) + C_3\:\delta \:.
		\end{flalign}
		Moreover, applying Lemma~\ref{lem:contU(1;g)} to~$f_\delta$ we obtain due the~\eqref{est:fDelta}
		\[
		\lim\limits_{\delta \rightarrow 0} | U(1;g_\delta) - U(1;g)|  = \lim\limits_{\delta \rightarrow 0} | U(1;f_\delta)| = 0\:.
		\]
		Therefore, taking the limit~$\delta \rightarrow 0$ in~\eqref{LimsupEps} gives
		\[ \limsup\limits_{\alpha \rightarrow \infty } \frac{1}{\log \alpha} \tr D_\alpha(g,{\mathcal{K}},\afrak_{0,1}) \leq \frac{1}{2\pi^2}\:U(1;g) \:. \]
		Analogously, using
		\begin{flalign*}
			\frac{1}{2\pi^2}   \log(\alpha)\: U(1;g_\delta) + \mathcal{O}(1) &= \tr D_\alpha(g_\delta,{\mathcal{K}},\afrak_{0,1})\leq \tr D_\alpha(g,{\mathcal{K}},\afrak_{0,1}) + \| D_\alpha(f_\delta,{\mathcal{K}},\afrak_{0,1}) \|_1 \\
			&\leq \tr D_\alpha(g,{\mathcal{K}},\afrak_{0,1}) + C_3\: \delta\: \log \alpha\:,
		\end{flalign*}
		we obtain
		\[ 
		\liminf\limits_{\alpha \rightarrow \infty} \frac{1}{\log \alpha}\tr D_\alpha(g,{\mathcal{K}},\afrak_{0,1}) \geq \frac{1}{2\pi^2} \:U(1;g_\delta) + C_3\: \delta \:.
		\]
Now we can take the limit~$\delta \rightarrow 0$,
		\[ 
		\liminf\limits_{\alpha \rightarrow \infty} \frac{1}{\log \alpha}\tr D_\alpha(g,{\mathcal{K}},\afrak_{0,1}) \geq \frac{1}{2\pi^2} \:U(1;g) \:.
		\]
Combining the inequalities for the~$\limsup$ and~$\liminf$, we conclude that for any~$g\in C^2(\R)$,
\beq \label{firststep}
		\lim\limits_{\alpha \rightarrow \infty} \frac{1}{\log \alpha}\tr D_\alpha(g,{\mathcal{K}},\afrak_{0,1}) = \frac{1}{2\pi^2} \:U(1;g) \:.
\eeq
		
\noindent \underline{Second Step: Proof for~$g$ as in claim.} \\
By choosing a suitable partition of unity and making use of linearity,
it suffices to consider the case~$T=\{z\}$ meaning that~$g$ is non-differentiable only at one point~$z$.
		Next we decompose~$g$ into two parts with a cutoff function~$\xi \in C^\infty_0(\R)$ with the property that
		\begin{flalign*}
			\xi(t)=\begin{cases}
				1, \quad |t|\leq 1/2\\
				0, \quad |t|\geq 1
			\end{cases} \:.
		\end{flalign*}
		and writing
		\[g= g_R^{(1)}+g_R^{(2)}\:,\]
		with
		\begin{flalign*}
		\begin{alignedat}{2}
			g_R^{(1)}(t) &:= g(t) \:\xi\big( (t-z)/R \big)\quad &&\Rightarrow \quad \supp g_R^{(1)}\subseteq [z-R,z+R]\:,\\
			g_R^{(2)}(t) &:= g(t)-g_R^{(1)}(t) \quad &&\Rightarrow \quad \supp g_R^{(2)}\subseteq [-C,C]\:;
			\end{alignedat}
		\end{flalign*}
see also Figure~\ref{fig:g1}.
		\begin{figure}
			\centering
			\includegraphics[width=0.7\textwidth]{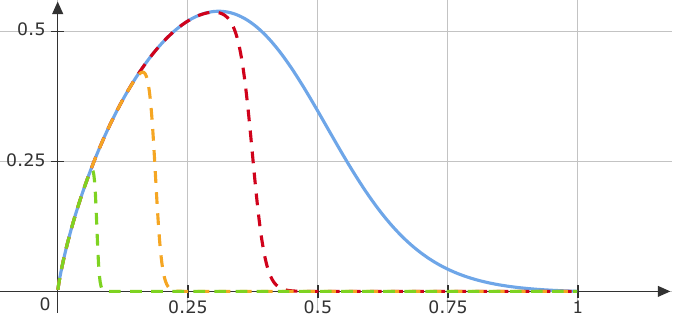}
			\caption{Schematic plot of a function~$g$ (blue) which is non-differentiable at~$z=0$ with the corresponding functions~$g^{(1)}_{1\!/\!2}$ (red), $g^{(1)}_{1/4}$ (orange) and~$g^{(1)}_{1/10}$ (green): They are cutting out the non-differentiable point.}
			\label{fig:g1}
		\end{figure}
		Note that the derivatives of~$g_R^{(1)}$ satisfy the bounds
		\[
		\Big(g_R^{(1)}\Big )^{(k)}(t) = \sum_{n=0}^{k} \:c(n,k) \:g^{(k-n)}(t)\:\xi^{(n)}\big((t-z)/R\big)\:\frac{1}{R^n}\:,
		\]
		(with some numerical constants~$c(n,k)$) and therefore the norm~$\bl . \bl_2$ in Lemma~\ref{TechLemmaNonDiff} can be estimated by
		\begin{flalign*}
			\bl g_R^{(1)}\bl_2 &=  \max_{0 \leq k \leq 2}\; \sup_{t \neq z} \bigg|\sum_{n=0}^{k}c(n,k) \:g^{(k-n)}(t)\:\xi^{(n)}\big((t-z)/R\big)\: \frac{1}{R^n}\bigg|\cdot |t-z|^{-\gamma+k} \\
			&\leq \max_{0 \leq k \leq 2}\; \sup_{t \neq z} \sum_{n=0}^{k} 
\big|c(n,k) \big|\: \big| g^{(k-n)}(t) \big| \:|t-z|^{-\gamma+k-n}\: \big| \xi^{(n)}\big((t-z)/R\big)\big|\:\frac{ |t-z|^n}{R^n}\:.
		\end{flalign*}
Noting that on the support of~$\big(g_R^{(1)}\big)^{(k)}$ we have
		\[ \frac{ |t-z|}{R} \leq 1\:, \]
we conclude that
\begin{align}
	\label{est:blUni}
	 \bl g_R^{(1)}\bl_2  \leq C_4 \bl g \bl_2 
\end{align}
		with~$C_4$ independent of~$R$ (also note that~$\bl g\bl_2$ is bounded by assumption). 
		
		For what follows, it is also useful to keep in mind that
		\[ 0=g_R^{(1)}(0)=g_R^{(2)}(0)\:.  \]	
		Now we apply~\eqref{firststep} to the function~$g_R^{(2)}$ (which clearly is in~$C^{2}(\R)$),
		\[ \lim\limits_{\alpha \rightarrow \infty} \frac{1}{\log \alpha} \tr D_\alpha(g_R^{(2)},{\mathcal{K}},\afrak_{0,1})  =U(1;g_R^{(2)})\:. \]
		Next, we apply Lemma~\ref{TechLemmaNonDiff} to~$g_R^{(1)}$ with~$A$ and~$P$ as before and some~$\sigma \in (0,1)$ with~$\sigma < \gamma$,
\[ \big\| D_\alpha \big( g_R{(1)},{\mathcal{K}},\afrak_{0,1} \big) \big\|_1 \leq  C_\sigma \bl g\bl_2 \:R^{\gamma -\sigma}
\:\big\| \chi_{\mathcal{K}} \:\Opa(\afrak_{0,1})\: (1-\chi_{\mathcal{K}}) \big\|_\sigma^\sigma \:. \]
Applying Lemma~\ref{FinalEstq-Norm} (for~$\alpha$ large enough) with~$q=\sigma$ yields
\[ \big\| D_\alpha \big( g_R{(1)},\mathcal{K},\afrak_{0,1} \big) \big\|_1 \leq C_5 \: \bl g\bl_2 \:R^{\gamma -\sigma} \log \alpha\:. \]
where the constant~$C_5$ is independent of~$R$ and~$\alpha$.
Just as before, it follows that
		\begin{flalign*}
			\limsup\limits_{\alpha \rightarrow \infty} \frac{1}{\log \alpha}\tr D_\alpha(g,{\mathcal{K}},\afrak_{0,1}) &\leq U \big( 1;g_R^{(2)} \big)+ C_5 \: \bl g\bl_2 \:R^{\gamma -\sigma}\\
			\liminf\limits_{\alpha \rightarrow \infty} \frac{1}{\log \alpha}\tr D_\alpha(g,{\mathcal{K}},\afrak_{0,1}) &\geq U\big( 1;g_R^{(2)} \big) + C_5 \: \bl g\bl_2 \:R^{\gamma -\sigma}\:.
		\end{flalign*}
The end result follows just as before by taking the limit~$R\rightarrow 0$, provided that we can show the convergence~$U(1;g_R^{(2)})\rightarrow U(1;g)$ for~$R\rightarrow 0$.
		To this end note that if~$z\notin \{0,1\}$ we have
		\[  |U(1;g_R^{(2)}) - U(1;g)| = |U(1; g_R^{(1)})| \leq C_6 R\, \]
		for some~$C_6 >0$ independent of~$R$ provided that~$R$ is sufficiently small (more precisely,
		so small that~$g_R^{(1)}$ vanishes in neighborhoods around~$0$ and~$1$; note that the integrand is supported in~$[z-R,z+R]$ and bounded uniformly in~$R$). These estimates show that~$\lim_{R \rightarrow 0}
		U(1;g-g_R^{(2)}) = 0$ in the case that~$z$ is neither~$0$ nor~$1$.
		In the remaining cases where~$z$ is either~$0$ or~$1$ we can apply Lemma~\ref{lem:contU(1;g)}, which also yields due to~\eqref{est:blUni},
		\[
			\lim\limits_{R \rightarrow 0} |U(1;g_R^{(2)}) - U(1;g)|  = \lim\limits_{R \rightarrow 0} |U(1; g_R^{(1)})| =  C_4 \bl g\bl_2\lim\limits_{R \rightarrow 0}\frac{R^\gamma}{\gamma (1-R)} = 0 \:.
		\]
This concludes the proof.
	\end{proof}
	
We finally apply Theorem~\ref{MainThmSec2} to the function~$\eta_\kappa$ and
the matrix- valued symbol~$\CA_0$ (see~\eqref{Def eta} and~\eqref{LimOpKernel}).
	
	\begin{Corollary}
	\label{Cor to Main Res Sec 2}
		For any~$\kappa>0$, $\eta_{\kappa}$, $\mathcal{K}$ and~$\CA_0$ as before,
		\[ \lim\limits_{\alpha \rightarrow \infty} \lim\limits_{u_0 \rightarrow \infty} \frac{1}{\log \alpha} \tr D_\alpha(\eta_\kappa,\mathcal{K},\CA_0 ) = \frac{1}{\pi^2}U(1;\eta_\kappa) = \frac{1}{6}\frac{\varkappa +1}{\varkappa}\:.\]
		Moreover, in the case that~$\kappa =1$, we can explicitly compute the coefficient~$U(1;\eta_1)$ to give
		\[\lim\limits_{\alpha \rightarrow \infty} \lim\limits_{u_0 \rightarrow \infty} \frac{1}{\log \alpha} \tr D_\alpha(\eta_1,\mathcal{K},\CA_0 ) = \frac{1}{\pi^2}U(1;\eta_1) = \frac{1}{3}\:. \]
	\end{Corollary}

	\begin{proof}
		As explained in Example~\ref{Ex:CondFRenyi}, the functions~$\eta_\kappa$ satisfy Condition~\ref{cond:f3} with~$n=2$ for any~$\kappa>0$. Moreover, we have~$\eta_\kappa(0)=0$ for any~$\kappa>0$. Therefore we can apply Theorem~\ref{MainThmSec2} and obtain
		\[ \lim\limits_{\alpha \rightarrow \infty} \lim\limits_{u_0 \rightarrow \infty} \frac{1}{\log \alpha} \tr D_\alpha(\eta_\kappa,\mathcal{K},\afrak_{0,1}) = \frac{1}{2 \pi^2}U(1;\eta_\kappa)\:. \]
		Repeating the procedure analogously for~$\afrak_{0,2}$ gives
		\[ \lim\limits_{\alpha \rightarrow \infty} \lim\limits_{u_0 \rightarrow \infty} \frac{1}{\log \alpha} \tr D_\alpha(\eta_\kappa,\mathcal{K},\afrak_{0,2})
		= \frac{1}{2 \pi^2}U(1;\eta_\kappa)\:, \]
		and therefore
		\[ \lim\limits_{\alpha \rightarrow \infty} \lim\limits_{u_0 \rightarrow \infty} \frac{1}{\log \alpha} \tr D_\alpha(\eta_\kappa,\mathcal{K},\CA_0 ) = \frac{1}{\pi^2}U(1;\eta_\kappa)\:. \]
		By \cite[Appendix]{leschke-sobolev-spitzer3},
		evaluating~$U(1;\eta_{\kappa})$ yields
		\[ U(1;\eta_\kappa) =  \int_0^1 \frac{\eta_\kappa(t)}{t(1-t)} \:dt =\frac{\pi^2}{6}\frac{\kappa +1}{\kappa} \:,\]
		and therefore
		\[ \lim\limits_{\varepsilon \searrow 0} \lim\limits_{u_0 \rightarrow \infty} \frac{1}{\ln (1/\varepsilon)} \tr D_{\alpha}(\eta_\kappa,\mathcal{K},\CA_0 ) = \frac{1}{\pi^2}U(1;\eta_\kappa) = \frac{1}{6}\frac{\kappa +1}{\kappa}\:. \]
		This concludes the proof.
	\end{proof}
	
Corollary~\ref{Cor to Main Res Sec 2} already looks quite similar to Theorem~\ref{Main Res.}.
The remaining task is to show equality in~\eqref{Limit Opa(b)}.
To this end, we need to show that all the correction terms drop out in the limits~$u_0 \rightarrow \infty$
	and~$\alpha \rightarrow \infty$.
The next section is devoted to this task.

\section{Estimating  the Error Terms} \label{Sec Estimating  the Error Terms}
In the previous section, we worked with the simplified kernel~\eqref{LimOpKernel}
and computed the corresponding entropy. In this section, we estimate all the errors,
thereby proving the equality in~\eqref{Limit Opa(b)}.
Our procedure is summarized as follows. Using~\eqref{FullOpKernel}, the
regularized projection operator~$(\Pi_-^\varepsilon)_{kn}$ can be written as
	\[
	(\Pi_-^\varepsilon)_{kn}= \Opa \big( 2\:(\CA^{(\varepsilon)}+\mathcal{R}_{0}^{(\varepsilon)}) \big)
	\]
	and~$\CA^{(\varepsilon)}$ as in~\eqref{Def Aeps} and the error term
	\[ \mathcal{R}_{0}^{(\varepsilon)}(u,u',\xi):=\mathcal{R}_{0}(u,u',M\xi/\varepsilon)\:, \]
	We denote the corresponding symbol by
	\[ \big( (\afull)_{ij} \big)_{1\leq i,j \leq 2}:=\Afull^{(\varepsilon)} := 2(\CA^{(\varepsilon)}+\mathcal{R}_{0}^{(\varepsilon)}) \:. \]
In preparation, translate~$\mathcal{K}$ to~$\mathcal{K}_0$ with the help of the unitary operator~$T_{u_0}$ making use of Lemma~\ref{Translation Lemma}.
 Moreover, we use that the operators~$(\Pi_-^\varepsilon)_{kn}$ and~$\Opa(\CA_0)$ are self-adjoint.
 We thus obtain
	\[  D_\alpha(\eta_\kappa, \mathcal{K}, \Afull^{(\varepsilon)}) - D_\alpha(\eta_\kappa, \mathcal{K}, \CA_0) =  D_\alpha(\eta_\kappa, \mathcal{K}_0, T_{u_0}(\Afull^{(\varepsilon)})) - D_\alpha(\eta_\kappa, \mathcal{K}_0, T_{u_0}(\CA_0)) \:, \]
	where~$\CA_0$ is the kernel of the limiting operator from~\eqref{LimOpKernel}.  Note that~$T_{u_0}(\CA_0) = \CA_0$ since the symbol~$\CA_0$ only depends on~$\xi$.
	Now we can estimate
	\begin{flalign}
		&\|D_\alpha(\eta_\kappa, \mathcal{K}_0, T_{u_0}(\Afull^{(\varepsilon)})) - D_\alpha(\eta_\kappa, \mathcal{K}_0, \CA_0) \|_{1} \notag \\
		&\leq \Big\| \eta_\kappa\big(\chi_{\mathcal{K}_0}\: \Opa(T_{u_0}(\Afull^{(\varepsilon)}))\: \chi_{\mathcal{K}_0}\big) -  \eta_\kappa\big(\chi_{\mathcal{K}_0}\: \Opa(\CA_0))\: \chi_{\mathcal{K}_0} \big) \Big \|_1 \label{TermI}
		\tag{I} \\
		&\quad\:+  \Big\| \chi_{\mathcal{K}_0}\: \big( \eta_\kappa\left(\Opa\big(T_{u_0}(\Afull^{(\varepsilon)})\big)\right)- \eta_\kappa\big(\Opa(\CA_0)\big)\big) \:\chi_{\mathcal{K}_0} \Big\|_1 \:. \tag{II} \label{TermII}
	\end{flalign}
	In the following we will estimate the expressions (I) and (II) separately.
	
\subsection{Estimate of the Error Term (I)}
We begin with the following simplified result from \cite{sobolev-functions}, which is related to Lemma~\ref{TechLemmaNonDiff}.
	\begin{Thm}\cite[Condition 2.3 and Theorem 2.4 (simplified to our needs)]{sobolev-functions}
		Let~$f$ satisfy Condition~\ref{cond:f4} with some~$n\geq 2$ and~$\gamma,R>0$.
		 Let~$\SN$ be a $q$-normed ideal of compact operators on~$\mathfrak{H}$ such that there is~$\sigma \in (0,1]$ with~$\sigma < \gamma$ and
		\[ (n-\sigma)^{-1} <q\leq1\:.\]Let~$\CA,\CB$ be two bounded self-adjoint operators on~$\mathfrak{H}$. Suppose that~$|\CA-\CB|^\sigma \in \SN$, then
		\[ \| f(\CA)-f(\CB) \|_\SN \leq C_n \:R^{\gamma-\sigma}\: \bl f\bl_n \:\big\| |\CA-\CB|^\sigma \big\|_\SN  \]
		with a positive constant~$C_n$ independent of~$\CA,\CB,f$ and~$R$.
	\end{Thm}
In order to apply this theorem to the functions~$\eta_\kappa$, as
in the proof of Theorem~\ref{MainThmSec2} we use a partition of unity. Similar as explained in Example~\ref{Ex:CondFRenyi}, we need to choose~$\gamma<1$ for~$\kappa=1$ and~$\gamma \leq \min\{1,\kappa\}$ otherwise. We will later see that with the methods in this work we can only estimate the error terms if~$\kappa >2/3$. Thus we assume that~$2/3 <\gamma <1$ allowing us to treat all these cases simultaneously. This gives rise to the constraint
\[ \sigma \in \Big(\frac{2}{3}\:, \:1\Big)\:. \]
	
Setting~$\CA=\chi_{\mathcal{K}_0} \Opa(T_{u_0}(\Afull^{(\varepsilon)}))\chi_{\mathcal{K}_0}$ and~$\CB=\chi_{\mathcal{K}_0} \Opa(\CA_0)\chi_{\mathcal{K}_0}$ (which are clearly bounded and self-adjoint) we obtain
	\begin{flalign}
		\label{Est I Thm 2.27 applied} 
		\begin{split}
		&\big\|  \eta_\kappa \big( \chi_{\mathcal{K}_0} \:\Opa \big(T_{u_0}(\Afull^{(\varepsilon)}) \big)\:\chi_{\mathcal{K}_0} \big)-\eta_\kappa \big( \chi_{\mathcal{K}_0} \Opa \big(\CA_0 \big)\chi_{\mathcal{K}_0} \big) \big\|_1 \\
		&\leq C\: \big\|\chi_{\mathcal{K}_0} \:\Opa \big(T_{u_0}(\Afull^{(\varepsilon)}) -\CA_0 \big)\: \chi_{\mathcal{K}_0} \big\|_\sigma^\sigma
		\end{split}
	\end{flalign}
	with C independent of~$\CA$ and~$\CB$ (and thus in particular independent of~$u_0$ and~$\alpha$).
	
	For ease of notation, from now on we will denote
	\[  \big(\Delta (\mathfrak{a}_{ij} )^{(\varepsilon)}_{u_0} \big)_{1\leq i,j, \leq 2}  := \Delta \CA_{u_0}^{(\varepsilon)} := T_{u_0}(\Afull^{(\varepsilon)}) - \CA_0 \:. \]
	Note that the symbol of the~$\Opa(.)$ in~\eqref{Est I Thm 2.27 applied} is matrix-valued. But, applying Remark~\ref{Rem:Schatten}~(ii), we obtain
	\[
	\big\| \chi_{\mathcal{K}_0} \Opa(\Delta \CA_{u_0}^{(\varepsilon)}) \chi_{\mathcal{K}_0} \big\|_\sigma^\sigma  \leq 
	\sum_{i,j=1}^2
	\Big\|\chi_{\mathcal{K}_0} \:\Opa \big( \Delta (\mathfrak{a}_{ij})_{u_0}^{(\varepsilon)} \big)\:
	\chi_{\mathcal{K}_0} \Big\|_\sigma^\sigma \:,
	\]
	with scalar-valued symbols~$\Delta (\mathfrak{a}_{ij})_{u_0}^{(\varepsilon)}$. 

	We now proceed by estimating the Schatten norms of the operators
	\[ \chi_{\mathcal{K}_0} \:\Opa\big( \Delta (\mathfrak{a}_{ij})_{u_0}^{(\varepsilon)} \big)\:\chi_{\mathcal{K}_0} \:. \]
	 This will also show that these operators are well-defined and bounded on~$L^2(\mathcal{K}_0,\C)$.
	For the estimates we need the detailed form of the symbols given by
	\begin{flalign}
		(\Da_{11})_{u_0}^{(\varepsilon)}(u,u',\xi) &= e^{M \xi} \chi_{(-m\varepsilon/M,0)}(\xi)\Big(2|f_{0,1}^+|^2\Big(\frac{M \xi}{\varepsilon}\Big)-1\Big)\notag \\
		&\phantom{=} + r_{11}\Big(u+u_0,u'+u_0,\frac{M \xi}{\varepsilon}\Big) \label{err1} \\
		(\Da_{12})_{u_0}^{(\varepsilon)}(u,u',\xi)&= 2e^{-M \xi}e^{2Mi\xi (u+u_0)/(\varepsilon)}\Big[ \overline{\fm}\Big(\frac{-M \xi}{\varepsilon}\Big) \fp \Big(\frac{-M \xi}{\varepsilon}\Big) \chi_{(0,m\varepsilon/M)}(\xi) \notag \\
		&\phantom{=} + t_{12}\Big(\frac{-M \xi}{\varepsilon}\Big)\chi_{(m\varepsilon/M, \infty)}(\xi) \Big]+ r_{12}\Big(u+u_0,u'+u_0,\frac{M \xi}{\varepsilon}\Big)\\
		(\Da_{21})_{u_0}^{(\varepsilon)}(u,u',\xi) &= 2e^{M \xi}e^{2Mi\xi (u+u_0)/\varepsilon}\Big[ \fm\Big(\frac{M \xi}{\varepsilon}\Big)\overline{\fp}\Big(\frac{M \xi}{\varepsilon}\Big) \chi_{(-m\varepsilon/M,0)}(\xi) \notag \\
		& \phantom{=} + t_{21}\Big(\frac{M \xi}{\varepsilon}\Big)\chi_{(-\infty, -m\varepsilon/M)}(\xi) \Big]  + r_{21}\Big(u+u_0,u'+u_0,\frac{M \xi}{\varepsilon}\Big)\\
		(\Da_{22})_{u_0}^{(\varepsilon)}(u,u',\xi) &= -e^{-M \xi}\chi_{(0,m\varepsilon/M)}(\xi)\Big(2|f_{0,1}^-|^2\Big(\frac{-M \xi}{\varepsilon}\Big)-1\Big) \notag \\
		&\phantom{=}+ r_{22}\Big(u+u_0,u'+u_0,\frac{M \xi}{\varepsilon}\Big)\:, \label{err2}
	\end{flalign}
	with~$r_{ij}(u,u',\xi)=(\mathcal{R}_0(u,u',\xi))_{ij}$ for any~$1\leq i,j \leq 2$. 
	Note that these equations only hold as long as~$u$ is smaller than some fixed~$u_2$ which we may always assume as we take the limit~$u_0 \rightarrow - \infty$. 

	One can group the terms in these functions into three classes, each of which will be estimated with different techniques: There are terms which are supported on ``small" intervals~$[-m\varepsilon/M,0]$ or~$[0,m\varepsilon/M]$. There are terms that contain the factor~$e^{2Mi\xi (u+u_0)/\varepsilon}$, which makes them oscillate faster and faster as~$u \leq u_0 \rightarrow -\infty$. And, finally, there are the~$r_{ij}$-terms which decay rapidly in~$u$ and/or~$u'$. Due to the triangle inequality~\eqref{est:triangle}, it will suffice to estimate each of these classes separately.

\subsubsection{Error Terms with Small Support}
We use this method for terms which do not depend on~$u$ and~$u'$ and which in~$\xi$ are supported in a small neighborhood of the origin. More precisely, these terms are of the form
\[ \begin{array}{cl}
\displaystyle e^{M\xi} \chi_{(-m\varepsilon/M,0)}(\xi)\Big(2|f_{0,1}^+|^2(M\xi/\varepsilon)-1\Big)
& \quad\text{in~$\Delta (\mathfrak{a}_{11})_{u_0}^{(\varepsilon)}$} \\[0.8em]
\displaystyle -e^{-M\xi}\chi_{(0,m\varepsilon/M)}(\xi)\Big(2|f_{0,1}^-|^2(-M\xi/\varepsilon)-1\Big) & \quad\text{in~$\Delta (\mathfrak{a}_{22})_{u_0}^{(\varepsilon)}$} \:.
\end{array} \]
Since these operators are translation invariant, we do not need to apply the translation operator~$T_{u_0}$.
This also shows that the error corresponding to these terms can be estimated independent of~$u_0$.
	For the estimate we will apply Proposition~\ref{prp72}.
As an example, consider
	\[ a^{(\varepsilon)}(\xi):= e^{M\xi} \chi_{(-m\varepsilon/M,0)}(\xi)\Big(2|f_{0,1}^+|^2(M\xi/\varepsilon)-1\Big)\:,\]
	and 
	\[ h:=\chi_\mathcal{K}\:, \]
	which are both in~$L^2_{\loc}(\R)$ because~$|f_{0,1}^+|$ is bounded. Moreover, applying Lemma~\ref{lem:Rescaling} we obtain:
	\begin{flalign*}
		\big( \Opa(a^{(\varepsilon)})\big) (u,u')&= \big( \mathrm{Op}_1 (a_{(\varepsilon)}) \big) (u,u') \:,
	\end{flalign*}
	\Magdalena{Wollen wir den~$\omega$ /$\xi$ Unterschied wirklich überall durchziehen? Ich glaube das würde bei Rindler sehr verwirrend werden. F: Ja, arbeite möglichst mit dimensionslosen Größen.}
	\Magdalena{Notation einheitlich, aber in BH mit Dimensionen, in anderen Fällen ohne.}
	for 
	\[ a_{(\varepsilon)}(\xi):= a^{(\varepsilon)}(\xi/\alpha) = e^{ \varepsilon\xi} \:  \chi_{(-m,0)}(\xi)\: \Big(2\big|f_{0,1}^+(\xi)\big|^2-1\Big)\:, \]
	which is again in~$L^2_{\loc}(\R)$ for the same reasons as~$a^{(\varepsilon)}$.
	Now we apply the H\"older-type inequality~\eqref{est:HoelderLike} and Proposition~\ref{prp72} with~$p\in (0,1)$ arbitrary to obtain
	\begin{flalign*}
		\big\| \chi_{\mathcal{K}_0}\: \Opa(a^{(\varepsilon)})\:\chi_{\mathcal{K}_0}  \big\|_p^p &\leq \|\chi_{\mathcal{K}_0}\|_{\infty}^p \:\big\| \chi_{\mathcal{K}_0} \Opa(a^{(\varepsilon)}) \big\|_p^p \\ &\leq
		\big\| \chi_{\mathcal{K}_0} \mathrm{Op}_1 (a_{(\varepsilon)}) \big\|_p^p \leq C \:|\chi_{\mathcal{K}_0}|_p^p\: |a_{(\varepsilon)}|_p	^p\:.
	\end{flalign*}
Next, noting that~$\mathcal{K}_0 = (-\rho,0) \subseteq (-\lceil \rho \rceil, 0)$, it follows that
	\[|\chi_{\mathcal{K}_0}|_p^p \leq\sum_{-\lceil \rho\rceil }^0 1 = \lceil \rho \rceil \:.  \]
Similarly, since~$|a_{(\varepsilon)}(\omega)|$ is bounded by one,
	\begin{flalign*}
		|a_{(\varepsilon)} |_p^p \leq \sum_{-\lceil m\rceil }^0 1 \leq \lceil m\rceil  \:.
	\end{flalign*}
Combining the last two inequalities, we conclude that
\beq \label{bound1}
\| \chi_{\mathcal{K}_0} \: \Opa(a^{(\varepsilon)})\: \chi_{\mathcal{K}_0}  \|_p^p \leq C \lceil \rho \rceil \lceil m\rceil  \:.
\eeq
Completely similar for
	\[ \tilde{a}^{(\varepsilon)}(\xi) := -e^{-M\xi}\chi_{(0,m\varepsilon/M)}(\xi) \: \]
	we obtain
\beq  \label{bound2}
\| \chi_{\mathcal{K}_0}\: \Opa(\tilde{a}^{(\varepsilon)})\: \chi_{\mathcal{K}_0} \|_p^p \leq C \lceil \rho \rceil \lceil m\rceil \:,
\eeq
for any~$p\in(0,1)$ so in particular for~$p=\sigma$.
The estimates~\eqref{bound1} and~\eqref{bound2} show that the error terms with small support
are bounded uniformly in~$u_0$ and~$\alpha$. Therefore, dividing by~$\log \alpha$
and taking the limit~$\alpha \rightarrow \infty$, these error terms drop out.
\subsubsection{Rapidly Oscillating Error Terms}
After translating the symbol by~$u_0$, these error terms are of the form
	\begin{flalign*}
		b^{(\varepsilon)}(u,\xi,\alpha) &= e^{-M\xi} \: e^{2Mi\xi(u+u_0)/\varepsilon}\: g(M\xi/\varepsilon)\:\chi_{(0,\infty)}(\xi) \quad \mathrm{or}\\
		\tilde{b}^{(\varepsilon)}(u,\xi,\alpha) &= e^{M\xi} \:e^{2Mi\xi(u+u_0)/\varepsilon}\:\tilde{g}(M\xi/\varepsilon)\:\chi_{(-\infty,0)}(\xi) \:,
	\end{flalign*}
	for some functions~$g, \tilde{g}$ which are measurable and bounded.
	They appear in~$\Delta (\mathfrak{a}_{12})_{u_0}^{(\varepsilon)}$ and~$\Delta (\mathfrak{a}_{21})_{u_0}^{(\varepsilon)}$. For simplicity, we restrict attention to the symbols of the form~$b^{(\varepsilon)}$, but all estimates work the same for~$\tilde{b}^{(\varepsilon)}$ in the same way.
We will make use of the following results from~\cite[Theorem 4 on page 273 and p.~254, 263, 273]{birman-solomjak}, which adapted and applied to our case of interest can be stated as follows.
	\begin{Thm}
		\label{ThmBuchFelix}
		Let~$l\in \N_0$ and~$K$ be an integral operator on~$L^2(\mathcal{K}_0)$ 
		(again with~$\mathcal{K}_0=(-\rho, 0)$) with kernel~$k$, i.e.\ for any~$\psi \in L^2(\mathcal{K}_0)$:
		\[  (K\psi)(u) = \int_{\mathcal{K}_0} k(u,u')\:\psi(u') \: du' \:.
		\]
		If~$k(.,u') \in W^l_2(\mathcal{K}_0)$  for almost all~$u' \in \mathcal{K}_0$ with
		\[
		\theta_2^2(t):=\int_{\mathcal{K}_0}\big\|k(.,u')\big\|^2_{W^l_2(\mathcal{K}_0)}\: du' < \infty\:,
		\]
		then
		\[ K \in \SN_p \qquad \text{for~$p>p'=(1/2+l)^{-1}$} \]
		and
		\[
		\|K\|_p \leq C_{l,p}\:  \theta_2(t)\:.
		\]
	\end{Thm}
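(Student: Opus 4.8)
The statement is classical (essentially \cite[Theorem~4]{birman-solomjak}), and the plan is to recall its proof by factorizing $T$ through the Sobolev space $W^l_2(\Lambda_0)$. The hypothesis gives $t(\cdot,u')\in W^l_2(\Lambda_0)$ for almost every $u'$ with $\theta_2(t)<\infty$, so by Cauchy--Schwarz the map $u'\mapsto t(\cdot,u')\,\psi(u')$ is Bochner integrable into $W^l_2(\Lambda_0)$ for every $\psi\in L^2(\Lambda_0)$, and hence
\[ (S\psi)(u):=\int_{\Lambda_0} t(u,u')\,\psi(u')\,du' \]
defines a bounded operator $S\colon L^2(\Lambda_0)\to W^l_2(\Lambda_0)$. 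Writing $J\colon W^l_2(\Lambda_0)\hookrightarrow L^2(\Lambda_0)$ for the canonical embedding, we have $T=J\,S$ as operators on $L^2(\Lambda_0)$, so everything reduces to controlling the Schatten norm of this product.

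The first factor is Hilbert--Schmidt with $\|S\|_2=\theta_2(t)$: choosing an orthonormal basis $(\phi_j)_j$ of $L^2(\Lambda_0)$, the vectors $S\phi_j$ are, up to harmless complex conjugation of the $\phi_j$, the $W^l_2$-valued Fourier coefficients of the map $u'\mapsto t(\cdot,u')$, so Parseval's identity in $L^2\big(\Lambda_0;W^l_2(\Lambda_0)\big)$ yields
\[ \|S\|_2^2=\sum_j\|S\phi_j\|_{W^l_2(\Lambda_0)}^2=\int_{\Lambda_0}\big\|t(\cdot,u')\big\|_{W^l_2(\Lambda_0)}^2\,du'=\theta_2(t)^2 . \]
For the second factor, the relevant input is the singular-value asymptotics of the embedding of $W^l_2$ of a bounded interval into $L^2$: for $l\ge1$ the operator $J$ is compact with $s_n(J)\le C_l\,n^{-l}$, while for $l=0$ it is simply the identity. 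This eigenvalue estimate is exactly what underlies \cite{birman-solomjak}, and I would invoke it rather than reprove it; it is also the only genuinely analytic ingredient of the argument, and thus the main obstacle (such as it is).

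To conclude, combine the two factors. Since $J$ is bounded and $S\in\SN_2$, also $T=JS\in\SN_2$, so for $p\ge2$ (which covers $l=0$, where $(1/2+l)^{-1}=2$) one gets immediately $\|T\|_p\le\|T\|_2\le s_1(J)\,\|S\|_2\le C_l\,\theta_2(t)$. For $l\ge1$ and $(1/2+l)^{-1}<p<2$, use the multiplicative singular-value inequality $s_{2n-1}(JS)\le s_n(J)\,s_n(S)$, the decay of $s_n(J)$, and the monotonicity $s_{2n}\le s_{2n-1}$ to get
\[ \|T\|_p^p\le 2\sum_{n\ge1}s_{2n-1}(T)^p\le 2\,C_l^p\sum_{n\ge1}n^{-lp}\,s_n(S)^p , \]
and then apply Hölder's inequality with exponents $2/p$ and $2/(2-p)$:
\[ \sum_{n\ge1}n^{-lp}\,s_n(S)^p\le\Big(\sum_{n\ge1}s_n(S)^2\Big)^{p/2}\Big(\sum_{n\ge1}n^{-\frac{2lp}{2-p}}\Big)^{\frac{2-p}{2}}=\theta_2(t)^p\,C'_{l,p} , \]
the last series converging precisely because $p>(1/2+l)^{-1}$ is equivalent to $\frac{2lp}{2-p}>1$. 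This gives $T\in\SN_p$ and $\|T\|_p\le C_{l,p}\,\theta_2(t)$. The only remaining points are routine: the bookkeeping at the ends of the singular-value sequences in the product inequality and the degenerate case $l=0$.
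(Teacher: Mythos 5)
Your proposal is correct, and it is essentially the canonical proof: the paper itself does not prove this statement but imports it verbatim from Birman--Solomjak (Theorem~4, p.~273), whose underlying argument is exactly your factorization $T=JS$ through $W^l_2(\Lambda_0)$, with $\|S\|_2=\theta_2(t)$ via Parseval in the Bochner space $L^2(\Lambda_0;W^l_2(\Lambda_0))$, the approximation-number decay $s_n(J)\le C_l\,n^{-l}$ of the Sobolev embedding, the product inequality $s_{2n-1}(JS)\le s_n(J)\,s_n(S)$, and H\"older with exponents $2/p$, $2/(2-p)$. The exponent bookkeeping ($2lp/(2-p)>1$ iff $p>(1/2+l)^{-1}$) and the edge cases $p\ge2$ and $l=0$ are all handled correctly; the only point worth a word in a written version is the strong measurability of $u'\mapsto t(\cdot,u')$ as a $W^l_2$-valued map (Pettis, using separability of $W^l_2$), which justifies the Bochner integral.
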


	We want to apply this theorem for~$p\in (0,1)$ arbitrary, thus let~$l\in \N$ arbitrary. Moreover, in view of Lemma~\ref{Simple Integral Rep.}, the integral representation corresponding to~$\chi_{\mathcal{K}_0} \Opa(b^{(\varepsilon)})\chi_{\mathcal{K}_0}$ may be extended to all of~$L^2(\R)$ and we may interchange the~$d\xi$ and~$du'$ integrations. Thus we need to estimate the norm~$\theta_2$ of the kernel of this operator. Thus we consider kernels of the form
	\begin{flalign*}
		k_{u_0}^{(\varepsilon)}(u,u')&:= \frac{M}{2\pi \varepsilon}\int_0^\infty e^{iM\xi (u+u'+u_0)/\varepsilon} \: e^{-M\xi} \: g(M\xi/\varepsilon) \: d\xi \\
		&\phantom{:}= \frac{M}{2\pi}\int_0^\infty e^{i M \tilde{\xi} (u+u'+u_0)}  \:e^{-\varepsilon M \tilde{\xi}} \: g(M \xi) \: d\tilde{\xi}  \:,
	\end{flalign*}
	where we used the change of coordinates~$\tilde{\xi}= \xi/\varepsilon$.
	\Magdalena{Fixiere Notation: $\omega$ für dimensions-Größen, $\xi$ für dimensionslose Größen.}
	\Magdalena{Müssen auch Ortskoordinaten dimensionslos machen!}
Since~$g$ is bounded and the factor~$e^{-\varepsilon M \tilde{\xi}}$ provides exponential decay, these kernels are always differentiable up to arbitrary orders with
	\[  \frac{d^l}{du^l} k_{u_0}^{(\varepsilon)}(u,u')= \frac{1}{2\pi} \int_0^\infty (i M \tilde{\xi})^l \: e^{i\omega (u+u'+u_0)} \: e^{-\varepsilon \omega} \: g(\omega) \: d\omega \:, \qquad \text{for any } l\in \N\:.  \]
	Our goal is to show that the limit~$u_0 \rightarrow - \infty$  of~$\theta_2\big(k_{u_0}^{(\varepsilon)}\big)$ is 
uniformly bounded in~$\varepsilon$. To this end, we first note that
	\begin{flalign*}
		k_{u_0}^{(\varepsilon)}(u,u') = \Four(h_{1,(\varepsilon)}) (-u-u'-u_0)\:, \quad \frac{d^l}{du^l} k_{u_0}^{(\varepsilon)}(u,u') =  \Four(h_{2,(\varepsilon)})(-u-u'-u_0)\:,
	\end{flalign*}
	with
	\begin{flalign*}
		h_{1,(\varepsilon)}(\omega) := \frac{1}{2\pi}\:e^{-\varepsilon \omega}\: g(\omega)\: \chi_{(0,\infty)}(\omega) \:,\quad   				h_{l,(\varepsilon)}(\omega) := \frac{1}{2\pi}\:(i\omega)^l \:e^{-\varepsilon \omega} \:g(\omega)\: \chi_{(0,\infty)}(\omega) \:.
	\end{flalign*}
	Now note that both~$h_{1,(\varepsilon)}$ and~$h_{l,(\varepsilon)}$ are in~$L^1(\R)$, so the Riemann-Lebesgue Lemma (see for example \cite[Theorem 1]{bochner-chandra})
tells us that for any~$\delta >0$ we can find~$R=R(\varepsilon)>0$ such that 
	\[ \big|k_{u_0}^{(\varepsilon)}u,u') \big| , \bigg| \frac{d^l}{du^l} k_{u_0}^{(\varepsilon)}(u,u') \bigg| \leq \delta \qquad \mathrm{for}\; |u+u'+u_0| > R \:.\]
	Keeping in mind that for~$u_0 \leq 0$,
	\[ |u+u'+u_0| \geq |u_0| \qquad \text{for any~$u,u' \in \mathcal{K}_0$}\;,  \]
	this is satisfied for~$u_0 < -R(\varepsilon)$. This yields
	\[ \big\|k_{u_0}^{(\varepsilon)}(.,y) \big\|^2_{W_2^l(\mathcal{K}_0)} \leq l \delta^2\rho\:,  \]
	which in turn leads to
	\[ \theta_2(k_{u_0}^{(\varepsilon)}) \leq \sqrt{l} \delta \rho\:, \]
	and so
	\[ \lim\limits_{u_0\rightarrow \infty}\| \chi_{\mathcal{K}_0}\: \Opa(b^{(\varepsilon)}) \: \chi_{\mathcal{K}_0}  \|_p  \leq C_{1,\sigma}\: \sqrt{l}\rho \delta \:, \]
	for any~$p\in(0,1)$, so in particular this holds for~$p=\sigma$.
	As~$\delta$ is arbitrary, we conclude that the rapidly oscillating error
	terms vanish in the limit~$u_0 \rightarrow - \infty$.
	We note for clarity that, since we take the limit~$u_0 \rightarrow -\infty$ first, we do not need to worry
	about the dependence of the above estimate on~$\varepsilon$.	
	
\subsubsection{Rapidly Decaying Error Terms}
	\label{Fast decaying terms}
	Finally, we consider the rapidly decaying error terms. In order to determine their detailed form,
	we first note that the solutions of the radial ODE have the asymptotics as given in Lemma~\ref{X_at_hor}
	with an error term of the form
	\begin{flalign*}
		R_0(u)=\begin{pmatrix}
			e^{-i\omega u} R_+(u,\omega) \\
			e^{-i\omega u} R_-(u,\omega)
		\end{pmatrix} \:,
	\end{flalign*}
	with 
	\[  R_\pm(u) := f^\pm(u)-f_0^\pm \]
	and~$f^\pm$ as in~\eqref{eqappA1}. Using this asymptotics in the integral representation, the error terms in~\eqref{err1}--\eqref{err2} can (similar as explained in Appendix~\ref{appB}) computed to be
	\begin{align*}
		r_{11}(u,u',\omega) &= \chi_{(-\infty,0)}(\omega) \: e^{\varepsilon \omega} \sum_{a,b=1}^2 t_{ab}(\omega) \\
		&\quad\times\Big[ f_{0,a}^+(\omega)\:\overline{R_{+,b}(u',\omega)} + R_{+,a}(u,\omega)\:\overline{f_{0,b}^+(\omega)} + R_{+,a}(u,\omega)\:\overline{R_{+,b}(u',\omega) } \Big] \\		
		r_{12}(u,u',\omega) &= \chi_{(0,\infty)}(\omega) \: e^{2i\omega u} \: e^{-\varepsilon \omega} \sum_{a,b=1}^2 t_{ab}(-\omega) \\
		&\hspace*{-2cm} \times \Big[ f_{0,a}^+(-\omega)\:\overline{R_{-,b}(u',-\omega)}+ R_{+,a}(u,-\omega)\:\overline{f_{0,b}^-(-\omega)} + R_{+,a}(u,-\omega)\:\overline{R_{-,b}(u',-\omega) } \Big] \\
		r_{21}(u,u',\omega) &= \chi_{(-\infty,0)}(\omega) \:e^{2i\omega u} \:e^{\varepsilon \omega} \sum_{a,b=1}^2 t_{ab}(\omega) \\
		&\quad \times \Big[ f_{0,a}^-(\omega)\:\overline{R_{+,b}(u',\omega)} + R_{-,a}(u,\omega)\:\overline{f_{0,b}^+(\omega)} + R_{-,a}(u,\omega)\:\overline{R_{+,b}(u',\omega) } \Big] \\
		r_{22}(u,u',\omega) &= \chi_{(0,\infty)}(\omega)\:  e^{-\varepsilon \omega} \sum_{a,b=1}^2 t_{ab}(-\omega) \\
		&\hspace*{-2cm} \times \Big[ f_{0,a}^-(-\omega)\:\overline{R_{-,b}(u',-\omega)} 
		+ R_{-,a}(u,-\omega)\:\overline{f_{0,b}^-(-\omega)}  + R_{-,a}(u,-\omega)\:\overline{R_{-,b}(u',-\omega) } \Big]  \:.
	\end{align*}

	In order to estimate these terms, the idea is to apply Theorem~\ref{ThmBuchFelix} (as well as the~$\sigma$-triangle inequality) to each of these terms (with~$u$ and~$u'$ shifted by~$u_0$) and then take the limit~$u_0\rightarrow - \infty$. We will do this for the first few terms explicitly, noting that the other terms
	can be estimated similarly.

Given~$u_2$, we know from Lemma~\ref{X_at_hor} that for all~$u<u_2$,
\beq \label{RRp}
|R_\pm(u,\omega)| \leq c e^{du}\:, \quad |\partial_u R_\pm(u,\omega)|\leq c d e^{du} \:,
\eeq
	with constants~$c,d>0$ that can be chosen independently of~$\omega$ (note that we always normalize the
	solutions by~$|f_0|=1$). Now for any~$a,b \in \{1,2\}$ we consider the symbol
	\[ c^{(\varepsilon)}(u,u',\xi) := \chi_{(-\infty,0)}(\xi) \: e^{M\xi} \: t_{ab}\big(M \xi/\varepsilon \big)\: f_{0,a}^+\big(M\xi/\varepsilon\big)\:\overline{R_{+,b}\big( u',M\xi/\varepsilon \big)}\:\chi_{\mathcal{K}}(u)\:\chi_{\mathcal{K}}(u') \:, \]
	which is contributing to~$r_{11}(u,u',M\xi/\varepsilon)$ (note that we again rescaled here in order to get the correct prefactor~$e^{-i\alpha\xi(u-u')}$).  Translating by~$u_0$ as before gives
	\begin{flalign*}
		&\tilde{c}^{(\varepsilon)}(u,u',\xi)=c^{(\varepsilon)}(u+u_0,u'+u_0,\xi)\\
		&= \chi_{(-\infty,0)}(\xi) \:e^{M \xi} \:t_{ab}\big(M \xi /\varepsilon \big)\: f_{0,a}^+\big(M\xi/\varepsilon \big)
		\:\overline{R_{+,b}\big(u'+u_0,M \xi/\varepsilon \big)}\:\chi_{\mathcal{K}_0}(u)\:\chi_{\mathcal{K}_0}(u') \:.
	\end{flalign*}
By Lemma~\ref{Simple Integral Rep.}, the corresponding integral representation can be extended to all of~$L^2(\R)$, since~$R_+$ is bounded uniformly in~$\omega$ when restricted to the compact interval~$\mathcal{K}_0$ due to Lemma~\ref{X_at_hor}, and the~$e^{M\xi}$-factor provides exponential decay in~$\omega$. Moreover, Lemma~\ref{Simple Integral Rep.} again implies that we may interchange the~$d\xi$ and~$du'$ integrations.

In order to estimate the corresponding error term, we apply Theorem~\ref{ThmBuchFelix} to the kernel
	\[ k_{u_0,\alpha}(u,u'):= \frac{1}{2 \pi} \int_{-\infty}^0 e^{-i\omega(u-u')}\:e^{\varepsilon \omega}\: t_{ab}(\omega) \: f_{0,a}^+(\omega)\: \overline{R_{+,b}(u'+u_0,\omega)}\:d\omega  \]
	(note that we could leave out the~$\chi_{\mathcal{K}_0}$-functions because in Theorem~\ref{ThmBuchFelix} we consider the operator on~$L^2(\mathcal{K}_0)$; moreover, we rescaled back as before).
This kernel is differentiable for similar reasons as before and
	\[ \frac{d}{du} k_{u_0,\alpha}(u,u'):= \frac{1}{2 \pi} \int_{-\infty}^0 (-i\omega)\: e^{-i\omega(u-u')}\:e^{\varepsilon \omega} \:t_{ab}(\omega)\: f_{0,a}^+(\omega)\:\overline{R_{+,b}(u'+u_0,\omega)}\:d\omega  \:.  \]
	Using again the estimates for~$R_\pm$ in~\eqref{RRp} yields for any~$u,u ' \in \mathcal{K}_0$:
	\begin{flalign*}
		|k_{u_0,\alpha}(u,u')|& \leq \frac{c e^{d(u'+u_0)} }{2 \pi} \int_{-\infty}^0 e^{\varepsilon \omega} \: d\omega = \frac{c e^{d(u'+u_0)}}{2\pi \varepsilon}  \:,\\
		\Big| \frac{d}{du} k_{u_0,\alpha}(u,u')\Big| &\leq \frac{c e^{d(u'+u_0)} }{2 \pi}   \int_{-\infty}^0 |\omega|\: e^{\varepsilon \omega} \: d\omega = \frac{c e^{d(u'+u_0)}}{2\pi \varepsilon^2} \:,
	\end{flalign*}
	where we used that~$|t_{ab}|, |f_{0,1\!/\!2}^\pm| \leq 1$.
	Therefore,
	\[ \big\| k_{u_0,\alpha}(.,u') \big\|_{W_2^1(\mathcal{K}_0)}^2 = \rho \Bigg(  \frac{c^2e^{2d(u'+u_0)}}{4\pi^2\varepsilon^2} +  \frac{c^2e^{2d(u'+u_0)}}{4\pi^2\varepsilon^4} \Bigg) \]
	and thus 
	\[
	\theta_2^2(k) \leq  \frac{\rho^2}{\varepsilon^2} \: \big(C_1+ C_2/\varepsilon^2\big) \: e^{2du_0} \:,
	\]
	which makes clear that the corresponding error term vanishes in the limit~$u_0 \rightarrow - \infty$.

We next consider a $u$-dependent contribution to~$r_{11}$ for some~$a,b\in \{1,2\}$
whose kernel of the form
	\[ k_2(u,u'):= \frac{1}{2 \pi}  \int_{-\infty}^0 e^{-i\omega(u-u')}\: e^{\varepsilon \omega}\:  t_{ab}(\omega)\:R_{+,a}(u+u_0,\omega) \: \overline{f_{0,b}^+(\omega)}\: d\omega \:. \]
Differentiating with respect to~$u$ gives
	\begin{flalign*}
		\frac{d}{du}  k_2(u,u'):= \frac{1}{2 \pi}  \int_{-\infty}^0 &e^{-i\omega(u-u')}\:e^{\varepsilon \omega}\: t_{ab}(\omega)\:\overline{f_{0,b}^+(\omega)} \\
		&\Big( \partial_u R_{+,a}(u+u_0,\omega) -i \omega  \: R_{+,a}(u+u_0,\omega)  \Big) \: d\omega \:,
	\end{flalign*}
	so that, similarly as before,
	\begin{flalign*}
		\big|k_2(u,u')\big| &\leq \frac{c e^{d(u+u_0)}}{2\pi \varepsilon}  \:, \\
		\bigg| \frac{d}{du} k_2(u,u') \bigg| &\leq \Big( \frac{c}{2\pi \varepsilon^2} + \frac{c d}{2\pi \varepsilon} \Big) \: e^{d(u+u_0)} \:.
	\end{flalign*}
This gives
	\[  \big\| k_2 (.,u')\big\|_{W_2^1(\mathcal{K}_0)}^2 \leq  \rho \: C(\varepsilon) \: e^{2d u_0} \]
	and thus
	\[ \theta_2(k_2)^2 \leq \rho^2 \: \tilde{C}(\varepsilon) \: e^{2du_0} \]
	with a constant~$C(\varepsilon)$ independent of~$u_0$. This shows that
	the corresponding error term again vanishes as~$u_0 \rightarrow \infty$.

All the other error terms contributing to~$r_{ij}$ can be treated in the same way: The absolute value of  the corresponding kernels (and their first derivatives) can always be estimated by a factor continuous in~$u$  and~$u'$ times a factor exponentially decaying in~$u_0$ like~$e^{du_0}$. This makes it possible to estimate~$\theta_2$ by
a function which decays exponentially as~$u_0 \rightarrow -\infty$.	

\subsection{Estimate of the Error Term (II)}
It remains to estimate the error terms~\eqref{TermII} on page~\pageref{TermII}. First of all note that due to Lemma~\ref{Translation Lemma},
\begin{align*}
	&\Big\| \chi_{\mathcal{K}_0}\: \big( \eta_\kappa\left(\Opa\big(T_{u_0}(\Afull^{(\varepsilon)})\big)\right)- \eta_\kappa\big(\Opa(\CA_0)\big)\big) \:\chi_{\mathcal{K}_0} \Big\|_1 \\
	&= \Big\| \chi_{\mathcal{K}}\: \big( \eta_\kappa\left((\Pi_-^\varepsilon)_{kn}\right)- \eta_\kappa\big(\Opa(\CA_0)\big)\big) \:\chi_{\mathcal{K}} \Big\|_1
\end{align*}
	Luckily, in this case we can directly compute~$\eta_\kappa\big((\Pi_-^\varepsilon)_{kn}\big)$ and~$\eta_\kappa\big(\Opa(\CA_0)\big)$, which simplifies the estimate.
	As explained before in Lemma~\ref{fOpa} we have
	\[  \eta_\kappa\big(\Opa(\CA_0)\big) = \Opa\big(\eta_\kappa(\CA_0)\big) \:. \]	
	Moreover, from Proposition~\ref{Prop g(H)1} and Corollary~\ref{Cor g(H)} we conclude that for any~$\psi \in C^\infty_0(\R, \C^2)$,
\[ \Big( \eta_\kappa\big((\Pi_-^\varepsilon)_{kn} \big) \psi\Big)(u)= \frac{1}{\pi} \int_{-\infty}^0 \: \eta_\kappa(e^{\varepsilon \omega}) \sum_{a,b =1}^{2} t_{ab}(\omega) \: X_a(u,\omega)\: \big\la X_b(.,\omega) \:\big|\: \psi \big\ra \:d\omega \:. \]
So we conclude that we can rewrite
\[
\eta_\kappa\big(\Opa(a_0)\big)-\eta_\kappa\big( (\Pi_-^\varepsilon)_{kn} \big) = \Opa(\Delta \tilde{\CA}^{(\varepsilon)})\:,
\]
where the entries of the symbol~$\tilde{\CA}^{(\varepsilon)} = (\Delta\tilde{\mathfrak{a}}_{i,j}^{(\varepsilon)})_{1\leq i,j\leq 2}$ are given by
	\begin{flalign*}
		\Delta\tilde{\mathfrak{a}}_{1,1}^{(\varepsilon)}(u,u',\xi) &= \eta_\kappa\big(e^{M\xi}\big) \: e^{-M\xi}  \: (\Da_{11})^{(\varepsilon)}_0(\xi) \:,\\
		\Delta\tilde{\mathfrak{a}}_{1,2}^{(\varepsilon)}(u,u',\xi)  &= \eta_\kappa\big(e^{-M\xi}\big)  \: e^{M\xi}  \: \Da_{12})^{(\varepsilon)}_{0}(u,\xi) \:, \\
	(\Delta\tilde{\mathfrak{a}}_{2,1}^{(\varepsilon)}(u,u',\xi)  &= \eta_\kappa\big(e^{M\xi}\big)  \: e^{-M\xi}  \: (\Da_{21})^{(\varepsilon)}_{0}(\xi) \:,\\
		\Delta\tilde{\mathfrak{a}}_{2,2}^{(\varepsilon)}(u,u',\xi)  &= \eta_\kappa\big(e^{-M\xi}\big)  \: e^{M\xi}  \: (\Da_{22})^{(\varepsilon)}_{0}(u,\xi) \:.
	\end{flalign*}
Thus these error terms are almost the same as before, except that the factor~$e^{M\xi}$ has been replaced by~$\eta_\kappa(e^{M\xi})$ etc. and without translation by~$u_0$. Therefore, after applying Lemma~\ref{Translation Lemma} in order to again translate by~$u_0$,
	\begin{align*}
		&\Big\| \chi_{\mathcal{K}}\: \big( \eta_\kappa\left((\Pi_-^\varepsilon)_{kn}\right)- \eta_\kappa\big(\Opa(\CA_0)\big)\big) \:\chi_{\mathcal{K}} \Big\|_1=\Big\| \chi_{\mathcal{K}_0}\: \Opa\big(T_{u_0}(\tilde{\CA}^{(\varepsilon)})\big)\:\chi_{\mathcal{K}_0} \Big\|_1\:,
	\end{align*}
	we can use the same techniques as before if we can show that the functions
	\[ g_1(\omega):=\eta_\kappa(e^{-\varepsilon \omega }) \:  \chi_{(0,\infty)}(\omega)\:, \qquad g_2(\omega):= \omega \:  \eta_\kappa(e^{-\varepsilon \omega }) \:  \chi_{(0,\infty)}(\omega) \:, \]
	are bounded and in~$L^1(\R)$ (and the same for the functions with~$\omega \rightarrow -\omega$, but for simplicity we again omit this case). We start with the case that~$\kappa =1$ and first rewrite these functions in more detail as
	\begin{flalign*}
		g_1(\omega) &= \chi_{(0,\infty)}(\omega)  \: \Big(\varepsilon  \omega  \: e^{-\varepsilon \omega} +\big(e^{-\varepsilon \omega}-1\big)\log\big(1- e^{-\varepsilon \omega}\big)   \Big) \\
		g_2(\omega) &= \chi_{(0,\infty)}(\omega)  \:  \omega  \: \Big(\varepsilon \omega  \:  e^{-\varepsilon \omega} +\big(e^{-\varepsilon \omega}-1\big)\log\big(1- e^{-\varepsilon \omega}\big)   \Big) \:.
	\end{flalign*}
The terms~$e^{-\varepsilon \omega} \omega^k$ with~$k=1,2$ are clearly bounded and in~$L^1(\R)$. Moreover,	\begin{flalign*}
		&\lim\limits_{\omega \searrow 0} \Big( \big(e^{-\varepsilon \omega} -1\big) \: \log\big(1- e^{-\varepsilon \omega}\big) \Big) \\
		&= \lim\limits_{\omega \searrow 0}  \frac{  \log\big(1- e^{-\varepsilon \omega}\big) }{\big(e^{-\varepsilon \omega} -1\big)^{-1}} \overset{l'H}{=}  \lim\limits_{\omega \searrow 0}  \frac{  \varepsilon \: \big(1- e^{-\varepsilon \omega}\big)^{-1}  \: e^{-\varepsilon \omega}  }{ \varepsilon \: \big(e^{-\varepsilon \omega} -1\big)^{-2} \: e^{-\varepsilon \omega} }
		= - \lim\limits_{\omega \searrow 0} \big( 1- e^{-\varepsilon \omega} \big) = 0 \:,
	\end{flalign*}
(where ``$l'H$" denotes the use of L'Hospital's rule) showing that these terms are bounded near~$\omega = 0$. Next,
	\begin{flalign*}
		&\lim\limits_{\omega \rightarrow \infty} \frac{\log\big(1-e^{-\varepsilon \omega}\big)}{e^{-\varepsilon \omega}} \overset{l'H}{=} \lim\limits_{\omega \rightarrow \infty} \frac{  \varepsilon \: \big(1- e^{-\varepsilon \omega}\big)^{-1}  \: e^{-\varepsilon \omega}  }{-\varepsilon \: e^{-\varepsilon \omega}} = \lim\limits_{\omega \rightarrow \infty} \frac{-1}{1-e^{-\varepsilon \omega}} = -1 \:,
	\end{flalign*}
showing that as~$\omega \rightarrow \infty$ those terms decay like~$e^{-\varepsilon \omega}$.
Therefore, these terms are also bounded and in~$L^1(\R)$. Now let~$\kappa \neq 1$, then
\[
	\eta_\kappa(e^{-\varepsilon \omega }) = \log\big( e^{-\kappa \varepsilon \omega } + (1-e^{-\varepsilon \omega })^\kappa\big)\:,
\]
so the functions~$g_1$ and~$g_2$ are bounded. Moreover,~$\eta_\kappa(e^{-\varepsilon \omega })$ decay like~$e^{-\tilde{\kappa}\varepsilon \omega}$ with~$\tilde{\kappa}:=\min\{1,\kappa\}$ for~$\omega \rightarrow \infty$ since
\begin{align*}
	\lim\limits_{\omega \rightarrow \infty}\left| \frac{\eta_\kappa(e^{-\varepsilon \omega }) }{e^{-\tilde{\kappa}\varepsilon \omega}} \right|&\overset{l'H}{=} 
	\frac{\kappa}{1-\kappa} \lim\limits_{\omega \rightarrow \infty} \left| \frac{\frac{e^{-(\kappa -1 )\varepsilon \omega }- (1-e^{-\varepsilon \omega })^{\kappa -1} }{e^{-\kappa \varepsilon \omega } +(1-e^{-\varepsilon \omega })^\kappa} (-\varepsilon)e^{-\varepsilon \omega }}{-\varepsilon e^{-\tilde{\kappa}\varepsilon \omega }}\right| \\
	&\: = \frac{\kappa}{1-\kappa} \lim\limits_{\omega \rightarrow \infty}\left| \frac{\frac{e^{-\kappa\varepsilon \omega }- e^{-\varepsilon \omega } }{e^{-\kappa \varepsilon \omega } +1} }{e^{-\tilde{\kappa}\varepsilon \omega }}\right| = \frac{\kappa}{1-\kappa}
\end{align*}
This shows that~$\kappa \neq 1$, the functions~$g_1$ and~$g_2$ are in~$L^1(\R)$ as well.
	
\section{Proof of the Main Result}
\label{SecProofofMain}
We can now prove our main result.	
\begin{proof}[Proof of Theorem~\ref{Main Res.}]
\label{Proof of Main Res.} Having estimated all the error terms in trace norm and 
knowing that the limiting operator is trace class (see the proof of Theorem~\ref{MainThmSec2}), we
conclude that the operator
\[ \eta_\kappa\big( \chi_{\mathcal{K}} \: (\Pi_-^\varepsilon)_{kn} \: \chi_{\mathcal{K}} \big) - \chi_{\mathcal{K}}\:  \eta_\kappa\big((\Pi_-^\varepsilon)_{kn}\big)\: \chi_{\mathcal{K}} \]
is trace class.
Moreover, we saw that all the error terms vanish after dividing by~$\log \alpha$ and taking
the limits~$u_0\rightarrow - \infty$ and~$\alpha \rightarrow \infty$ (in this order).
We thus obtain by Corollary~\ref{Cor to Main Res Sec 2},
\begin{flalign*}
	&\lim\limits_{\alpha \rightarrow \infty} \lim\limits_{u_0\rightarrow - \infty} \frac{1}{\log \alpha} \Big( \eta_\kappa\big( \chi_{\mathcal{K}} \: (\Pi_-^\varepsilon)_{kn} \: \chi_{\mathcal{K}} \big) - \chi_{\mathcal{K}}\:  \eta_\kappa\big((\Pi_-^\varepsilon)_{kn}\big)\: \chi_{\mathcal{K}} \Big) \\
	&= \lim\limits_{\alpha \rightarrow \infty} \lim\limits_{u_0\rightarrow - \infty}\frac{1}{\log \alpha} \tr D_\alpha(\eta_\kappa, {\mathcal{K}},\CA_0) = \frac{1}{\pi^2}U(1;\eta_\kappa) = \frac{1}{\pi^2} \int_0^1 \frac{\eta_\kappa(t)}{t(1-t)} \:dt \:.
\end{flalign*}
Moreover since by \cite[Appendix]{leschke-sobolev-spitzer3},
\[\int_0^1 \frac{\eta_\varkappa(t)}{t(1-t)} \:dt 
=
\frac{\pi^2}{6}\frac{\varkappa+1}{\varkappa}\:,\]
and recalling that~$\alpha = M / \varepsilon$, we obtain the result.
\end{proof}
	
\section{Conclusions and Outlook} 
\label{Sec Conclusion} 
To summarize this article, we introduced the fermionic entanglement / R{\'e}nyi entropy of a Schwarzschild black hole horizon based on the Dirac propagator as
\beq  S_{\kappa}^{\mathrm{BH}} = \frac{1}{2} \: \sum_{ \mathclap{\substack{ (k,n) \\ \text{occupied}}}} \:\lim\limits_{\varepsilon \searrow 0} \;\lim\limits_{u_0 \rightarrow -\infty}  \frac{1}{\log (M / \varepsilon)} \tr  \Big(\: \eta_\kappa\big( \chi_{\mathcal{K}} (\Pi_-^\varepsilon)_{kn} \chi_{\mathcal{K}} \big) -  \chi_{\mathcal{K}} \eta_\kappa \big((\Pi_-^\varepsilon)_{kn}\big) \chi_{\mathcal{K}} \:\Big)\:. \label{eq:concl1} \eeq
We have shown that we may treat each angular mode separately. This transition enables us to disregard the angular coordinates, which makes the problem essentially one-dimensional in space. Furthermore, in the limiting case we were able to replace the symbol of the corresponding pseudo-differential operator by~$\CA_0$ in~\eqref{LimOpKernel} provided that~$\kappa >\frac{2}{3}$. Since this symbol is diagonal matrix-valued, this reduces the problem to one spin dimension.
Moreover, because~$\CA_0$ is also independent of~$\varepsilon$, the trace with the replaced symbol can be computed explicitly. It turns out to be a numerical constant independent of the considered angular mode.

This leads us to the conclusion that the fermionic entanglement entropy of the horizon is proportional to the number of angular modes occupied at the horizon,
\[ S_{1}^{\mathrm{BH}} = \: \sum_{ \mathclap{\substack{ (k,n) \\ \text{occupied}}}} \: S_{1,kn}^{\mathrm{BH}} = \frac{1}{6} \: \# \big\{(k,n) \: \big|\:\text{angular mode }(k,n) \text{ occupied} \big\} \:, \]
and a similar result holds for the R{\'e}nyi entropies with~$\kappa >\frac{2}{3}$.
This is comparable to the counting of states in string theory~\cite{strominger-vafa}
and loop quantum gravity~\cite{ashtekar-baez}.
Furthermore, assuming that there is a minimal area of order~$\varepsilon^2$, the number of occupied modes at the horizon was given by~$M^2/\varepsilon^2$, which would lead to
\[S_{1}^{\mathrm{BH}} = \frac{1}{6}\:\frac{M^2}{\varepsilon^2} \:. \]
Bringing the factor~$\log(M / \varepsilon)$ in~\eqref{eq:concl1} to the other side, this would mean that, up to lower orders in~$\varepsilon^{-1}$, we would obtain the enhanced area law
\begin{flalign}
\sum_{ \mathclap{\substack{ (k,n) \\ \text{occupied}}}} \:& \;\lim\limits_{u_0 \rightarrow -\infty} \tr  \Big(\: \eta_1\big( \chi_\mathcal{K} (\Pi_-^\varepsilon)_{kn} \chi_\mathcal{K} \big) -  \chi_\mathcal{K} \eta_1 \big((\Pi_-^\varepsilon)_{kn}\big) \chi_\mathcal{K} \:\Big) \notag \\
&= \frac{1}{6}\:\frac{M^2}{\varepsilon^2} \log(M / \varepsilon) + o\big( M^2/\varepsilon^2 \log(M / \varepsilon) \big) \:, \qquad \text{as }\varepsilon
\searrow 0 \:. \label{endsummary}
\end{flalign}

One obvious question is whether this result holds not only in the Schwarzschild geometry,
but for more general and physically realistic black holes. The core of our analysis
makes use only of the asymptotic form of the Dirac wave functions near the event horizon.
This leads to the conjecture that our results should hold for any black hole whose near-horizon
geometry coincides with that of Schwarzschild.
However, it is not clear to us how to prove this result, because our analysis is based on the
integral representation of the Dirac propagator in~\cite{tkerr}, which holds only in the presence of
global spacetime symmetries which allowing for the complete separation of the Dirac equation into
a system of ordinary differential equations. We expect that the integral representation could be derived
in general static and spherically symmetric black hole spacetimes, but this has not been worked out
in the literature. Another generalization which seems possible from the perspective of separation of variables
is the extension from Schwarzschild to the Kerr-Newman and 5D Myers-Perry family of rotating black holes,
where corresponding integral representations have been derived (see~\cite{tkerr,
qiushiwang}).

Another interesting topic for future research would be to analyze
the number of occupied angular momentum modes at the horizon in more detail, in particular for a realistic model where the Schwarzschild black hole
arises in the large time asymptotics of a spherically symmetric star undergoing gravitational collapse.
Keeping into account the dynamical evolution of the regularization
(as described for example by the regularized Hadamard expansion~\cite{reghadamard}),
we do not expect that the resulting effective regularization length~$\varepsilon$ will be the same
for each angular momentum mode. Therefore, the constant in~\eqref{limcount}
might become $n$-dependent, 
\[ \lim\limits_{\varepsilon \searrow 0} \;\lim\limits_{u_0 \rightarrow -\infty}  \frac{1}{\log (M / \varepsilon)} \tr  \Delta_1 \big((\Pi_-^\varepsilon)_{kn}, \mathcal{K}\big) = c(n) \:. \]
In this case, the counting of occupied states in~\eqref{entropy BH} would have to be replaced by a weighted sum,
\[ \sum_{\text{$(k,n)$ occupied}} c(n) \:. \]
This weighted sum can be understood as the effective number of occupied states.
It might be that the weighted sum remains finite even if an infinite number of angular momentum modes~$(k,n)$
is taken into account.

Another interesting extension of our results is to consider more general quasi-free states
in particular {\em{thermal states}}. Although our methods still apply in this situation, it is not clear to us which
results to expect.

Finally, it seems possible to generalize our results technically as follows.
Having mainly the von Neumann entropy function (i.e.\ $\kappa=1$) in mind, we only established estimates for~$R_\pm$ and its first derivative in~$u$. This had the consequence that were not able to estimate the corresponding error terms for R{\'e}nyi entropies with~$\kappa\leq \frac{2}{3}$ with the same methods. However, those methods would also apply for~$\kappa \leq \frac{2}{3}$ if one were able to estimate higher derivatives of~$R_\pm$ suitably well. This is another topic for future research.

\appendix
\section{A Few Technical Results on Pseudo-Differential Operators}
\label{SecPropertiesOpa}
This appendix provides a few technical results on pseudo-differential operators.
By~$\Opa(\CA)$ we always denote an operator as defined in Section~\ref{Sec:TechToolsDefs}.

\begin{Lemma}
	\label{fOpa}
	Let~$\Opa(\CA)$ as defined in Section~\ref{sec:RepPseudo} with~$\mathcal{U}=\R^d$.
	If the symbol~$\CA$ is hermitian matrix-valued, measurable,  independent of~$\bx$ and~$\by$, i.e.\ $\CA(\bx, \by,\bxi)\equiv \CA(\bxi)$ and is uniformly bounded in~$\bxi$ (with respect to the ordinary matrix-sup-norm), then~$\Opa(\CA)$ is well-defined and self-adjoint.
	Moreover, for any Borel function~$f$ defined on the spectrum of~$\Opa(\CA)$, we have
	\[ f(\Opa(\CA)) = \Opa(f(\CA))\:. \]
\end{Lemma}
\begin{proof}
	Note that
	\begin{flalign}
		\label{SpecRepOpa}
		\Opa( \CA ) = \Four\: \CA(./\alpha)\: \Four^{-1} \:,
	\end{flalign}
	where~$\CA(./\alpha)$ denotes the matrix-multiplication operator by~$\CA(./\alpha)$ and~$\Four$ the unitary extension of the Fourier transform on~$L^2(\R^d,C^n)$ (since~\eqref{SpecRepOpa} holds for all~$C^\infty_0(\R^d,\C^n)$-functions and the right hand side defines a bounded operator on~$L^2(\R^d,\C^n)$).
	This also shows, that~$\Opa(\CA)$ is bounded (and therefore well-defined) and self-adjoint.
	Since for any~$\bxi \in \R^d$, the matrix~$\CA(\bxi/\alpha)$ is hermitian matrix valued the spectral theorem for matrices yields a unitary matrix~$U(\bxi)$ such that
	\[
	U(\bxi)\CA(\bxi/\alpha) U(\bxi)^{-1} = \mathrm{diag}(\mathfrak{d}_1(\bxi), \cdot , \mathfrak{d}_n(\bxi))=:\mathcal{D}(\bxi)\:.
	\]
	Using the identification~$L^2(\R^d,\C^n) \cong L^2\big(\{1,\dots, n\}\times \R^d, \C\big)$, the operator~$ U(\cdot)^{-1}\Four^{-1}$ can be interpreted as the unitary transformation form the multiplicative version of the spectral theorem for the operator~$\Opa(\CA)$ and~$\mathcal{D}$ as the corresponding function.
	Thus we have
	\[ f(\Opa(\CA)) = \Four \:U(\cdot) \: f(\mathcal{D}(\cdot/\alpha))\: U(\cdot)^{-1}\: \Four^{-1}  = \Four \:f(\CA(\cdot/\alpha)) \:\Four^{-1} = \Opa(f(\CA))\:.\]
\end{proof}

\begin{Remark}
	\label{ExtScwartzFcts}
	{\em{A similar argument as in the above proof can be used to prove a criterion on when the integral representation of~$\Opa(\CA)$ extends to all Schwartz functions.  
			Let~$\CA$ be a symbol which only depends on~$\bx$. Then, just as in the proof of Lemma~\ref{fOpa} we conclude that 
			\[
			\Opa(\CA)=\Four^{-1} \: \CA(\cdot/\alpha) \: \Four\:.
			\]
			Now take an arbitrary Schwartz function~$\phi \in \mathcal{S}(\R^d,\C^n)$, then~$\Four \phi$ is defined with the usual integral representation. Moreover since~$\Four$ is an automorphism on the Schwartz-space~$\Four \phi \in \mathcal{S}(\R^d,\C^n)$. Furthermore, if the map~$\CA(\cdot/\alpha) \:(\Four \phi)(\cdot)$ is in~$\L^1(\R^d,\C^n)$, the inverse Fourier transform is again given by the usual integral representation meaning that the integral representation of~$\Opa(\CA)$ extends to~$\phi$.
			
			We conclude that if for any Schwartz function~$\psi \in \mathcal{S}(\R^d,\C^n)$, the vector-valued function~$\CA(\cdot/\alpha)\: \psi(\cdot) \in L^1(\R^d,\C^n)$, then the integral representation of~$\Opa(\CA)$ extends to all Schwartz functions.
			
			This is for example the case for any measurable and (in the matrix sup-norm) bounded symbol~$\CA$. \QEDrem}}
\end{Remark}

The next lemma will be needed for consistency reasons when taking the limit~$u_0 \rightarrow -\infty$:
\begin{Lemma}
	\label{Translation Lemma}Let~$\Opa(\CA)$ as in Section~\ref{sec:RepPseudo}, let~$U,V \subset \R^d$ be arbitrary Borel sets and~$\bc \in \R^d$ an arbitrary vector.
	For any~$\bx,\by, \bxi \in \R^d$ we transform a given symbol~$\CA$ by
	\[ T_{\bc}(\CA) (\bx,\by,\bxi):= \CA(\bx+\bc,\by+\bc,\bxi)\:. \]
	Then there is a unitary transformation~$t_c$ on~$L^2(\R^d, \C^n)$ such that
	\begin{align}
		\label{Opa shift}
		t_{\bc} \:\chi_{U+{\bc}} \:\Opa \big(T_{-{\bc}}(\CA) \big)\: \chi_{V+{\bc}}\: t_{\bc}^{-1} = \chi_U \:\Opa(\CA)\: \chi_V \:.
	\end{align}
	Moreover, assuming in addition that~$\Opa(\CA)$ is self-adjoint, we conclude that for any Borel function~$f$,
	\begin{flalign}
		\label{f(Opa) shift}
		f \big( \chi_U\: \Opa(\CA)\: \chi_U \big) = t_{\bc} \,f \big( \chi_{U+\bc} \:\opa(T_{-\bc}(\CA)) \:\chi_{U+\bc}  \big) \,t_{\bc}^{-1} \:.
	\end{flalign}
\end{Lemma}
\begin{proof}
	We will show that the desired unitary operator is given by the translation operator
	\[t_{\bc}: L^2(\R^d) \rightarrow L^2(\R^d),\quad f \mapsto f(.+{\bc})\:,\]
	(which is obviously unitary).
	Note that for any Borel set~$W\subseteq \R^d$
	\[ \chi_{W+{\bc}}  = t_{-{\bc}} \:\chi_W\: t_{{\bc}} \:,\]
	and therefore
	\begin{flalign*}
		\chi_U \:\Opa(\CA)\: \chi_V &=  t_{{\bc}} \:\chi_{U+{\bc}}\: t_{-{\bc}}\: \Opa(\CA) \:t_{{\bc}} \:\chi_{V+{\bc}}t_{-{\bc}}\:.
	\end{flalign*}
	By a change of coordinates we obtain for arbitrary~$\psi \in C^\infty_0(\R^d,\C^n)$
	\begin{flalign*}
		&\big(t_{-{\bc}} \:\Opa(\CA) \:t_{{\bc}}\psi\big)(\bx) \\
		&= \big(\Opa(\CA)\:t_{{\bc}}\psi\big)(\bx-{\bc}) = \frac{\alpha^d}{(2\pi)^d}\int d \bxi \int d\by\:e^{-i\alpha \bxi(\bx- \bc-\by)}\:\CA(\bx-\bc,\by,\bxi)\:\psi(\by+\bc)\\
		&= \frac{\alpha^d}{(2\pi)^d}\int d \bxi \int d\by\:e^{-i\alpha \bxi(\bx-\by)}\: \CA(\bx-\bc,\by-\bc,\bxi,)\:\psi(\by) = \big(\Opa \big(T_{-\bc}(\CA)\big)\psi\big)(\bx)\:.
	\end{flalign*}
	This shows~\eqref{Opa shift}.
	
	For~\eqref{f(Opa) shift} we make use the multiplication operator version of the spectral theorem. This provides a unitary transformation~$\phi$ and a suitable function~$g$ such that
	\[ \chi_U \:\Opa(\CA)\: \chi_U  = \phi^{-1} \:g\: \phi \:.\]
	Combined with the previous discussion this implies
	\[ \chi_{U+\bc} \:\Opa\big(T_{-\bc}(\CA)\big)\: \chi_{U+\bc} = (\phi \:t_{\bc})^{-1} \:g\: (\phi \:t_{\bc})  \:, \]
	which is the multiplication operator representation of~$\chi_{U+{\bc}} \:\opa\big(T_{-{\bc}}(\CA)\big)\: \chi_{U+{\bc}}$,
	because~$\phi\: t_c$ is also a unitary operator. Therefore
	\[ f \big( \chi_{U+{\bc}} \:\Opa(T_{-{\bc}}(\CA))\: \chi_{U+{\bc}} \big)
	= \big( \phi\: t_{\bc} \big)^{-1} \: (f\circ g)\; \big(\phi \:t_{\bc} \big) = t_{\bc}^{-1} \:f \big(\chi_U \:\Opa(\CA)\: \chi_U \big)\: t_{\bc}\:, \]
	concluding the proof.
\end{proof}

\begin{Remark}
	\label{Rem:XiTrans}
	{\em{
			A similar result as Lemma~\ref{Translation Lemma} holds for translations in the $\bxi$-variable: Let~$\bc \in \R^d$ be an arbitrary vector. For any~$\bx, \by,\bxi \in \R^d$ transform a given symbol~$\CA$ by
			\[
			R_{\bc} (\CA) (\bx,\by,\bxi) := \CA(\bx,\by,\bxi + \bc)\:.
			\]	
			Then, for any~$\psi \in C^\infty_0(\R^d,\C^n)$ and~$\bx \in \R^d$ we have
			\begin{align*}
				&\big( \Opa( R_{\bc}(\CA)) \psi \big)(\bx) = \frac{\alpha^d}{(2\pi)^d} \int_{\R^d} d\bxi \int_{\mathcal{U}} d\by\: e^{-i\alpha \bxi (\bx-\by)} \CA(\bx,\by,\bxi + \bc) \psi(\by) \\
				&= \frac{\alpha^d}{(2\pi)^d} \int_{\R^d} d\bxi \int_{\mathcal{U}} d\by\: e^{-i\alpha (\bxi-\bc) (\bx-\by)} \CA(\bx,\by,\bxi + \bc) \psi(\by) \\
				&= \frac{\alpha^d}{(2\pi)^d} e^{-i\alpha \bc \bx } \int_{\R^d} d\bxi \int_{\mathcal{U}} d\by \:e^{-i\alpha \bxi (\bx-\by)} \CA(\bx,\by,\bxi + \bc) e^{i\alpha \bc \by }\psi(\by) \:.
			\end{align*}
			This shows that
			\[
			\Opa( R_{\bc}(\CA)) =e^{i\alpha \bc \cdot }\:\Opa( \CA ) \: e^{-i\alpha \bc \cdot }\:,
			\]
			which implies similar consequences for trace and Schatten-norms since the operator~$\mathcal{M}_{e^{i\alpha \bc \cdot }}$ is unitary. \QEDrem
	}}
\end{Remark}

\begin{Lemma}
	\label{lem:Rescaling}
	Let~$\CA$ be a symbol such that~$\Op_\alpha (\CA)$ is well defined on~$L^2(\R^d,\C^n)$, let~$\mathcal{K} \subset \R$ be some measurable subset and~$\alpha,\beta >0$ some arbitrary constants. Then,
	\begin{itemize}
		\item[(i)] There is a unitary operator~$V_\beta$ on~$L^2(\R^d,\C^n)$ such that 
		\begin{align}
			\label{eq:RescPos}
			V_\beta^{-1} \:\chi_\mathcal{K} \:\Opa(\CA)\: \chi_\mathcal{K} \: V_\beta  = \chi_{\beta \mathcal{K}}\:  \Op_{ \alpha/\beta}(\CA) \:\chi_{\beta \mathcal{K}} \:.
		\end{align}
		We refer to this as {\bf{rescaling in position space}}.
		\item[(ii)] By {\bf{rescaling in momentum space}} we mean the equality
		\begin{align}
			\label{eq:RescMom}
			\Op_\alpha(\CA) = \Op_{\beta \alpha}\big(\CA(\beta\: \cdot)\big)\:.
		\end{align}
	\end{itemize}
\end{Lemma}
\begin{proof}
	First, (ii) simply follows by changing coordinates in the $\bxi$-integral. 
	
	\noindent
	For (i) consider the unitary operator~$V_\beta$, which is for any~$\psi \in L^2(\R^d,\C^n)$ defined by 
	\[\big( V_{\beta} \varphi \big) (\bx) := \beta^{d/2}\: \varphi(\beta \bx)\:, \qquad \text{for any } \bx \in \R\:\:. \]
	Then, for any~$\psi \in L^2(\R^d,\C^n)$ and~$\bx \in \R^d$,
	\[
	\big(V_{\beta}^{-1} \chi_\mathcal{K} V_{\beta}\psi\big)(\bx) = \beta^{d/2}\big(V_{\beta}^{-1} \chi_\mathcal{K} \psi(\beta .) \big)(\bx) =\chi_\mathcal{K}(x/\beta) \psi(\bx) = \big(\chi_{\beta \mathcal{K}}\psi \big)(\bx)
	\]
	and for any~$\psi \in S(\R^d,\C^n)$
	\begin{align*}
		&\big(V_{\beta}^{-1} \Opa(\CA) V_{\beta}\psi \big)(\bx)= 
		\frac{\alpha^d }{2\pi} \int_{-\infty}^\infty d\bxi \int_{-\infty}^\infty dy \: e^{i\alpha \bxi(\bx/\beta-\by)} \CA(\bxi)  \psi(\beta y)\\
		&=
		\frac{(\alpha/\beta)^d}{2\pi} \int_{-\infty}^\infty d\bxi \int_{-\infty}^\infty dy \: e^{i\alpha/\beta \bxi(\bx-\by)} \CA(\bxi)  \psi(\by)
		= (\Op_{\alpha/\beta}(\CA)\psi )(\bx)\:,
	\end{align*}
	where in the second step we applied a change of coordinates in the $\by$-integral.
\end{proof}

\begin{Lemma}
	\label{Simple Integral Rep.}
	Let~$\Opa(\CA)$ as in Section~\ref{sec:RepPseudo}, such that~$\CA$ satisfies 
	\[ \int_{\R^d} d\bxi \:\sqrt{\int_{\mathcal{U}} d\by\: \big\| \CA(\bx,\by,\bxi) \big\|_{n \times n}^2} < \infty \:, \qquad \text{for any } \bx \in \R^d \]
	(where~$\|.\|_{n\times n}$ is the ordinary sup-norm on the $n\times n$-matrices). Then the integral representation of~$\Opa(\CA)$ may be extended to all~$L^2(\R^d,\C^n)$-functions and the~$\by$ and the~$\bxi$ integrations may be interchanged. Thus for any~$\psi \in L^2(\R^d,\C^n)$ and almost any~$\bx \in \R^d$,
	the following equations hold,
	\begin{flalign*}
		\Big(\Opa (\CA)\,\psi \Big)(\bx) =& \Big(\frac{\alpha}{2\pi}\Big)^d \int_{\R^d} d\bxi \int_{\mathcal{U}} d\by \:e^{-i\alpha \bxi \cdot (\bx-\by)} \:\CA(\bx, \by, \bxi) \:\psi(\by) \\
		=& \Big(\frac{\alpha}{2\pi}\Big)^d \int_{\mathcal{U}} d\by \int_{\R^d} d\bxi\: e^{-i\alpha \bxi \cdot (\bx-\by)} \:\CA(\bx, \by, \bxi) \:\psi(\by)\:.
	\end{flalign*}
\end{Lemma}
\begin{proof}
	We first show that, applying the Fubini-Tonelli theorem and H\"older's inequality, the integrations may be interchanged, by estimating
	\begin{flalign*}
		&\int_{\R^d} d\bxi \int_{\mathcal{U}} d\by\: \big|e^{-i\alpha \bxi \cdot (\bx-\by)} \CA(\bx, \by, \bxi) \psi(\by) \big|  \\
		&\leq \int_{\R^d} d\bxi \;\sqrt{\int_{\mathcal{U}}d\by\: \big\|\CA(\bx, \by, \bxi) \psi(\by) \big\|_{n \times n}^2 } \;\big\| \psi\big\|_{L^2(\R^d,\C^n)} < \infty \:.
	\end{flalign*}
	Next, we want to show that we can extend the integral representation to all~$L^2(\R^d,\C^n)$-functions, i.e that the above integral indeed corresponds to~$\big(\Opa(\CA)\psi\big)(\bx)$. To this end let~$(\psi_n)_{n\in \N}$ be a sequence of~$C^\infty_0(\R^d,\C^n)$-functions converging to~$\psi$ with respect to the~$L^2(\R^d,\C^n)$-norm. Then~$\Opa(\CA)\psi$ is by definition given by
	\begin{align}
		\label{eq:convOpPsi}
		\Opa(\CA)\psi = \lim\limits_{n \rightarrow \infty} \Opa(\CA)\psi_n \:,
	\end{align}
	where the convergence is with respect to the~$L^2(\R^d,\C^n)$-norm. However, going over to a subsequence we can assume that this convergence also holds pointwise outside of a null set~$N\subseteq \R^d$.
	Moreover for any~$\bx\in {\mathcal{U}} \setminus N$ and we can compute
	\begin{flalign*}
		&\lim\limits_{n \rightarrow \infty} \bigg| \Big(\frac{\alpha}{2\pi}\Big)^d \int_{\R^d} d\bxi \int_{\mathcal{U}} d\by \: e^{-i\alpha \bxi \cdot (\bx-\by)} \:\CA(\bx, \by, \bxi) \:\psi(\by)- \big(\Opa(\CA)\psi_n\big)(\bx) \bigg| \\
		&= \lim\limits_{n \rightarrow \infty}\bigg| \Big(\frac{\alpha}{2\pi}\Big)^d \int_{\R^d} d\bxi \int_{\mathcal{U}} d\by\:
		e^{-i\alpha \bxi \cdot (\bx-\by)}\: \CA(\bx, \by, \bxi) \;\Delta\psi_n(\by)\bigg| \\
		&\leq \Big(\frac{\alpha}{2\pi}\Big)^d \int_{\R^d}d\bxi \;\sqrt{ \int_{\R^d} d\by \: \big\| \CA(\bx,\by,\bxi) \big\|_{n \times n}^2   }\: \lim\limits_{n \rightarrow \infty}\|\Delta \psi_n\|^2_{L^2(\R^d,\C^n)} =0\:,
	\end{flalign*}
	with~$\Delta \psi_n := \psi - \psi_n$. Combining this estimate with the pointwise convergence~\eqref{eq:convOpPsi} in~${\mathcal{U}}\setminus N$ yields the claim.
\end{proof}
\begin{Remark} {\em{
			We want to apply Lemma~\ref{Simple Integral Rep.} to the operator~$\chi_\mathcal{K} (\Pi_-^\varepsilon)_{kn} \chi_\mathcal{K}$ with~$\mathcal{K} = (u_0-\rho,u_0)$. By rescaling as before, we see that this operator is of the form~$\Opa(\CA^{(\alpha)})$ with
			\begin{flalign*}
				\CA^{(\alpha)}(u,u',\xi)
				&= 2 \:\chi_\mathcal{K}(u)\: \chi_\mathcal{K}(u') \:\chi_{(-\infty,0)}(\xi) \:e^{ M\xi}\\
				& \quad\: \times \sum_{a,b=1}^{2} t_{ab}^{kn(\alpha\xi)}  \begin{pmatrix}
					X_{a,+}^{kn(\alpha\xi)}(u)\overline{X_{b,+}^{kn(\alpha\xi)}(u')} &  X_{a,+}^{kn(\alpha\xi)}(u)\overline{X_{b,-}^{kn(\alpha\xi)}(u')} \\
					X_{a,-}^{kn(\alpha\xi)}(u)\overline{X_{b,+}^{kn(\alpha\xi)}(u')} &  X_{a,-}^{kn(\alpha\xi)}(u)\overline{X_{b,-}^{kn(\alpha\xi)}(u')}
				\end{pmatrix}\:.
			\end{flalign*}
			Note that, a-priori, the integral representation of this operator is well-defined only for~$C^\infty_0(\R)^2$-functions with support in~$\mathcal{K}$. In order
			to extend the integral representation to all~$L^2(\mathcal{K}, \C^2)$-functions it suffices to verify the condition in Lemma~\ref{Simple Integral Rep.}. To this end we note that, due to Lemma~\ref{X_at_hor}, for given~$u_2>u_0$, we have
			\[ | X_{a,i}^{kn(\alpha\xi)}(u)| <  1+ ce^{du} \:, \]
			for any~$a\in \{1,2\}$, $i\in \{+,-\}$, $u<u_2$ and~$\xi \in \R$, where the constants~$c,d>0$ are independent of~$\xi$.
			Also using that the transmission coefficients~$t_{ab}$ are always bounded by~$1/2$, we obtain the estimate
			\[ \| \CA^{(\alpha)}(u,u`,\xi) \|_{n\times n} \leq C \:e^{ M\xi} \: \chi_{(-\infty,0)}(\xi)\: \chi_\mathcal{K}(u)\: \chi_\mathcal{K}(u') \]
			for any~$\alpha >0$ with~$C$ independent of~$u$, $u'$ and~$\xi$.
			Thus for any~$u \in \mathcal{K}$ and~$\alpha >0$ we have
			\begin{flalign*}
				\int d\xi \;\sqrt{\int du'\: \|\CA^{(\alpha)}(u,u',\xi)\|_{n\times n}^2} = C \int_{-\infty}^0 d\xi \;\sqrt{ \int_{u_0-\rho}^{u_0} du' \:e^{ 2M\xi}  } = C \frac{\sqrt{\rho}}{M} < \infty \:.
			\end{flalign*}
			Clearly, the entire expression vanishes for~$u \notin \mathcal{K}$. \\
			This shows that we can indeed apply Lemma~\ref{Simple Integral Rep.} to~$\chi_\mathcal{K} (\Pi_-^\varepsilon)_{kn} \chi_\mathcal{K}$, meaning that the corresponding integral representation can be applied to any $L^2(\mathcal{K},\C^2)$-function, and the $u'$- and $\xi$-integrations may be interchanged. Moreover, due to the characteristic functions in the symbol, we can even extend the integral representation to all 
			functions in~$L^2(\R,\C^2)$. \QEDrem }}
\end{Remark}

\begin{Lemma}
	\label{Opa Mult Rule}
	Let~$\CA(\bx,\by,\bxi)\equiv \CA(\bx,\bxi)$ (i.e.\ $\CA$ is independent of~$\by$) and~$\CB(\bx,\by,\bxi)\equiv \CB(\by,\bxi)$ be symbols such that~$\Opa(\CA)$ and~$\Opa(\CB)$ are well-defined and the following two conditions hold:
	\bitem
	\item[{\rm{(i)}}] The operator~$A$ defined for any~$\psi \in L^2(\R^d, \C^n)$ by
	\begin{flalign*}
		(A \psi) (\bx):= \int_{\R^n} e^{-i\bxi \bx} \:\CA(\bx,\bxi/\alpha) \:\psi(\bxi)\: d \bxi \:,
	\end{flalign*}
	is bounded on~$L^2(\R^d, \C^n)$.
	\item[{\rm{(ii)}}] The operator~$B$ defined for any~$\psi\in C^\infty_0(\R^d, \C^n)$ by
	\begin{flalign*}
		(B \psi) (\bxi):=  \frac{1}{(2\pi)^d}\int_{\R^n}  e^{i \bxi \by} \: \CB(\by,\bxi/\alpha)\: \psi(\by)\: d \by\:,
	\end{flalign*}
	may be continuously extended to~$L^2(\R^d, \C^n)$.
	\eitem
	Then
	\[ \Opa(\CA)\:\Opa(\CB) = \Opa(\CA \CB)\:. \]
\end{Lemma}
\begin{proof}
	We first note that, due to condition~{\rm{(i)}},
	\[ \Opa(\CA) = A\: \Four^{-1}\:, \]
	as both sides define continuous operators on~$L^2(\R^d, \C^n)$ and agree on the~$C^\infty_0(\R^d, \C^n)$-functions (where~$\Four$ is again the continuous extension of the Fourier transform to the Hilbert space~$L^2(\R^d, \C^n)$). Similarly, we conclude that
	\[ \Opa(\CB)= \Four \: B\:.\]
	This yields
	\[ \Opa(\CA)\:\Opa(\CB) = A\:B\:, \]
	and for any~$\psi \in C^\infty_0(\R^d, \C^n)$ we have
	\[ \Opa(\CA)\Opa(\CB)\psi = AB\psi = \int_{\R^d} d\bxi \int_{\R^n} d\bx \:e^{-i\alpha\bxi(\bx-\by)} \CA(\bx,\bxi) \CB(\by,\bxi) \psi(\by)\:. \]
	Note that as~$\Opa(\CA)$  and~$\Opa(\CB)$ are bounded operators, so is~$\Opa(\CA)\: \Opa(\CB)$.
	This concludes the proof by continuous extension and by the definition of~$\Opa(\CA \CB)$.
\end{proof}

\begin{Remark} \label{remark56}  $ $ {\em{ 
			\bitem
			\item [{\rm{(i)}}] In what follows we often apply the previous lemma in the case that
			\[ \CB(\by,\bxi)= \chi_U(\bxi) \]
			for some measurable set~$U\subseteq \R^d$. Then condition~(ii) of Lemma~\ref{Opa Mult Rule} is obviously fulfilled, because for any~$\psi\in C^\infty_0(\R^d, \C^n)$ we have
			\[
			B \psi  = \chi_{\alpha U}\: \Four^{-1} \psi\:,
			\]
			and thus
			\[ \| B \psi \| = \| \chi_{\alpha U}\: \Four^{-1} \psi \| \leq \| \Four^{-1} \psi \| = \|\psi\|\:. \]
			\item[{\rm{(ii)}}] Moreover, in the following, the symbol~$\CA$ is sometimes independent of both~$\bx$ and~$\by$ and bounded by a constant~$C>0$ (with respect to the matrix sup-norm), then for any~$\psi \in L^2(\R^d, \C^n)$ it follows that
			\[
			A \psi  = \CA(./\alpha)\: \Four \psi
			\]
			and moreover
			\[ \| A \psi \| = \|\CA(./\alpha)\: \Four \psi \| \leq C \| \psi \|\:. \]
			Therefore, condition~(i) in Lemma~\ref{Opa Mult Rule} is also fulfilled.
			\item[{\rm{(iii)}}] Another case we will consider later is that~$\CA\equiv a$ is scalar-valued, independent of~$y$ and continuous with compact support
			\[ \supp a \subseteq B_l(v) \times B_r(\mu)\:.\]
			Then from the following argument we conclude that~$A$ also fulfills condition~(i) from Lemma~\ref{Opa Mult Rule}. Take~$\psi \in L^2(\R^d)$ arbitrary and consider
			\begin{flalign*}
				&\int | A\psi(x)|^2\:dx \\
				&= \int_{B_l(v)} dx \int_{B_r(\mu)} d\xi \int_{B_r(\mu)} d\xi'\: e^{-iu(\xi'-\xi)}\: \overline{\psi(\xi)}\: \psi(\xi')\:\overline{a(x,\xi/\alpha)}\:a(x,\xi'/\alpha)
			\end{flalign*}
			Here we may interchange the order of integration due to the Fubini-Tonelli Theorem since
			\begin{flalign*}
				&\int_{B_l(v)} dx \int_{B_r(\mu)} d\xi \int_{B_r(\mu)} d\xi'\: \Big|  \overline{\psi(\xi)}\: \psi(\xi')\:\overline{a(x,\xi/\alpha)}\:a(x,\xi'/\alpha)\Big| \\
				& \leq C^2\: \text{vol}(B_l(v)) \:\|\psi\|_{L^1(B_r(\mu),\C^2)}^2 < \infty\:,
			\end{flalign*}
			where~$C$ is a bound for the absolute value of the continuous and compactly supported function~$a$. Note that~$L^2(B_r(\mu),\C^2) \subseteq L^1(B_r(\mu),\C^2)$ since~$B_r(\mu)$ is bounded.
			We then obtain
			\begin{flalign*}
				&\int | A\psi(u)|^2\:du\\
				&= \int d\xi \:\overline{\psi(\xi)} \int d\xi'\: \psi(\xi') \underbrace{\int dx \:e^{-ix(\xi'-\xi)} \: \overline{a(x,\xi/\alpha)}\:a(x,\xi'/\alpha)}_{=:\tilde{a}(\xi,\xi')}\\
				&\leq \|\psi\| \int |\psi(\xi)|\: \|\tilde{a}(\xi,.)\|\: d\xi
				\leq \| \psi\|^2 \sqrt{ \int d\xi \int d\xi' \:|\tilde{a}(\xi,\xi')|^2}\:,
			\end{flalign*}		
			where the function~$\tilde{a}$ is again continuous and compactly supported, which makes the last integral finite. We remark that in the last line we again applied H\"older's inequality twice.
			
			This estimate shows that condition~(i) from Lemma~\ref{Opa Mult Rule} is again satisfied. \QEDrem
			\eitem
	}}
\end{Remark}

\section{Proof of Lemma~\ref{X_at_hor}} \label{appA}
	\begin{proof}
		We follow the proof of \cite[Lemma 3.1]{tkerr}. As explained there, employing the ansatz
\beq \label{eqappA1}
			X(u) = \begin{pmatrix}
				e^{-i\omega u} f^+(u) \\
				e^{i\omega u} f^-(u)
			\end{pmatrix}\:,
\eeq
		the vector-valued function~$f$ must satisfy the ODE
		\begin{flalign}
			\label{f-Dgl}
			\frac{d}{du}f= \frac{\sqrt{\Delta(r)}}{r^2} \begin{pmatrix}
				0 & e^{2i\omega u} \:(imr-\lambda) \\
				e^{-2i\omega u} \: (-imr-\lambda) & 0
			\end{pmatrix}
			f\:,
		\end{flalign}
		where~$\lambda$ is an eigenvalue of the operator
		\begin{flalign*}
			\mathcal{A}:= \begin{pmatrix}
				0 & \frac{d}{d \vartheta} + \frac{\cot \vartheta}{2}+ \frac{k+1/2}{\sin \vartheta} \\
				-\frac{d}{d \vartheta} - \frac{\cot \vartheta}{2}+ \frac{k+1/2}{\sin \vartheta} & 0
			\end{pmatrix}
		\end{flalign*}
		(see \cite[Appendix~1]{tkerr}) and thus does not depend on~$\omega$ (in contrast with the Kerr-Newman case as explained in \cite{tkerr}). Estimating~\eqref{f-Dgl} gives
		\[ \left|\frac{d}{du} f\right| \leq \left|\frac{\sqrt{r-2M}}{r^{3/2}}\right| (mr+ |\lambda|) |f|=\sqrt{\frac{r-2M}{r}} \Big(m+ \frac{|\lambda|}{r}\Big) |f| \:. \]
		Next, we transform~$r-2M$ to the Regge-Wheeler-coordinate,
		\[ r-2M = 2M\: W\big(e^{u/(2M)-1}/2M\big)\:, \]
		where~$W$ is the inverse log function, i.e.\ the inverse function of~$x \mapsto xe^x$.
		An elementary estimate\footnote{Since the function~$f(x):=xe^x$ is strictly increasing (and differentiable) on~$(0,\infty)$, so is~$W=f^{-1}$ on~$\big(f(0),f(\infty)\big)=(0,\infty)$. So from~$xe^x \geq x$ for any~$x\geq 0$ follows~$x=W(xe^x)\geq W(x)$.} shows that~$W(x)\leq x$ for any~$x\geq 0$ and therefore we can  estimate
		\begin{flalign}
			\label{f' est.}
			\left|\frac{d}{du} f\right| \leq\frac{e^{u/M-1/2}}{\sqrt{2M}} \Big( m+ \frac{|\lambda|}{2M} \Big) |f| \:,
		\end{flalign}
		where we also used that~$r\geq2M$. Setting
		\[ c_1:= (2Me)^{-1/2} \Big(m+  \frac{|\lambda|}{2M}\Big)\:, \quad d:=\frac{1}{M}\:, \]
		we can proceed just as in \cite[Proof of 3.1]{tkerr}:
		
		Without loss of generality we can assume that~$|f|$ is nowhere vanishing\footnote{If~$|f(u)|=0$ for one~$u \in \R$, then due to~\eqref{f-Dgl} any order of derivative of~$f$ vanishes at~$u$ a and therefore, by the Picard-Lindel\"of theorem, the function~$f$ vanishes identically on~$\R$.} and divide~\eqref{f' est.} by~$|f|$ giving
		\begin{flalign*}
			\frac{|d/du f|}{|f|} \leq c_1 e^{du}\:.
		\end{flalign*}
		This yields
		\begin{flalign*}
			\big| \log |f| (u_2) - \log |f| (u) \big| &=  \left| \int^{u_2}_u  \frac{\frac{d}{du}(|f^+|^2+|f^-|^2)}{|f|^2}du' \right| \leq 4 \int^{u_2}_u c_1 \: e^{du'}\: du' \\
			&= \frac{4 c_1}{d} \big(e^{du_2}-e^{du}\big)\:.
		\end{flalign*}
		From this we conclude that
		\begin{flalign*}
			\log |f| (u) \geq \log |f| (u_2) - \frac{4 c_1}{d} \big(e^{du_2}-e^{du}\big) \geq \log |f| (u_2) - \frac{4 c_1}{d} e^{du_2} \\
			\log |f| (u) \leq \log |f| (u_2) + \frac{4 c_1}{d} \big(e^{du_2}-e^{du}\big) \leq \log |f| (u_2) + \frac{4 c_1}{d} e^{du_2} \:,
		\end{flalign*}
		which yields
		\begin{flalign}
			\label{f_bounds}
			|f(u_2)|\: \exp\Big(- \frac{4 c_1}{d} e^{du_2}\Big) \leq |f(u)| \leq |f(u_2)| \: \exp\Big(\frac{4 c_1}{d} e^{du_2}\Big)\:. 
		\end{flalign}
Using this inequality in~\eqref{f' est.}, we obtain
		\begin{flalign}
			\label{f' est. 2}
			\left|\frac{d}{du} f\right| \leq c_1\: |f(u_2)|\: \exp\Big(\frac{4 c_1}{d} e^{du_2}\Big) \: e^{du}\:,
		\end{flalign}
		which shows that~$\frac{df}{du}$ is integrable. Moreover, the function~$f(u)$ converges for~$u\rightarrow -\infty$ to
		\begin{flalign*}
			f_0:= \lim\limits_{u \rightarrow -\infty}f(u) \overset{\eqref{f_bounds}}{\neq} 0\:.
		\end{flalign*}
		Now integrating~\eqref{f' est. 2} from~$-\infty$ to~$u<u_2$, we get
		\begin{flalign}
			\label{f aufint.}
			|f(u)-f_0| \leq \frac{c_1}{d}\:  \big|f(u_2) \big| \: \exp\Big(\frac{4 c_1}{d} e^{du_2}\Big) \: e^{du}\:.
		\end{flalign}
		Finally, in order to get rid of the factor~$|f(u_2)|$, we make use of~\eqref{f_bounds} in the limit~$u\rightarrow -\infty$,
		\begin{flalign}
			\label{f0est}
			|f(u_2)| \leq |f_0|\: \exp\Big(\frac{4 c_1}{d} e^{du_2}\Big) \:.
		\end{flalign}
		Substituting this in~\eqref{f aufint.}, we end up with the desired result
\[ 
			|f(u)-f_0| \leq c  e^{du} \:, \]
		with
		\[ c:= \frac{c_1}{d}\:  |f_0| \: \exp\Big(\frac{8 c_1}{d} e^{du_2}\Big) \]
		and
		\[ g(u):= f(u)-f_0 \:. \]
		Similarly, removing~$|f(u_2)|$ from~\eqref{f' est. 2} using~\eqref{f0est} we obtain
		\begin{flalign*}
			\left| \frac{d}{du}  g \right| \leq dc e^{du} \:,
		\end{flalign*}
		which completes the proof.
	\end{proof}

\section{Computing the Symbol of~$(\Pi_-^\varepsilon)_{kn}$} \label{appB}
In this section, we give a more detailed computation of the symbol of the operator~$(\Pi_-^\varepsilon)_{kn}$ for given~$k$ and~$n$. Recall that~$(\Pi_-^\varepsilon)_{kn}$ is for any function~$\psi \in C_0^\infty(\R,\C^2)$ given by,
\[  \big((\Pi_-^\varepsilon)_{kn}\psi \big)(u)  = \frac{1}{\pi} \int d \omega \int du' e^{-\varepsilon \omega} \sum_{a,b=1}^2 t_{a,b}^\omega X_a(u,\omega) \big\la X_b(.,\omega) \:\big|\: \psi \big\ra\:.  \]
The main task is therefore to determine
\beq \label{XProd Sum} \sum_{a,b=1}^2 t_{a,b}^\omega X_a(u,\omega)  X_b(u',\omega)^\dagger =:(*)\:.  \eeq
To this end first note that the details of the coefficients~$t_{ab}$ in~\eqref{tab} give
\begin{flalign}
	\label{tabEv 1}
	(*)&= \chi_{(-m,0)}(\omega) \:X_1(u,\omega)\:X_1(u',\omega)^\dagger \\
	&+ \chi_{(-\infty,-m)}(\omega)\:\Big[  \: \frac{1}{2} \:X_1(u,\omega)\:X_1(u',\omega)^\dagger + \frac{1}{2}\:X_2(u,\omega)\:X_2(u',\omega)^\dagger  \\
	&+ t_{12}^\omega\: X_1(u,\omega)\:X_2(u',\omega)^\dagger + t_{21}^\omega\: X_2(u,\omega)\:X_1(u',\omega)^\dagger \: \Big]\:. \label{tabEv 3}
\end{flalign}
Moreover, using the asymptotics of the radial solutions given in Lemma~\ref{X_at_hor} the matrix~$X_i(u,\omega)\:X_j(u',\omega)^\dagger$ can for any~$i,j \in \{1,2\}$ be written as
\begin{flalign}
&X_i(u,\omega)\:X_j(u',\omega)^\dagger \notag \\
&= \begin{pmatrix} \label{XProd 1}
	f_{0,i}^+(\omega)\:\overline{f_{0,j}^+(\omega)}\: e^{-i\omega(u-u')} & f_{0,i}^+(\omega)\:\overline{f_{0,j}^-(\omega)} \:e^{-i\omega(u+u')} \\
	f_{0,i}^-(\omega)\overline{f_{0,j}^+(\omega)} \:e^{i\omega(u+u')} & f_{0,i}^-(\omega)\:\overline{f_{0,j}^-(\omega)} \:e^{i\omega(u-u')}
\end{pmatrix} \\
&+R_{0,i}(u,\omega)\begin{pmatrix}\: \label{XProd 2}
f_{0,j}^+(\omega)\: e^{-i\omega u'} \\ f_{0,j}^-(\omega)\: e^{i\omega u'}
\end{pmatrix}^\dagger + \begin{pmatrix}
f_{0,i}^+(\omega)\: e^{-i\omega u} \\ f_{0,i}^-(\omega)\: e^{i\omega u}
\end{pmatrix}  R_{0,j}(u',\omega)^\dagger \\
&+ R_{0,i}(u,\omega)\:R_{0,j}(u',\omega)^\dagger \:. \label{XProd 3}
\end{flalign}
The terms in~\eqref{XProd 2}-\eqref{XProd 3} will result in the error matrix~$\mathcal{R}_0$ and are computed in Section~\ref{Fast decaying terms}. Here we are mainly interested in the terms in~\eqref{XProd 1}.

Combining our choices of~$f_0$ from Section~\ref{rnProp at the Horizon} with~\eqref{XProd 1} and~\eqref{tabEv 1}-\eqref{tabEv 3}, we obtain
\begin{flalign*}
	(*)&= \chi_{(-m,0)}(\omega) \begin{pmatrix} 
			|f_{0,1}^+(\omega)|^2\: e^{-i\omega(u-u')} & f_{0,1}^+(\omega)\:\overline{f_{0,1}^-(\omega)} \:e^{-i\omega(u+u')} \\
			f_{0,1}^-(\omega)\overline{f_{0,1}^+(\omega)} \:e^{i\omega(u+u')} & |f_{0,1}^-(\omega)|^2 \:e^{i\omega(u-u')}
		\end{pmatrix} \\
	&+\chi_{(-\infty,-m)}(\omega) \begin{pmatrix}
		\frac{1}{2} \: e^{-i\omega(u-u')} & t_{12}^\omega \: e^{-i\omega(u+u')} \\
		t_{21}^\omega \: e^{i\omega(u+u')}  & \frac{1}{2} \: e^{i\omega(u-u')} 
	\end{pmatrix}
+ \tilde{\mathcal{R}}_0(u,u',\omega)\:,
\end{flalign*}
where~$\tilde{\mathcal{R}}_0(u,u',\omega)$ consists of the terms~\eqref{XProd 2}-\eqref{XProd 3} inserted in the sum~\eqref{XProd Sum}.

In order to rewrite~$(\Pi_-^\varepsilon)_{kn}$ as a pseudo-differential operator, we need a prefactor of the form~$e^{-i\omega(u-u')}$ before the symbol. The matrix components in~$(*)$ indeed involve such
plane waves. However, the $(2,2)$-components oscillate with the wrong sign. In order to circumvent this issue, we can use the freedom of coordinate change~$\omega \rightarrow - \omega$ in the~$d\omega$ integration of the $(2,2)$- and $(1,2)$-components. This yields~\eqref{Pi kernel}.

\section{Regularity of~$\eta_\kappa$}
We now verify in detail that the functions~$\eta_\kappa$ satisfy Condition~\ref{cond:f3}.

\begin{Lemma}
	\label{Properties eta}
	Consider the functions~$\eta_\kappa$ in~\eqref{Def eta_gamma}. Then for any~$\kappa \neq 1$, $\eta_\kappa$ satisfies Condition~\ref{cond:f3} with~$T=\{0,1\}$ for any~$\gamma \leq \min\{1,\kappa\}$.
	Moreover, the function~$\eta_\kappa$ in~\eqref{Def eta} satisfies Condition~\ref{cond:f3} with~$T=\{0,1\}$ for any~$\gamma=\kappa$.

\end{Lemma}
\begin{proof}
	We start with the case that~$\kappa = 1$. Then,
	\[ \eta_1 \in C^2(\R \setminus \{0,1\})\cap C^0(\R) \:, \]
	as
	\[ \lim\limits_{t \rightarrow 0 }\big( -t\log(t)-(1-t)\log(1-t) \big)  = - \lim\limits_{t \rightarrow 0 }\frac{\log(t)}{t^{-1}} \overset{l'H}{=} \lim\limits_{t \rightarrow 0 }\frac{t^{-1}}{t^{-2}} = 0\:, \]
	(where  ``$l'H$" denotes the use of l'Hospital's rule) and
	\[ \lim\limits_{t \rightarrow 1 }\big( -t\log(t)-(1-t)\log(1-t) \big)  = - \lim\limits_{t \rightarrow 1 }\frac{\log(1-t)}{(1-t)^{-1}} \overset{l'H}{=} - \lim\limits_{t \rightarrow 1 }\frac{(1-t)^{-1}}{(1-t)^{-2}} = 0\:. \]
	Moreover, for any~$t\in (0,1)$ we have
	\begin{flalign*}
		\eta_1'(t)&= -\log(t)+\log(1-t)\:, \\
		\eta_1''(t)&=-\frac{1}{t}-\frac{1}{1-t}\:.
	\end{flalign*}
	Thus, for any~$\gamma<1$
	\begin{flalign*}
		\lim\limits_{t \searrow 0} \eta_1(t)t^{-\gamma} =& -\lim\limits_{t \searrow 0} \frac{\log t}{t^{\gamma-1}} - \lim\limits_{t \searrow 0}\frac{\log (1-t)}{t^\gamma} \overset{l'H}{=} -\lim\limits_{t \searrow 0} \frac{t^{-1}}{(\gamma-1) t^{\gamma-2}}+ \lim\limits_{t \searrow 0} \frac{(1-t)^{-1}}{\gamma t^{\gamma-1}} \\
		=& \lim\limits_{t \searrow 0} \frac{t^{1-\gamma}}{1-\gamma}  + \lim\limits_{t \searrow 0} \frac{t^{1-\gamma}}{\gamma(1-t)}=0 \:,
	\end{flalign*}
	and obviously
	\[\lim\limits_{t \nearrow 0} \eta_1(t)t^{-\gamma} = 0 \:. \]
	Therefore, there exists a neighborhood~$U_{0,0}$ of~$z=0$ and a constant~$C_{0,0}$ such that for any~$t \in U_{0,0}$,
	\[ |\eta_1(t)|\leq C_{0,0} |t|^\gamma\:. \]
	Similarly,
	\begin{flalign*}
		\lim\limits_{t \nearrow 1} \eta_1(t) (t-1)^{-\gamma} &=  \lim\limits_{t \nearrow 1} \frac{\log t}{(1-t)^{\gamma}} + \lim\limits_{t \nearrow 1}\frac{\log(1-t)}{(1-t)^{\gamma-1}}\\
		&\!\overset{l'H}{=}\, \lim\limits_{t \nearrow 1} -\frac{t^{-1}}{\gamma (1-t)^{\gamma-1}} + \lim\limits_{t \nearrow 1} \frac{(1-t)^{-1}}{(\gamma-1)(1-t)^{\gamma-2}} = 0\:, \\
		\lim\limits_{t \searrow 1} \eta_1(t) (t-1)^{-\gamma} &=0\:,
	\end{flalign*}
	yielding a neighborhood~$U_{1,0}$ of~$z=1$ and a constant~$C_{1,0}$ such that for any~$t \in U_{1,0}$:
	\[ |\eta_1(t)| \leq C_{1,0}\: |t-1|^\gamma \:. \]
	
	The other estimates follow  analogously by computing the limits
	\begin{flalign*}
		\lim\limits_{t \searrow 0} \eta_1'(t) t^{1-\gamma} =& - \lim\limits_{t \searrow 0} \frac{\log(t)}{t^{\gamma-1}} \overset{l'H}{=} \lim\limits_{t \searrow 0}\frac{t^{-1}}{(1-\gamma) t^{\gamma-2}} = \lim\limits_{t \searrow 0 } \frac{t^{1-\gamma}}{1-\gamma} =0 \:,\\
		\lim\limits_{t \nearrow 1} \eta_1'(t) (t-1)^{1-\gamma} =& -\lim\limits_{t \nearrow 1} \frac{\log(1-t)}{(1-t)^{\gamma-1}} \overset{l'H}{=} \lim\limits_{t \nearrow 1} \frac{(1-t)^{-1}}{(1-\gamma)(1-t)^{\gamma-2}} \\
		=& \lim\limits_{t \nearrow 1} \frac{(1-t)^{1-\gamma}}{1-\gamma} =0\:, \\
		\lim\limits_{t \searrow 0} \eta_1''(t)t^{2-\gamma} =&- \lim\limits_{t \searrow 0} t^{1-\gamma} = 0 \:,\\
		\lim\limits_{t \nearrow 1} \eta_1''(t)(t-1)^{2-\gamma} =& -\lim\limits_{t \nearrow 1} (1-t)^{1-\gamma} = 0\:, \\
		\lim\limits_{t \nearrow 0} \eta_1'(t)t^{1-\gamma} =& \lim\limits_{t \searrow 1} \eta_1'(t)(t-1)^{1-\gamma} = \lim\limits_{t \nearrow 0} \eta_1''(t)t^{2-\gamma} \\
		=& \lim\limits_{t \searrow 1} \eta_1''(t)(t-1)^{2-\gamma} =0 \:.
	\end{flalign*}
	This concludes the proof for the case that~$\kappa=1$.
	
	Next, consider~$\kappa \neq 1$. It is evident that
	\[ \eta_\kappa \in C^2(\R \setminus \{0,1\})\cap C^0(\R) \:. \]
	Moreover, the derivatives of~$\eta_\kappa$ for~$t\in(0,1)$ are given by
	\begin{align*}
		&\eta_\kappa'(t) = \frac{\kappa}{1-\kappa} \frac{t^{\kappa-1} - (1-t)^{\kappa-1}}{t^\kappa +(1-t)^\kappa} \:,\\
		&\eta_\kappa''(t) = \kappa \frac{t^{\kappa-2}+(1-t)^{\kappa-2}}{t^{\kappa}+(1-t)^{\kappa}} - \frac{\kappa^2}{1-\kappa} \frac{\big(t^{\kappa-1}+(1-t)^{\kappa-1}\big)^2}{t^{\kappa}+(1-t)^{\kappa}}\:.
	\end{align*}
	Thus we conclude that for~$\kappa <1$:
	\begin{align*}
		&\eta_\kappa'(t) \simeq t^{\kappa-1}\:, \qquad \eta_\kappa''(t) \simeq t^{\kappa-2} \:, \qquad \text{for } t\searrow 0\:, \\
		&\eta_\kappa'(t) \simeq (1-t)^{\kappa-1}\:, \qquad \eta_\kappa''(t) \simeq (1-t)^{\kappa-2} \:, \qquad \text{for } t\nearrow 1\:,
	\end{align*}	
	so that we may choose~$\gamma \leq \kappa$. For~$1<\kappa\leq 2$ first note that none of the first derivatives vanishes at~$t=0$ or~$t=1$, so we have
	\begin{align*}
		&\eta_\kappa'(t) \simeq 1 \:, \qquad \eta_\kappa''(t) \simeq t^{\kappa-1} \:, \qquad \text{for } t\searrow 0\:, \\
		&\eta_\kappa'(t) \simeq 1\:, \qquad \eta_\kappa''(t) \simeq (1-t)^{\kappa-1} \:, \qquad \text{for } t\nearrow 1\:,
	\end{align*}	
	and therefore we can only take~$\gamma \leq 1$. Similarly for~$\kappa >2$ we have
	\begin{align*}
		&\eta_\kappa'(t) \simeq 1 \:, \qquad \eta_\kappa''(t) \simeq 1 \:, \qquad \text{for } t\searrow 0\:, \\
		&\eta_\kappa'(t) \simeq 1\:, \qquad \eta_\kappa''(t) \simeq 1\:, \qquad \text{for } t\nearrow 1\:,
	\end{align*}
	so we can only take~$\gamma \leq 1$ as well.
\end{proof}

\Thanks{\em{Acknowledgments:}}
We are grateful to Erik Curiel, Jos{\'e} M.\ Isidro, Claudio F.\ Paganini, Alexander V.\ Sobolev and
Wolfgang Spitzer for helpful discussions.
We would like to thank the referees for valuable feedback.
M.L.\ gratefully acknowledges support by the Studienstiftung des deutschen Volkes and the Marianne-Plehn-Programm.

\providecommand{\bysame}{\leavevmode\hbox to3em{\hrulefill}\thinspace}
\providecommand{\MR}{\relax\ifhmode\unskip\space\fi MR }
\providecommand{\MRhref}[2]{%
  \href{http://www.ams.org/mathscinet-getitem?mr=#1}{#2}
}
\providecommand{\href}[2]{#2}

\end{document}